\newtheorem{theorem}{Theorem}
\newtheorem{lemma}{Lemma}
\newtheorem{proposition}{Proposition}
\newtheorem{corollary}{Corollary}
\theoremstyle{definition}
\newtheorem{definition}{Definition}
\newtheorem{remark}{Remark}
\DeclareMathOperator*{\uwlim}{uw-lim}
\newcommand{\defarrow}{\stackrel{\mathrm{def.}}{\Leftrightarrow}}
\newcommand{\cmplx}{\mathbb{C}}
\newcommand{\realn}{\mathbb{R}}
\newcommand{\natn}{\mathbb{N}}
\newcommand{\natz}{\mathbb{N}_0}
\newcommand{\integer}{\mathbb{Z}}
\newcommand{\Real}{\mathrm{Re}} 
\newcommand{\Imag}{\mathrm{Im}} 
\newcommand{\calL}{\mathcal{L}}
\newcommand{\cH}{\mathcal{H}}
\newcommand{\Hin}{\mathcal{H}_\mathrm{in}  }
\newcommand{\cK}{\mathcal{K}}
\newcommand{\LH}{\mathcal{L} (\mathcal{H}) }
\newcommand{\LHin}{\mathcal{L} (\mathcal{H}_\mathrm{in}  ) }
\newcommand{\LK}{\mathcal{L}  (\mathcal{K})}
\newcommand{\s}{\mathrm{s}}
\newcommand{\cstar}{$C^\ast$}
\newcommand{\A}{\mathcal{A}}
\newcommand{\B}{\mathcal{B}}
\newcommand{\C}{\mathcal{C}}
\newcommand{\M}{\mathcal{M}}
\newcommand{\N}{\mathcal{N}}
\newcommand{\Min}{\mathcal{M}_{\mathrm{in}}}
\newcommand{\tM}{\widetilde{\mathcal{M}}}
\newcommand{\tN}{\widetilde{\mathcal{N}}}
\newcommand{\tL}{\widetilde{\Lambda}}
\newcommand{\tG}{\widetilde{\Gamma}}
\newcommand{\tK}{\widetilde{\mathcal{K}}}
\newcommand{\tpi}{\widetilde{\pi}}
\newcommand{\tV}{\widetilde{V}}
\newcommand{\tE}{\widetilde{\mathcal{E}}}
\newcommand{\vph}{\varphi}
\newcommand{\Ss}{\mathcal{S}_{\sigma}}
\newcommand{\cpchset}[2]{\mathbf{Ch}^{\mathrm{CP}} (#1 \to #2)}
\newcommand{\cpch}[1]{\mathbf{Ch}^{\mathrm{CP}}  (#1)}
\newcommand{\ncpchset}[2]{\mathbf{Ch}^{\mathrm{CP}}_\sigma (#1 \to #2)}
\newcommand{\ncpch}[1]{\mathbf{Ch}^{\mathrm{CP}}_\sigma  (#1)}
\newcommand{\id}{\mathrm{id}}
\newcommand{\unit}{\mathds{1}}
\newcommand{\cocp}{\preccurlyeq_{\mathrm{CP}}}
\newcommand{\eqcp}{\sim_{\mathrm{CP}}}
\newcommand{\norm}[1]{\lVert #1 \rVert}
\newcommand{\phth}{\varphi_\theta}
\newcommand{\psth}{\psi_\theta}
\newcommand{\E}{\mathcal{E}}
\newcommand{\F}{\mathcal{F}}
\newcommand{\seE}{(\M , \varphi_\theta: \theta \in \Theta )}
\newcommand{\seEz}{(\M_0  , \varphi^{(0)}_\theta : \theta \in \Theta)}
\newcommand{\seF}{(\N ,  \psi_\theta :  \theta \in \Theta )}
\newcommand{\phthz}{\varphi_\theta^{(0)}}
\newcommand{\phz}{\varphi^{(0)}}
\newcommand{\thin}{\theta \in \Theta}
\newcommand{\condi}{\mathbb{E}}
\newcommand{\ltwoTh}{\ell^2 (\Theta)}
\newcommand{\LlTh}{\mathcal{L}(\ell^2(\Theta))}
\newcommand{\ltwo}{\ell^2}
\newcommand{\linf}{\ell^\infty}
\newcommand{\AD}{\mathbb{A}(D_{-1})}
\newcommand{\exper}{\mathbf{Exper}}
\newcommand{\CH}{\mathcal{C}\mathcal{H}}
\newcommand{\CHqc}{\mathcal{C}\mathcal{H}^{\mathrm{QC}}}
\newcommand{\ETh}{\mathcal{E}(\Theta)}
\newcommand{\EcTh}{\mathcal{E}^{\mathrm{classical}}(\Theta)}
\newcommand{\Ecl}{\mathcal{E}^{\mathrm{classical}}}
\newcommand{\fth}{\mathbb{F}(\Theta)}
\newcommand{\nonempty}{\neq \varnothing}
\newcommand{\Lshift}{\Lambda^{\mathrm{BDIT}}}
\newcommand{\Lamp}{\Lambda^{\mathrm{amp}}}
\newcommand{\gbarg}{\Gamma^{\mathrm{B}}}
\begin{document}
\preprint{}
\title[]{%
Directed-completeness of quantum statistical experiments in the randomization order%
}
\author{Yui Kuramochi}
\affiliation{School of Physics and Astronomy, Sun Yat-Sen University (Zhuhai Campus),
Zhuhai 519082, China}
\email{yui.tasuke.kuramochi@gmail.com}
\date{}

\begin{abstract}
A parametrized family of normal states on a von Neumann algebra is called a 
statistical experiment, which generalizes the corresponding concepts in
classical statistics and finite-dimensional quantum systems.
We introduce randomization preorder and equivalence relations 
for statistical experiments with a fixed parameter set
and for normal channels with a fixed input space
by post-processing completely positive channels.
In this paper, we prove that the set of equivalence classes of 
statistical experiments or those of normal channels
is an upper and lower directed-complete partially ordered set
with respect to the randomization order,
i.e.\ any increasing or decreasing net of statistical experiments
or channels has its supremum or infimum in the randomization order.
We also show that if the outcome space of each statistical experiment or channel of 
a randomization-monotone net is commutative,
the outcome space of the supremum or infimum
can also be taken to be commutative.
We consider two examples of homogeneous Markov processes of channels
on infinite-dimensional separable Hilbert spaces, namely
block-diagonalization with irrational translation
and ideal quantum linear amplifier channels, 
and explicitly derive their infima.
Throughout the paper, the concept of channel conjugation
is used to obtain results for decreasing channels from 
those for increasing channels.
\end{abstract}
\pacs{03.67.-a, 02.30.Tb, 42.50.-p}
\keywords{quantum statistical experiment, randomization order, directed-completeness, conjugate channel, ideal quantum linear amplifier}
\maketitle


\section{Introduction} \label{sec:intro}
In the area of classical/quantum information or statistics,
we frequently encounter situations in which 
we have a monotonically increasing or decreasing sequence,
or more generally net, 
of information about the system.
Let us give some examples of such situations.
\begin{enumerate}
\item
Suppose that we perform a measurement on a system, which may be 
classical or quantum, 
and obtain a classical outcome $\omega = (\omega_k)_{k=1}^\infty$ 
which takes a value in the countable product 
$\{ 0,1 \}^\natn .$
If $\omega^{(n)} = (\omega_k)_{k=1}^n$ is the first $n$-digits
of $\omega ,$
the information obtained from $\omega^{(n)}$
is increasing with respect to $n$
and upper bounded by that obtained from $\omega .$
We also expect naturally that the information of $\omega^{(n)}$
converges, in some sense, to that of $\omega $
when $n \to \infty .$
\item
As a more general example, consider a system 
described by a von Neumann algebra $\M $
and let $(\M_i)_{i \in I}$ be a net of von Neumann subalgebras of 
$\M$ such that  
$(\M_i)_{i \in I}$ is monotonically increasing in the set inclusion.
The information when we can access $\M_i$
is increasing with respect to $i$
and upper bounded by the information 
when we can access the whole space $\M .$
If we further assume that
$\bigcup_{i \in I} \M_i$ is an ultraweakly ($\sigma$-weakly) dense
$\ast$-subalgebra of $\M ,$
then we can expect that the information obtained from $\M_i$
converges, in some sense, to that obtained from $\M .$
\item
Consider a quantum system corresponding to a Hilbert space $\cH$
that undergoes a continuous and homogeneous quantum Markov process
described by
a quantum dynamical semigroup~\cite{lindblad1976,Gorini1978149}
$(\Lambda_{t \ast})_{t >0} ,$
which is a one-parameter family of completely positive (CP) and trace-preserving
maps defined on the trace-class operators on $\cH$ 
and satisfies the semigroup condition
$\Lambda_{t + s \ast}= \Lambda_{t \ast} \circ \Lambda_{s \ast}$
$(s,t >0) .$
Then the information on the system is decreasing 
with respect to $t .$
If $ \Lambda_{t \ast} (\rho_0) $ converges to a fixed state $\rho_e$
for any initial density operator $\rho_0 ,$
we may expect that the information on the initial state will be completely lost
when $t \to \infty.$
What happens in divergent cases?
Can we still define the \lq\lq{}remaining information\rq\rq{} of the system
even when the system density operator 
$\Lambda_{t \ast} (\rho_0) $
is divergent?
\end{enumerate} 

This paper addresses such monotonically increasing or decreasing 
nets of information in the general von Neumann algebra setting
by identifying each \lq\lq{}information\rq\rq{} of a system
with a randomization equivalence class of operator algebraic statistical 
experiments~\cite{jencovapetz2006,gutajencova2007,kuramochi2017minimal} or normal CP channels.
The main finding is that
any randomization-increasing or decreasing 
net of operator algebraic statistical experiments
(normal channels)
has its supremum or infimum.
Furthermore, the supremum or infimum
is classical if each outcome operator algebra of the net
is commutative.
We remark that decreasing sequences of statistical experiments on
a \emph{fixed} finite-dimensional Hilbert space are considered 
by Matsumoto~\cite{matsumoto2012loss}
and it is shown that such a sequence converges in the 
Le Cam distance topology.
Our approach is different from Ref.~\onlinecite{matsumoto2012loss}
in the point that 
we do not fix outcome von Neumann algebras
and give
order-theoretic characterizations of limits,
namely supremum and infimum.

Let us outline the contents of the paper.
In Sec.~\ref{sec:prel}, we introduce basic notions of
operator algebraic statistical experiments, quantum channels, 
and order theory needed in the main part.
Among them, the notion of the channel conjugation~\cite{kuramochi2018incomp}
and 
the theorem by Iwamura~\cite{iwamura1944} and 
Markowsky~\cite{markowsky1976chain} 
(Theorem~\ref{theo:markow})
play important roles in the proof of the main result:
the former is used to obtain results for decreasing channels
from those for increasing channels,
while the latter enables us to reduce statements about 
general directed sets to 
the case of more specific transfinite sequences.
In Sec.~\ref{sec:smallness}, we establish that
the set of randomization equivalence classes of 
statistical experiments for a given parameter set is well-defined,
which also leads to a similar statement for normal CP channels.
There we slightly generalize the operator algebraic canonical state
introduced in Ref.~\onlinecite{gutajencova2007} to
the case when each normal state of a statistical experiment is 
not faithful and the parameter set is finite.
The well-definedness for infinite parameter sets 
follows from the finite case by
a compactness argument.
In Sec.~\ref{sec:main}, we prove the main result of this paper:
the set of equivalence classes of statistical experiments (normal channels)
is upper and lower directed-complete 
and the set of classical statistical experiments (quantum-classical channels)
is upper and lower Dedekind-closed
(Theorems~\ref{theo:mainch} and \ref{theo:mainex}).
In Sec.~\ref{sec:markov}, we consider two examples of 
homogeneous Markov processes of normal channels 
in infinite-dimensional separable Hilbert spaces and
explicitly derive the infima of these examples.
In Sec.~\ref{sec:final} we give final remarks related to our results.

\section{Preliminaries} \label{sec:prel}
In this section, we introduce mathematical preliminaries and fix the notation.
For general references, 
we refer to Refs.~\onlinecite{takesakivol1,takesakivol2} for operator algebras,
Ref.~\onlinecite{halmos1960naive} for set theory,
and Ref.~\onlinecite{gierz2003continuous}
for order theory.
\subsection{States and channels on operator algebras} 
In this paper, every \cstar-algebra $\A$ is assumed to have the unit element 
$\unit_\A .$
For a Hilbert space $\cH ,$ $\LH$ denotes the set of bounded operators on $\cH$
and $\unit_\cH$ the identity operator on $\cH .$
The set of normal states on a von Neumann algebra $\M$ is denoted by
$\Ss (\M) .$
The support projection of a normal state $\vph \in \Ss (\M)$ on a von Neumann algebra
$\M$ is denoted by $\s (\vph) .$

Let $\A$ and $\B$ be \cstar-algebras.
A CP and unit-preserving 
linear map $\Lambda \colon \A \to \B$
is called a channel (in the Heisenberg picture).
The set of channels from $\A$ to $\B$ is denoted by
$\cpchset{\A}{\B} .$
In ths Schr\"odinger picture, an input state $\phi$ on $\B$
is mapped to the outcome state $\phi \circ \Lambda $ on $\A .$
Thus the co-domain $\B$ and the domain $\A$ of a channel 
$\Lambda \in \cpchset{\A}{\B}$ are called the input and outcome
spaces, or algebras, of $\Lambda ,$ respectively.
A channel $\Lambda \in \cpchset{\A}{\B}$ is called faithful if 
$\Lambda (A) = 0$ implies $A =0$ for any positive $A \in \A .$
If $\Lambda$ is a representation
(i.e.\ a unit-preserving $\ast$-homomorphism),
$\Lambda$ is faithful if and only if
$\Lambda$ is injective.
A channel $\Lambda \colon \M \to \N$
between von Neumann algebras $\M$ and $\N$ is called normal
if $\Lambda$ is continuous in the ultraweak topologies of $\M$ and $\N ,$
respectively. 
For von Neumann algebras 
$\M$ and $\N ,$
the set of normal channels between them is denoted by $\ncpchset{\M}{\N} .$
A normal channel $\Lambda \in \ncpchset{\M}{\N}$ is called a quantum-classical (QC)
channel if the outcome space $\M$ of $\Lambda$ is commutative.
We also write 
$\ncpch{\M} := \ncpchset{\M}{\M} .$
The identity channel on a \cstar-algebra $\A$ is written as $\id_\A .$

Let $\A$ be a \cstar-algebra, let $\Hin$ be a Hilbert space,
and let $\Lambda \in \cpchset{\A}{\LHin}$ be a channel.
A triple $(\cK , \pi , V)$ is called a Stinespring representation of $\Lambda$
if $\cK$ is a Hilbert space, 
$\pi \colon \A \to \LK$ is a representation,
and $V \colon \Hin \to \cK$ is an isometry 
such that
\[
	\Lambda (A) = V^\ast \pi (A) V
	\quad
	(A \in \A) .
\]
A Stinespring representation 
$(\cK , \pi ,V)$
of $\Lambda$ is minimal if the linear span
of $\pi (\A) V \Hin$ is norm dense in $\cK .$
Any channel has a minimal Stinespring representation unique up to
unitary equivalence (Stinespring's dilation theorem~\cite{1955stinespring}).
If $\Lambda \in \ncpchset{\M}{\LHin}$ is a normal channel 
with a minimal Stinespring representation 
$(\cK , \pi , V) ,$
then $\pi$ is a normal representation. 
We remark that, if we do not require the minimality,
Stinespring representation $(\tK , \tpi , \tV)$ of 
a channel $\Lambda \in \cpchset{\A}{\LHin}$ can always be taken 
so that $\tpi$ is faithful.
Such $(\tK , \tpi , \tV)$ is constructed as follows.
We take a minimal Stinespring representation $(\cK ,\pi , V)$ of $\Lambda$
and a faithful representation $\pi_1 \colon \A \to \calL (\cK_1 )$ 
of $\A$ (e.g.\ the universal representation of $\A$).
We define 
\[
	\tK := \cK \oplus \cK_1 ,
	\quad
	\tpi := \pi \oplus \pi_1,
	\quad
	\tV \colon \cH \ni x \mapsto
	Vx \oplus 0 \in \cK \oplus \cK_1 .
\]
Then $(\tK , \tpi , \tV)$ is a Stinespring representation of $\Lambda$
and $\tpi$ is faithful.
If $\Lambda$ is normal, the representations 
$\pi_1$ and $\tpi$ can also be taken to be normal.

The following notion of the inductive limit of \cstar-algebras~\cite{takeda1955}
will be used in the construction of a supremum of increasing normal channels
(Lemma~\ref{lemm:inch}).
Let $I$ be a directed set and let $(\A_i)_{i \in I}$ be a net of \cstar-algebras.
Suppose that for each $i,j \in I$ with $i \leq j$ 
there exists a faithful representation 
$\pi_{j \gets i} \colon \A_i \to \A_j $
satisfying the following consistency condition
\[
	\pi_{k \gets j} \circ \pi_{j \gets i}
	=
	\pi_{k \gets i}
	\quad
	\text{if $i \leq j \leq k .$}
\]
Note that, if we put $i=j$ in the above condition, we have 
$\pi_{k \gets i } \circ \pi_{i \gets i} = 
\pi_{k \gets i} , $ which implies
$\pi_{i \gets i} = \id_{\A_i}$ by the faithfulness of $\pi_{k \gets i} .$
Then there exist a \cstar-algebra $\widetilde{\A} $
and a net of faithful representations
$\pi_i \colon \A_i \to \widetilde{\A} $
$(i \in I)$
such that
$\pi_{i} = \pi_{j} \circ \pi_{j \gets i}$
$(i \leq j )$ and
$\widetilde{\A}_0 := \bigcup_{i \in I} \pi_i (\A_i)$
is a norm dense $\ast$-subalgebra of $\widetilde{\A} .$
The algebras $\widetilde{\A}_0$ and $\widetilde{\A}$
are called the algebraic and the \cstar-inductive limits of 
$(\A_i , \pi_{j \gets i})_{i \leq j} ,$
respectively.
The representations $(\pi_i)_{i \in I}$ are called the principal representations.

Throughout this paper, variants of the following discussion
will be frequently used.
Let $\A$ be a \cstar-algebra, let $\M$ be a von Neumann algebra,
and let $(\Lambda_{i})_{i \in I}$ be a net of channels in
$\cpchset{\A}{\M} .$
Since the closed ball 
\[
	(\M)_r
	:=
	\set{ A \in \M | \| A \| \leq r  }
\]
is ultraweakly compact for each $r \geq 0 ,$ 
by applying Tychonoff's theorem to $\prod_{A \in \A} (\M)_{\|  A \|} ,$
there exists a subnet $(\Lambda_{i(j)})_{j \in J}$ such that
$(\Lambda_{i(j)} (A) )_{j \in J}$
is ultraweakly convergent for each $A .$
If we define $\Gamma \colon \A \to \M$ by
$\Gamma (A) := \uwlim_{j \in J} \Lambda_{i(j)} (A) ,$
where $\uwlim$ denotes the ultraweak limit,
we can easily show that $\Gamma $ is a CP channel.

\subsection{Randomization relations for statistical experiments and normal channels}
For a von Neumann algebra $\M$ and a set $\Theta \neq \varnothing ,$
$\E = \seE$ is called a statistical 
experiment~\cite{jencovapetz2006,gutajencova2007,kuramochi2017minimal}
if $(\phth)_{\thin}$
is a parametrized family of normal states on $\M .$
The von Neumann algebra $\M$ and the set $\Theta$ 
are called the outcome (or sample) space 
and the parameter set of $\E ,$
respectively.
For each set $\Theta \neq \varnothing, $
the class of statistical experiments with the parameter set $\Theta$
is written as $\exper (\Theta ) .$
Since the class of von Neumann algebras is a proper class, 
so is $\exper (\Theta)  .$

We define the \emph{randomization} (or coarse-graining) preorder and equivalence
relations for statistical 
experiments as follows.~\cite{gutajencova2007,kuramochi2017minimal}
\begin{definition}
\label{defi:randse}
Let $\Theta \neq \varnothing$ be a set 
and let $\E = \seE$ and $\F = \seF$
be statistical experiments with the same parameter set $\Theta .$
\begin{itemize}
\item
$\E \cocp \F$ ($\E$ is a randomization of $\F$)
$: \defarrow$
there exists a channel $\alpha \in \cpchset{\M}{\N}$
such that $\phth = \psth \circ \alpha$
for all $\thin .$
\item
$\E \eqcp \F$ ($\E$ is randomization-equivalent to $\F$)
$: \defarrow$
$\E \cocp \F$ and $\F \cocp \E .$
\end{itemize}
\end{definition}
In the above definition, $\E \cocp \F$
if and only if there exists a \emph{normal} channel
$\alpha \in \ncpchset{\M}{\N}$
such that $\phth = \psth \circ \alpha$
for all $\thin $
(Ref.~\onlinecite{gutajencova2007}, the proof of Lemma~3.12).

For a statistical experiment $\E = \seE  $
and a normal channel
$\alpha \in \ncpchset{\N}{\M} ,$
we define the randomized statistical experiment by
$\alpha_\ast (\E) := (\N , \phth \circ \alpha : \thin) .$
We have $\alpha_\ast (\E) \cocp \E$ by definition.

A statistical experiment $\E = \seE$ is called faithful if 
$\phth (A) = 0$ for all $\thin$ implies $A =0$ for any positive $A \in \M .$
$\E$ is faithful if and only if 
$\bigvee_{\thin} \s (\phth) = \unit_\M .$
Any statistical experiment $\E = \seE$ is randomization-equivalent to 
the faithful statistical experiment
$(\M_P , \phth\rvert_{\M_P} : \thin) ,$
where $P : = \bigvee_{\thin} \s (\phth)$
and for each projection $Q \in \M ,$
$\M_{Q} := Q \M Q .$

We can similarly define the randomization relations 
for channels as follows.
\begin{definition}
\label{defi:randch}
Let $\A , \B ,$ and $\C$ be \cstar-algebras
and let $\Lambda \in \cpchset{\A}{\C}$
and $\Gamma \in \cpchset{\B}{\C}$
be channels with the same input space $\C .$
\begin{itemize}
\item
$\Lambda \cocp \Gamma$
($\Lambda$ is a randomization of $\Gamma$)
$:\defarrow$
there exists a channel $\alpha \in \cpchset{\A}{\B}$ such that
$\Lambda = \Gamma \circ \alpha .$
\item
$\Lambda \eqcp \Gamma$
($\Lambda$ is randomization-equivalent to $\Gamma$)
$:\defarrow$
$\Lambda \cocp \Gamma $ and $\Gamma \cocp \Lambda .$
\end{itemize}
\end{definition}
In Definition~\ref{defi:randch},
if $\A , \B ,$ and $\C$ are von Neumann algebras and $\Lambda $ and $\Gamma$
are normal,
then $\Lambda \cocp \Gamma$ if and only if
there exists a normal channel $\alpha \in \ncpchset{\A}{\B}$
such that $\Lambda = \Gamma \circ \alpha .$ 
In Definitions~\ref{defi:randse} and \ref{defi:randch},
$\cocp$ are binary preorders and  
$\eqcp$ are equivalence relations.

The randomization relations for statistical experiments can be characterized by 
those for normal channels as follows.
For a set $\Theta \neq \varnothing$ we denote by $\ltwoTh$ the Hilbert space of 
square summable complex-valued functions on $\Theta .$
We define a normal channel $\condi_\Theta \in \ncpch{\LlTh}$
by 
\[
	\condi_\Theta 
	(A)
	:=
	\sum_{\thin}
	\braket{\delta_\theta |  A \delta_\theta   }
	\ket{\delta_\theta}
	\bra{\delta_\theta}
	\quad
	(A \in \LlTh) ,
\]
where 
\[
	\delta_\theta (\theta^\prime)
	:=
	\begin{cases}
	1 & \text{if $\theta = \theta^\prime , $} \\
	0 & \text{otherwise,}
	\end{cases}
\]
$\braket{f | g} := \sum_{\thin} \overline{f(\theta)}g(\theta ) $
$(f, g \in \ltwoTh)$ is the inner product,
and $\ket{f}\bra{g} \in \LlTh$ $(f,g \in \ltwo (\Theta))$
is given by
$(\ket{f}\bra{g}) h := \braket{g|h} f$
$(h \in \ltwo (\Theta)) .$
For a statistical experiment $\E = \seE $ we define a normal channel
$\Lambda_\E \in \ncpchset{\M}{\LlTh}$ by
\[
	\Lambda_\E (A)
	:=
	\sum_{\thin}
	\phth (A) 
	\ket{\delta_\theta}
	\bra{\delta_\theta}
	\quad
	(A \in \M) .
\]
For statistical experiments $\E$ and $\F$
with the same parameter set $\Theta ,$
$\E \cocp \F$
(respectively, $\E \eqcp \F$ or $\E = \F$)
if and only if
$\Lambda_\E \cocp \Lambda_\F$
(respectively, $\Lambda_\E \eqcp \Lambda_\F$ or $\Lambda_\E = \Lambda_\F$).
\begin{lemma}
\label{lemm:ech}
Let $\Theta \neq \varnothing$ be a set
and let $\Lambda \in \ncpchset{\M}{\LlTh}$
be a normal channel.
Then $\Lambda = \Lambda_{\E}$ for some statistical experiment
$\E \in \exper (\Theta)$
if and only if $\Lambda \cocp \condi_\Theta .$
\end{lemma}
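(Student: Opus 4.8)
The plan is to extract both directions directly from the explicit formulas for $\condi_\Theta$ and $\Lambda_\E$, together with the single structural fact that $\condi_\Theta$ maps $\LlTh$ into the abelian subalgebra $D \subseteq \LlTh$ of operators diagonal in the orthonormal basis $(\delta_\theta)_{\thin}$ and fixes every element of $D$ (i.e.\ $\condi_\Theta$ is a normal conditional expectation onto $D$); in particular $\condi_\Theta(\ket{\delta_\theta}\bra{\delta_\theta}) = \ket{\delta_\theta}\bra{\delta_\theta}$ for each $\thin$, which is a one-line computation on the basis. Informally the lemma then asserts that the channels $\Lambda_\E$ are exactly the normal channels with input algebra $\M$ whose range lies in $D$. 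The only points needing a little care — and what I would regard as the (mild) obstacle — are the interchange of $\condi_\Theta$ with the ultraweakly convergent series defining $\Lambda_\E$ in the forward direction, which is licensed by normality of $\condi_\Theta$, and the verification that the functionals obtained in the converse direction are \emph{normal} states and not merely states.

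For the forward implication, assume $\Lambda = \Lambda_\E$ with $\E = \seE$. Each $\Lambda_\E(A) = \sum_{\thin}\phth(A)\ket{\delta_\theta}\bra{\delta_\theta}$ lies in $D$, so applying $\condi_\Theta$ termwise (justified by its normality) and using $\condi_\Theta(\ket{\delta_\theta}\bra{\delta_\theta}) = \ket{\delta_\theta}\bra{\delta_\theta}$ gives $\condi_\Theta \circ \Lambda_\E = \Lambda_\E$. Taking the post-processing channel $\alpha := \Lambda_\E \in \ncpchset{\M}{\LlTh}$, we obtain $\Lambda_\E = \condi_\Theta \circ \alpha$, that is, $\Lambda_\E \cocp \condi_\Theta$. (This direction uses only the formulas and normality of $\condi_\Theta$.)

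For the converse, assume $\Lambda \cocp \condi_\Theta$, so $\Lambda = \condi_\Theta \circ \alpha$ for some channel $\alpha \in \cpchset{\M}{\LlTh}$. Unwinding the definition of $\condi_\Theta$ yields $\Lambda(A) = \condi_\Theta(\alpha(A)) = \sum_{\thin}\phth(A)\ket{\delta_\theta}\bra{\delta_\theta}$ with $\phth(A) := \braket{\delta_\theta | \alpha(A)\,\delta_\theta}$; comparing diagonal matrix entries also gives $\phth(A) = \braket{\delta_\theta | \Lambda(A)\,\delta_\theta}$. It remains to see that $\E := \seE$ is a statistical experiment. Each $\phth$ is positive and obeys $\phth(\unit_\M) = \braket{\delta_\theta | \alpha(\unit_\M)\,\delta_\theta} = \|\delta_\theta\|^2 = 1$ because $\alpha$ is a channel, so $\phth$ is a state; and $\phth$ is normal because it is the composition $(B \mapsto \braket{\delta_\theta | B\,\delta_\theta}) \circ \Lambda$ of the normal channel $\Lambda$ with a normal vector state on $\LlTh$ — in particular normality of $\phth$ is forced by that of $\Lambda$ alone, so no appeal to the normal refinement of $\cocp$ from the remark after Definition~\ref{defi:randch} is needed here. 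With this $\E$ one has $\Lambda = \Lambda_\E$ by construction, which completes the proof.
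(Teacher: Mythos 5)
Your proof is correct and follows essentially the same route as the paper: the forward direction is the same observation that $\condi_\Theta$ fixes the diagonal projections so $\condi_\Theta \circ \Lambda_\E = \Lambda_\E$, and the converse unwinds $\Lambda = \condi_\Theta \circ \alpha$ to read off the states $\phth$ from the diagonal entries. The only (harmless) variation is that the paper takes the post-processing channel to be normal, making normality of $\phth$ automatic, whereas you obtain it directly from $\phth = \braket{\delta_\theta|\Lambda(\cdot)\delta_\theta}$ and the normality of $\Lambda$, which slightly streamlines that step.
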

\begin{proof}
Assume that $\Lambda$ can be written as $\Lambda_\E$
for some statistical experiment $\E = \seE .$
Then from 
$\condi_\Theta ( \ket{\delta_\theta} \bra{\delta_\theta }) 
=  \ket{\delta_\theta} \bra{\delta_\theta } , $
we have
$
\condi_\Theta \circ \Lambda
=
\Lambda  ,
$
which implies $\Lambda \cocp \condi_\Theta .$

Conversely, assume $\Lambda \cocp \condi_\Theta .$
Then there exists a normal channel
$\Gamma \in \ncpchset{\M}{\LlTh}$
such that
$\Lambda = \condi_\Theta \circ \Gamma .$
Then for each $A \in \M ,$
\[
	\Lambda (A)
	=
	\condi_\Theta \circ \Gamma (A)
	=
	\sum_{\thin}
	\braket{\delta_\theta | \Gamma (A) \delta_\theta}
	\ket{\delta_\theta} \bra{\delta_\theta } .
\]
Hence if we define $\phth \in \Ss (\M)$ by
$\phth (A) := \braket{\delta_\theta | \Gamma (A) \delta_\theta}$
$(A \in \M) ,$
the statistical experiment $\E := \seE$
satisfies $\Lambda = \Lambda_\E .$
\end{proof}

Conversely, any normal channel can be regarded as a statistical experiment in the 
following way.
For von Neumann algebras $\M$ and $\Min$ and a normal channel 
$\Lambda \in \ncpchset{\M}{\Min} ,$
we define a statistical experiment $\E_\Lambda$ by
\[
	\E_\Lambda
	:=
	(\M , \vph \circ \Lambda :  \vph \in \Ss (\Min)).
\]
We can easily see that for normal channels $\Lambda$ and $\Gamma$
with the same input space $\Min ,$
$\Lambda \cocp \Gamma$
(respectively, $\Lambda \eqcp \Gamma $)
if and only if
$\E_\Lambda \cocp \E_\Gamma$
(respectively, $\E_\Lambda \eqcp \E_\Gamma $).
The following lemma is immediate from these definitions.
\begin{lemma}
\label{lemm:che}
Let $\Min$ be a von Neumann algebra 
and let $\E = (\M , \psi_\vph : \vph \in \Ss (\Min))$
be a statistical experiment.
Then $\E = \E_\Lambda$ for some normal channel
$\Lambda \in \ncpchset{\M}{\Min}$ if and only if 
$\E \cocp \E_{\id_{\Min}} = (\Min , \vph : \vph \in \Ss (\Min)).$
\end{lemma}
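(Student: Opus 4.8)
\emph{Proof proposal.} The plan is to unwind the definitions of $\E_\Lambda$ and of the randomization preorder $\cocp$ for statistical experiments; the statement then falls out as a short bookkeeping argument in each direction, with only one genuinely non-formal point.

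For the forward implication, I would assume $\E = \E_\Lambda$ with $\Lambda \in \ncpchset{\M}{\Min}$ normal. By the definition $\E_\Lambda = (\M , \vph \circ \Lambda : \vph \in \Ss(\Min))$, this means $\psi_\vph = \vph \circ \Lambda$ for every $\vph \in \Ss(\Min)$. Taking $\alpha := \Lambda \in \cpchset{\M}{\Min}$ as the witnessing channel, and recalling that in $\E_{\id_{\Min}} = (\Min , \vph : \vph \in \Ss(\Min))$ the state indexed by $\vph$ is $\vph$ itself, the equalities $\psi_\vph = \vph \circ \alpha$ for all $\vph \in \Ss(\Min)$ are exactly the condition $\E \cocp \E_{\id_{\Min}}$.

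For the converse I would assume $\E \cocp \E_{\id_{\Min}}$, which by Definition~\ref{defi:randse} furnishes a channel $\alpha \in \cpchset{\M}{\Min}$ with $\psi_\vph = \vph \circ \alpha$ for all $\vph \in \Ss(\Min)$. The only step requiring care is that $\alpha$ must be \emph{normal} in order to serve as a valid $\Lambda$ in the definition of $\E_\Lambda$; this is supplied by the remark following Definition~\ref{defi:randse} (the normal-channel reformulation of $\cocp$, coming from the proof of Lemma~3.12 in Ref.~\onlinecite{gutajencova2007}), which lets us replace $\alpha$ by a normal channel in $\ncpchset{\M}{\Min}$ with the same intertwining property. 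Setting $\Lambda := \alpha$ then gives $\E_\Lambda = (\M , \vph \circ \Lambda : \vph \in \Ss(\Min)) = \E$. Thus the main — and essentially only — obstacle is the normality issue just described; one should also be mindful that, since $\E$ and $\E_\Lambda$ share the parameter set $\Ss(\Min)$, the equation $\E = \E_\Lambda$ is literal equality of parametrized families of states and not merely randomization equivalence.
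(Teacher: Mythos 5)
Your proposal is correct and takes essentially the same route as the paper, which gives no separate proof and declares the lemma immediate from the definitions: unwinding $\E_\Lambda$ and the definition of $\cocp$, with the normal-channel reformulation of $\cocp$ covering the converse, is exactly the intended argument. (In fact the normality of $\alpha$ is automatic here, so the citation is not strictly needed: since $\vph \circ \alpha = \psi_\vph$ is a normal state for every $\vph \in \Ss (\Min)$, the channel $\alpha$ is ultraweakly continuous and already lies in $\ncpchset{\M}{\Min}$.)
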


Let $\A$ be a \cstar-algebra, 
let $\Min$ be a von Neumann algebra,
and let $\Lambda \in \cpchset{\A}{\Min}$
be a channel.
Then $\Lambda$ is uniquely extended to a normal channel
$\overline{\Lambda} \in \ncpchset{\A^{\ast \ast}}{\Min} ,$
where $\A^{\ast \ast}$ is the universal enveloping von Neumann algebra of
$\A .$
The normal channel $\overline{\Lambda}$ is called the normal 
extension~\cite{kuramochi2018incomp}
of $\Lambda .$
We have $\Lambda \cocp \overline{\Lambda}$ by definition.
Furthermore, for any normal channel $\Gamma$ with the same input algebra $\Min ,$
$\Lambda \cocp \Gamma $ if and only if
$\overline{\Lambda} \cocp \Gamma $
(Ref.~\onlinecite{kuramochi2018incomp}, Lemma~7),
i.e.\ $\overline{\Lambda}$ is the least normal channel that upper bounds
$\Lambda$ in the randomization preorder.
If $\A$ is a von Neumann algebra and $\Lambda$ is normal,
we have $\Lambda \eqcp \overline{\Lambda} .$

%

Let $\A$ be a \cstar-algebra, let $\Hin$ be a Hilbert space,
and let $\Lambda \in \cpchset{\A}{\LHin}$ be a channel.
For a Stinespring representation $(\cK , \pi , V)$ of $\Lambda ,$
we define the conjugate (or complementary) 
channel~\cite{holevo2007complementary,king2007properties,1751-8121-50-13-135302,kuramochi2018incomp}
of $\Lambda$ associated with $(\cK , \pi , V)$ by the normal channel
$\Lambda^c \in \ncpchset{\pi (\M)^\prime}{\LHin}$
given by
\[
	\Lambda^c (B) := V^\ast B V ,
	\quad
	(B \in \pi (\M)^\prime) ,
\]
where the prime denotes the commutant.
While the definition of the conjugate channel explicitly depends on the choice of 
Stinespring representation,
we can show that
any conjugate channels of $\Lambda$ are mutually randomization-equivalent
(Ref.~\onlinecite{kuramochi2018incomp}, Proposition~2).
If the particular choice of conjugate channel is irrelevant,
we denote by $\Lambda^c$ one of the conjugate channels of $\Lambda .$
\begin{proposition}[Ref.~\onlinecite{kuramochi2018incomp}, Theorem~2 and Corollary~1]
\label{prop:conj}
Let $\Hin$ be a Hilbert space and let 
$\Lambda \in \ncpchset{\M}{\LHin}$
and 
$\Gamma \in \ncpchset{\N}{\LHin}$
be normal channels with the same input space $\LHin .$
Then we have the following.
\begin{enumerate}
\item
$\Lambda \cocp \Gamma$ if and only if $\Gamma^c \cocp \Lambda^c .$
\item
$\Lambda \eqcp (\Lambda^c)^c .$
\end{enumerate}
\end{proposition}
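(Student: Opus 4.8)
The plan is to prove statement~2 first, then the ``only if'' direction of statement~1, and finally to deduce the ``if'' direction of statement~1 from the first two. Throughout I use Ref.~\onlinecite{kuramochi2018incomp}, Proposition~2, that conjugate channels computed from different Stinespring representations are mutually randomization-equivalent; this lets me work with one convenient choice in each case.

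For statement~2 I would fix a Stinespring representation $(\cK , \pi , V)$ of $\Lambda$ with $\pi$ faithful and normal (such a representation exists by the construction recalled in Sec.~\ref{sec:prel}). Then $\Lambda^c (C) = V^\ast C V$ for $C \in \pi (\M)^\prime$, and the pair consisting of the inclusion $\iota \colon \pi (\M)^\prime \hookrightarrow \calL(\cK)$ and the isometry $V$ is a Stinespring representation of $\Lambda^c$. Hence one conjugate channel of $\Lambda^c$ is $(\Lambda^c)^c \colon \iota (\pi (\M)^\prime)^\prime \to \LHin$, $D \mapsto V^\ast D V$. Because $\pi$ is normal, $\pi (\M)$ is ultraweakly closed in $\calL(\cK)$, so $\iota (\pi (\M)^\prime)^\prime = \pi (\M)^{\prime\prime} = \pi (\M)$ by the bicommutant theorem. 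Since $\pi \colon \M \to \pi (\M)$ is a normal unital $\ast$-isomorphism, both $\pi$ and $\pi^{-1}$ are normal channels, and from $\Lambda (A) = V^\ast \pi (A) V = (\Lambda^c)^c (\pi (A))$ I get $\Lambda = (\Lambda^c)^c \circ \pi$ and $(\Lambda^c)^c = \Lambda \circ \pi^{-1}$, hence $\Lambda \eqcp (\Lambda^c)^c$. By Proposition~2 this is independent of all choices.

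For the ``only if'' direction of statement~1, assume $\Lambda \cocp \Gamma$; by normality write $\Lambda = \Gamma \circ \alpha$ with $\alpha \in \ncpchset{\M}{\N}$. I would pick a normal Stinespring representation $(\cK_1 , \pi_1 , V_1)$ of $\Gamma$ and let $(\cK_2 , \pi_2 , V_2)$ be the minimal (GNS-type) Stinespring representation of the normal channel $\pi_1 \circ \alpha \in \ncpchset{\M}{\calL(\cK_1)}$, realized on the separated completion of $\M \atensor \cK_1$ with semi-inner product $\braket{A \otimes \eta | B \otimes \zeta} := \braket{\eta | \pi_1 (\alpha (A^\ast B)) \zeta}$, $\pi_2 (A)[B \otimes \eta] = [AB \otimes \eta]$, $V_2 \eta = [\unit \otimes \eta]$. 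Then $\pi_1 (\alpha (A)) = V_2^\ast \pi_2 (A) V_2$, so $(\cK_2 , \pi_2 , V_2 V_1)$ is a Stinespring representation of $\Lambda$, and I take $\Gamma^c (C) = V_1^\ast C V_1$ on $\pi_1 (\N)^\prime$ and $\Lambda^c (D) = V_1^\ast V_2^\ast D V_2 V_1$ on $\pi_2 (\M)^\prime$. The heart of the argument is to define, for $C \in \pi_1 (\N)^\prime$, an operator $\gamma (C)$ on $\cK_2$ by $\gamma (C)[A \otimes \eta] := [A \otimes C \eta]$ and to check that $\gamma$ is a well-defined, normal, unital $\ast$-homomorphism of $\pi_1 (\N)^\prime$ into $\pi_2 (\M)^\prime$ with $V_2^\ast \gamma (C) V_2 = C$; granting this, $\Gamma^c (C) = V_1^\ast C V_1 = V_1^\ast V_2^\ast \gamma (C) V_2 V_1 = \Lambda^c (\gamma (C))$, i.e.\ $\Gamma^c \cocp \Lambda^c$.

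I expect the well-definedness (boundedness) of $\gamma (C)$ to be the main obstacle, and this is exactly where the hypothesis $\Lambda \cocp \Gamma$ enters: because $\Lambda$ factors through $\N$, the inner-product entries $\pi_1 (\alpha (A^\ast B))$ lie in $\pi_1 (\N)$ and hence commute with $C \in \pi_1 (\N)^\prime$. Writing $x = \sum_i A_i \otimes \eta_i$ in the algebraic pre-Hilbert space, positivity of the operator matrix $[\pi_1 (\alpha (A_i^\ast A_j))]_{ij}$ (a consequence of complete positivity of $\pi_1 \circ \alpha$), together with the fact that the diagonal ampliation of $C$ commutes with this matrix, yields $\braket{\gamma (C) x | \gamma (C) x} \le \norm{C}^2 \braket{x | x}$, so $\gamma (C)$ descends to the quotient and extends to a bounded operator on $\cK_2$ with $\norm{\gamma (C)} \le \norm{C}$. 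Given boundedness, multiplicativity, the $\ast$-property, unitality, and commutation with $\pi_2 (\M)$ are immediate from the formula on the dense subspace $\{[A \otimes \eta]\}$, and normality follows by testing weak convergence of a bounded net $C_i \to C$ against the vectors $[A \otimes \eta]$. Finally, for the ``if'' direction of statement~1: if $\Gamma^c \cocp \Lambda^c$, then applying the ``only if'' direction to the normal channels $\Lambda^c$ and $\Gamma^c$ (which share the input space $\LHin$) gives $(\Lambda^c)^c \cocp (\Gamma^c)^c$, and statement~2 turns this into $\Lambda \cocp \Gamma$.
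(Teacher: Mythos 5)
The paper itself does not prove this proposition: it is imported verbatim from Ref.~\onlinecite{kuramochi2018incomp} (Theorem~2 and Corollary~1), so there is no in-paper argument to compare against. Judged on its own, your proof is correct. Statement~2 via a faithful \emph{normal} Stinespring representation, the identity representation of $\pi(\M)^\prime$ on $\cK$, and the bicommutant theorem is sound (normality of $\pi$ guarantees $\pi(\M)^{\prime\prime}=\pi(\M)$, and $\pi^{-1}$ is a normal channel, giving equivalence rather than just one-sided randomization). The \lq\lq{}only if\rq\rq{} direction is the standard commutant-lifting argument in the Stinespring construction: your map $\gamma(C)[A\otimes\eta]=[A\otimes C\eta]$ is well defined and contractive precisely because the matrix $\bigl[\pi_1(\alpha(A_i^\ast A_j))\bigr]_{ij}$ is positive and commutes with the diagonal ampliation of $C$ (so $\tilde{C}^\ast T \tilde{C}=T^{1/2}\tilde{C}^\ast\tilde{C}T^{1/2}\leq \norm{C}^2 T$), and the factorization through $\N$ is exactly where $\Lambda\cocp\Gamma$ is used; the verifications that $\gamma$ is a unital $\ast$-homomorphism into $\pi_2(\M)^\prime$ with $V_2^\ast\gamma(C)V_2=C$ are routine, and normality of $\gamma$ is not even needed since $\cocp$ only requires a CP channel. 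The deduction of the \lq\lq{}if\rq\rq{} direction from the other two parts is fine. One small imprecision: when you say the equivalence $\Lambda\eqcp(\Lambda^c)^c$ is \lq\lq{}independent of all choices by Proposition~2,\rq\rq{} Proposition~2 only handles different Stinespring representations of the \emph{same} channel, so it covers the second conjugation but not the change of the first conjugate $\Lambda^{c}$; full independence needs the \lq\lq{}only if\rq\rq{} part of statement~1 (which you prove later without using statement~2, so there is no circularity, and in fact your specific-choice version of statement~2 already suffices for the \lq\lq{}if\rq\rq{} direction). It would be cleaner to state this explicitly.
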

If $\A$ is a \cstar-algebra and $\Lambda \in \cpchset{\A}{\LHin}$ is a channel, 
then the double conjugate channel $(\Lambda^c)^c$ coincides with the normal extension 
of $\Lambda $ up to randomization equivalence.

\subsection{Order theory}
Let $(X , \leq )$ be a partially ordered set (poset).
We introduce some notations and definitions as follows.
\begin{itemize}
\item
For each subset $A \subseteq X , $
$\downarrow A := \set{x \in X | \exists a \in A \text{ s.t.\ }x \leq a } .$
\item
A subset $A \subseteq X$ is called lower if
$A = \, \downarrow A  .$
\item
$X$ is called upper (respectively, lower)
directed if for each $x,y \in X$ there exists $z \in X$
such that $x \leq z$ and $y \leq z$
(respectively, $x \geq z$ and $y \geq z$).
\item
$X$ is called an upper directed-complete poset (upper dcpo)
if every upper directed subset $D$ of $X$ 
has a supremum (i.e.\ least upper bound) $\sup D$
in $X .$
\item
$X$ is called a lower directed-complete poset (lower dcpo)
if every lower directed subset $D$ of $X$ has an infimum 
(i.e.\ greatest lower bound) $\inf D$
in $X .$
\item
A net $(x_i)_{i \in I}$ on $X$ is called increasing (respectively, decreasing)
if $i \leq j$ implies $x_i \leq x_j$ (respectively, $x_j \leq x_i$)
for any $i,j \in I .$
\item
A subset $A\subseteq X$ is called upper (respectively, lower) 
Dedekind-closed~\cite{10.2307/2033201,mcshane1953order} if, 
whenever an upper (respectively, lower) 
directed subset $D$ of $A$ has a supremum $\sup D \in X $
(respectively, infimum $\inf D \in X$), 
then $\sup D\in A$
(respectively, $\inf D \in A$).
\end{itemize}
In the above definition, 
$X$ is an upper (respectively, lower) dcpo 
if and only if any increasing (respectively, decreasing) 
net $(x_i)_{i \in I}$ on $X$ has a supremum (respectively, infimum)
in $X .$

Let $X$ and $Y$ be upper (respectively, lower) dcpos.
A map $f \colon X \to Y$ is called upper (respectively, lower)
Scott-continuous~\cite{gierz2003continuous} 
if $f$ is order-preserving and 
$f(\sup D)  = \sup f (D) $
(respectively, $f(\inf D)  = \inf f(D) $)
for any upper (respectively, lower) directed subset $D \subseteq X .$
If $X$ and $Y$ are upper and lower dcpos and 
$f \colon X \to Y$ is upper and lower Scott-continuous,
$f$ is called bi-Scott-continuous.

We occasionally omit the term \lq\lq{}upper\rq\rq{} 
if there is no confusion in the context;
for example, \lq\lq{}a dcpo\rq\rq{} means \lq\lq{}an upper dcpo.\rq\rq{}

We identify, as usual, the cardinality $|S|$ of a set $S$
with the smallest ordinal $\alpha$ satisfying $|\alpha| = |S|$
(the von Neumann cardinal assignment).
We also understand a transfinite sequence $(x_\alpha)_{\alpha < \alpha_0}$
in a set $S$
to be a net on $S$ 
indexed by ordinals $\alpha$ smaller than $\alpha_0$
and greater than or equal to $0 .$
The following theorem is due to Iwamura and Markowsky.
\begin{theorem}[Ref.~\onlinecite{iwamura1944}; Ref.~\onlinecite{markowsky1976chain}, Theorem~1]
\label{theo:markow}
Let $D$ be an infinite directed set.
Then there exists a transfinite sequence 
$(D_\alpha)_{\alpha < | D | }$
of directed subsets of $D$ satisfying the following conditions.
\begin{enumerate}[(i)]
\item
For each $\alpha < |D| ,$ 
$D_\alpha$ is finite if $\alpha$ is finite
and $|D_\alpha| = |\alpha|$ if $\alpha$ is infinite.
\item
$0 \leq \alpha \leq \beta < |D|$ implies
$D_\alpha \subseteq D_\beta .$
\item
$D = \bigcup_{\alpha < |D|} D_\alpha .$
\end{enumerate}
\end{theorem}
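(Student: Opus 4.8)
The plan is to argue by transfinite recursion, building each $D_\alpha$ as a ``directed hull'' of the previously constructed sets together with one fresh element of $D$, the elements being added in the order of a fixed well-ordering of $D$. This is the standard approach.

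The first step is to isolate the following hull construction: every subset $S \subseteq D$ is contained in a directed subset $\mathrm{dir}(S) \subseteq D$ that is finite when $S$ is finite and satisfies $|\mathrm{dir}(S)| = |S|$ when $S$ is infinite. For finite $S$ this is trivial: directedness of $D$ supplies an upper bound $u$ of $S$, and $S \cup \{ u \}$ is finite and directed. For infinite $S$, set $S_0 := S$ and, given $S_n$, choose for each finite $F \subseteq S_n$ an upper bound $u_F \in D$ of $F$ and put $S_{n+1} := S_n \cup \{ u_F : F \subseteq S_n \text{ finite} \}$; then define $\mathrm{dir}(S) := \bigcup_{n \in \natz} S_n$. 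Any two elements of $\mathrm{dir}(S)$ lie in a common $S_n$, whose pairwise upper bound lies in $S_{n+1} \subseteq \mathrm{dir}(S)$, so $\mathrm{dir}(S)$ is directed; and since the set of finite subsets of an infinite set $S_n$ has cardinality $|S_n|$, induction gives $|S_n| = |S|$ for every $n$, whence $|\mathrm{dir}(S)| = |S|$. (The choices of upper bounds, and the enumeration below, use the axiom of choice.)

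With this in hand, fix the infinite cardinal $\kappa := |D|$ and a bijective enumeration $D = \{ e_\gamma : \gamma < \kappa \}$, and define $(D_\alpha)_{\alpha < \kappa}$ by transfinite recursion via
\[
 D_\alpha := \mathrm{dir}\!\left( \{ e_\alpha \} \cup \bigcup_{\beta < \alpha} D_\beta \right) .
\]
Each $D_\alpha$ is directed by the hull property, and conditions (i)--(iii) are then checked by transfinite induction. Condition (ii) is immediate, since $D_\beta \subseteq \bigcup_{\gamma < \alpha} D_\gamma \subseteq D_\alpha$ whenever $\beta < \alpha$. For (i): if $\alpha$ is finite, then by the induction hypothesis $\bigcup_{\beta < \alpha} D_\beta$ is a finite union of finite sets, hence finite, so $D_\alpha$ is the directed hull of a finite set and is finite; if $\alpha$ is infinite, the induction hypothesis gives $|D_\beta| \le |\alpha|$ for every $\beta < \alpha$, so $\left| \bigcup_{\beta < \alpha} D_\beta \right| \le |\alpha| \cdot |\alpha| = |\alpha|$ and hence $|D_\alpha| \le |\alpha|$, while the inclusion $\{ e_\gamma : \gamma \le \alpha \} \subseteq D_\alpha$ (because $e_\gamma \in D_\gamma \subseteq D_\alpha$ for all $\gamma \le \alpha$) forces $|D_\alpha| \ge |\alpha|$, so $|D_\alpha| = |\alpha|$. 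Finally (iii) holds because any $d \in D$ equals $e_\gamma$ for some $\gamma < \kappa$, and then $d \in D_\gamma$.

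The only point that needs real care is the cardinality bookkeeping in (i) for infinite $\alpha$ --- matching the upper estimate from the hull lemma and cardinal arithmetic against the lower estimate coming from $\{ e_\gamma : \gamma \le \alpha \} \subseteq D_\alpha$ --- together with keeping the finite and infinite stages of the induction cleanly separated so that the finiteness clause of (i) is not lost. The remaining verifications are routine.
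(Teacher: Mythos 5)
Your proposal is correct. Note that the paper itself does not prove this statement at all: it is quoted verbatim from Iwamura and from Markowsky (Theorem~1 of the cited reference), so there is no in-paper argument to compare against. Your directed-hull construction --- closing a subset under chosen upper bounds of its finite subsets in $\omega$ stages, so that the hull keeps the same infinite cardinality, and then running a transfinite recursion along a bijective enumeration $D = \{ e_\gamma : \gamma < |D| \}$ --- is a sound, self-contained proof; the cardinality bookkeeping (upper bound $|\alpha|\cdot|\alpha| = |\alpha|$ from the hull, lower bound from $\{ e_\gamma : \gamma \le \alpha \} \subseteq D_\alpha$) and the separation of the finite stages are exactly the points that need care, and you handle them correctly. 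The classical proofs of Iwamura and Markowsky are organized slightly differently (induction on the cardinality of $D$, reducing to directed subsets of smaller cardinality), but they rest on the same closure idea, so in substance your argument is the standard one rather than a genuinely new route.
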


By using Theorem~\ref{theo:markow}
Markowsky showed the following.

\begin{proposition}[Ref.~\onlinecite{markowsky1976chain}, Corollary~1]
\label{prop:dc}
A poset $X$ is a dcpo if and only if
every increasing transfinite sequence 
$(x_\alpha)_{\alpha < \alpha_0}$ in $X$
has a supremum in $X .$
\end{proposition}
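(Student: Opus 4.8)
The plan is to settle the ``only if'' direction in one line and then handle the substantive ``if'' direction by transfinite induction on cardinality, with Theorem~\ref{theo:markow} supplying the essential reduction. For ``only if'': an increasing transfinite sequence $(x_\alpha)_{\alpha < \alpha_0}$ is a special case of an increasing net, its index set being the ordinal $\alpha_0$, which is well-ordered and hence directed; so if $X$ is a dcpo it has a supremum in $X$ by the characterization of dcpos in terms of increasing nets recorded above.

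For ``if'': assume every increasing transfinite sequence in $X$ has a supremum, and let $D \subseteq X$ be an upper directed subset. I would prove, by transfinite induction on the cardinal $\kappa$, the statement that every directed subset of $X$ of cardinality at most $\kappa$ has a supremum in $X$; applying this with $\kappa = |D|$ then gives $\sup D$. If $\kappa$ is finite, a finite directed set has (by induction on its size) a maximum element, which is its supremum. If $\kappa$ is infinite, let $D$ be directed with $|D| = \kappa$ (the case $|D| < \kappa$ being covered by the induction hypothesis for the cardinal $|D|$). Apply Theorem~\ref{theo:markow} to obtain an inclusion-increasing transfinite sequence $(D_\alpha)_{\alpha < \kappa}$ of directed subsets of $D$ with $\bigcup_{\alpha < \kappa} D_\alpha = D$, where $D_\alpha$ is finite for finite $\alpha$ and $|D_\alpha| = |\alpha|$ for infinite $\alpha$. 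Since $\kappa$ is an initial ordinal, $|\alpha| < \kappa$ for every $\alpha < \kappa$, so each $D_\alpha$ has a supremum — by the finite case when $\alpha$ is finite, and by the induction hypothesis applied to the cardinal $|\alpha|$ when $\alpha$ is infinite. Set $x_\alpha := \sup D_\alpha$.

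It remains to check that $s := \sup_{\alpha < \kappa} x_\alpha$, which exists by hypothesis once we know $(x_\alpha)_{\alpha < \kappa}$ is increasing, equals $\sup D$. The sequence is increasing because $\alpha \le \beta$ gives $D_\alpha \subseteq D_\beta$, so every upper bound of $D_\beta$ bounds $D_\alpha$ and hence $x_\alpha \le x_\beta$. As for $s$: any $d \in D$ lies in some $D_\alpha$, so $d \le x_\alpha \le s$, showing $s$ is an upper bound of $D$; and if $u$ is any upper bound of $D$, then $u$ bounds each $D_\alpha$, so $x_\alpha = \sup D_\alpha \le u$ for all $\alpha$ and therefore $s \le u$. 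Thus $s = \sup D$, closing the induction.

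The proof is essentially mechanical once Theorem~\ref{theo:markow} is in hand; the only point requiring genuine care is the bookkeeping of the transfinite induction, specifically the observation that for an infinite cardinal $\kappa$ and any ordinal $\alpha < \kappa$ one has $|D_\alpha| = |\alpha| < \kappa$, so that the induction hypothesis legitimately applies to every $D_\alpha$. I expect no other obstacle: the heavy lifting — decomposing an arbitrary directed set into a transfinite chain of strictly smaller directed subsets — is precisely what Theorem~\ref{theo:markow} provides.
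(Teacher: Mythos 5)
Your proof is correct and takes essentially the approach the paper intends: the paper cites this proposition to Markowsky and reproduces exactly this technique in its proofs of Lemmas~\ref{lemm:scott} and \ref{lemm:dcsubset}, namely decomposing a directed set $D$ via Theorem~\ref{theo:markow} into directed subsets of strictly smaller cardinality, taking their suprema, and applying the hypothesis to the resulting increasing transfinite sequence. The only cosmetic difference is that you organize this as an explicit transfinite induction on cardinality, whereas the paper's variants phrase the same argument contrapositively via a minimal-cardinality counterexample.
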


By slightly modifying the proof of the above proposition
in Ref.~\onlinecite{markowsky1976chain},
we obtain the following two lemmas.

\begin{lemma}
\label{lemm:scott}
Let $X$ and $Y$ be dcpos and 
let $f \colon X \to Y$ be an order-preserving map.
Then the following conditions are equivalent.
\begin{enumerate}[(i)]
\item
$f$ is Scott-continuous.
\item
$f( \sup_{\alpha < \alpha_0} x_\alpha ) = \sup_{\alpha < \alpha_0} f(x_\alpha)$
for any increasing transfinite sequence $(x_\alpha)_{\alpha < \alpha_0}$
in $X .$
\end{enumerate}
\end{lemma}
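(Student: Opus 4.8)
The plan is to prove the equivalence of Scott-continuity with its restriction to transfinite sequences, using Theorem~\ref{theo:markow} exactly as Markowsky used it to prove Proposition~\ref{prop:dc}. The direction (i)$\Rightarrow$(ii) is immediate: an increasing transfinite sequence $(x_\alpha)_{\alpha < \alpha_0}$ is in particular a directed set (being a chain), so its supremum exists in $X$ by the dcpo assumption, and Scott-continuity of $f$ gives $f(\sup_{\alpha} x_\alpha) = \sup_{\alpha} f(x_\alpha)$ directly. Thus the substance is in (ii)$\Rightarrow$(i).

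For (ii)$\Rightarrow$(i), I first note that $f$ is assumed order-preserving, so it only remains to show $f(\sup D) = \sup f(D)$ for an arbitrary directed subset $D \subseteq X$ for which $\sup D$ exists (which it does, since $X$ is a dcpo). If $D$ is finite it contains its own maximum $d_0$, whence $\sup D = d_0$ and $\sup f(D) = f(d_0) = f(\sup D)$ trivially, so we may assume $D$ infinite. Apply Theorem~\ref{theo:markow} to get a transfinite sequence $(D_\alpha)_{\alpha < |D|}$ of directed subsets with $D_\alpha \subseteq D_\beta$ for $\alpha \leq \beta$, with $|D_\alpha|$ strictly smaller than $|D|$ (finite when $\alpha$ is finite, cardinality $|\alpha|$ otherwise), and $D = \bigcup_{\alpha < |D|} D_\alpha$. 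The idea is to argue by transfinite induction on the cardinal $\kappa = |D|$: for each $\alpha < |D|$, the set $D_\alpha$ is directed with strictly smaller cardinality, so by the inductive hypothesis $\sup D_\alpha$ exists, $f(\sup D_\alpha) = \sup f(D_\alpha)$, and moreover $x_\alpha := \sup D_\alpha$ defines an increasing transfinite sequence (since $D_\alpha \subseteq D_\beta$). Then one shows $\sup_{\alpha < |D|} x_\alpha = \sup D$ — this holds because every $d \in D$ lies in some $D_\alpha$, hence $d \leq x_\alpha$, so any upper bound of $\{x_\alpha\}$ bounds $D$ and conversely; and $\sup_{\alpha} f(x_\alpha) = \sup f(D)$ by the same bookkeeping applied to $f(D)$, using that $f$ is order-preserving so $f(x_\alpha) = f(\sup D_\alpha)$ is an upper bound of $f(D_\alpha)$ and the $f(D_\alpha)$ exhaust $f(D)$. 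Finally, hypothesis (ii) applied to $(x_\alpha)_{\alpha < |D|}$ gives $f(\sup_\alpha x_\alpha) = \sup_\alpha f(x_\alpha)$, and stringing the equalities together yields $f(\sup D) = \sup f(D)$, as desired.

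The main obstacle, and the only point requiring care, is the transfinite-induction bookkeeping: one must phrase the induction on $\kappa = |D|$ so that the inductive hypothesis ("for every directed set $D'$ with $|D'| < \kappa$, $\sup D'$ exists and $f$ commutes with it") is legitimately available for each $D_\alpha$ — this is exactly where condition~(i) of Theorem~\ref{theo:markow} (strict decrease of cardinality) is used, and it is the same mechanism as in Markowsky's proof of Proposition~\ref{prop:dc}. A subtlety is the base of the induction: when $|D|$ is a countable limit ordinal the $D_\alpha$ are merely finite, and when $|D| = \aleph_0$ itself one should handle it by the finite case plus the countable chain $(x_n)_{n < \omega}$ directly via (ii). I would also double-check that $(x_\alpha)_{\alpha < |D|}$ is genuinely increasing and that its supremum, guaranteed to exist since $X$ is a dcpo and a chain is directed, really equals $\sup D$; both follow from $D = \bigcup_\alpha D_\alpha$ together with monotonicity of $\alpha \mapsto D_\alpha$. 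Since the paper explicitly says this lemma is obtained "by slightly modifying the proof of [Proposition~\ref{prop:dc}]", I expect the write-up to be short, essentially citing that argument and inserting the $f$-compatibility clause into the inductive statement.
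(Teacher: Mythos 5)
Your proof is correct and takes essentially the same route as the paper: the paper runs the identical cardinality argument in contrapositive form, picking a directed counterexample $D$ of minimal cardinality (so that $f$ already commutes with suprema of all smaller directed sets, in particular the $D_\alpha$ from Theorem~\ref{theo:markow}) and then showing the increasing transfinite sequence $x_\alpha := \sup D_\alpha$ violates (ii). The bookkeeping you flag ($\sup_{\alpha} x_\alpha = \sup D$, $\sup_{\alpha} f(x_\alpha) = \sup f(D)$, and the case $|D| = \aleph_0$ where the $D_\alpha$ are finite) goes through exactly as you describe, so no gap remains.
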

\begin{proof}
(i)$\implies$(ii) is obvious.
Assume that (i) is not true.
Then there exists a directed subset $D \subseteq X$ satisfying
$f ( \sup D ) \neq \sup f(D) .$
We can take the cardinality of $D$ to be minimal 
so that 
$f (\sup D^\prime) = \sup f(D^\prime)$
for any directed subset $D^\prime \subseteq X$ with $|D^\prime| < |D| .$
$D$ cannot be finite.
Let $(D_\alpha)_{\alpha < |D|}$ be a transfinite sequence 
of directed subsets of $D$
satisfying the conditions (i)-(iii)
of Theorem~\ref{theo:markow}.
Then if we put $x_\alpha := \sup D_\alpha $ for each $\alpha <|D| ,$
the transfinite sequence $(x_\alpha)_{\alpha < |D|}$
is increasing. 
Furthermore 
\[
	f( \sup_{\alpha < |D|} x_\alpha )= f (\sup D) \neq \sup f(D) 
	= \sup_{\alpha < |D|}  ( \sup  f (D_\alpha) )
	=  \sup_{\alpha < |D|} f (x_\alpha) .
\]
Therefore (ii) does not hold.
\end{proof}

\begin{lemma}
\label{lemm:dcsubset}
Let $X$ be a dcpo and let $Y \subseteq X$ be a subset.
Then the following conditions are equivalent.
\begin{enumerate}[(i)]
\item
$Y$ is a Dedekind-closed subset of $X .$
\item
For every increasing transfinite sequence 
$(x_\alpha)_{\alpha < \alpha_0} $
in $Y ,$
$\sup_{\alpha < \alpha_0} x_\alpha \in Y .$
\end{enumerate}
\end{lemma}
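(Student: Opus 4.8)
The plan is to mimic the structure of the proof of Lemma~\ref{lemm:scott}, using Theorem~\ref{theo:markow} to reduce Dedekind-closedness (a statement about arbitrary directed subsets) to the case of increasing transfinite sequences. The implication (i)$\implies$(ii) is immediate: an increasing transfinite sequence $(x_\alpha)_{\alpha<\alpha_0}$ in $Y$ is in particular an increasing net, hence a directed subset $D$ of $Y$; since $X$ is a dcpo it has a supremum $\sup D = \sup_{\alpha<\alpha_0} x_\alpha$ in $X$, and Dedekind-closedness of $Y$ forces this supremum to lie in $Y$.

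For (ii)$\implies$(i), suppose $D \subseteq Y$ is a directed subset whose supremum $s := \sup D$ exists in $X$ (it always does, since $X$ is a dcpo); I must show $s \in Y$. If $D$ is finite, then $D$ being directed has a greatest element, which equals $s$ and lies in $D \subseteq Y$, so we are done. If $D$ is infinite, apply Theorem~\ref{theo:markow} to obtain a transfinite sequence $(D_\alpha)_{\alpha < |D|}$ of directed subsets of $D$ with $D_\alpha \subseteq D_\beta$ for $\alpha \leq \beta$ and $D = \bigcup_{\alpha < |D|} D_\alpha$. Here the subtlety is that each $D_\alpha$ is a proper directed subset of $D$ with strictly smaller cardinality, so I want to invoke the hypothesis on each $D_\alpha$. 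To make this work cleanly I would run the argument by induction on $|D|$: take $|D|$ minimal among infinite cardinalities for which the conclusion could fail, so that for every directed $D' \subseteq Y$ with $|D'| < |D|$ the supremum $\sup D'$ (which exists in $X$) already lies in $Y$. Since $|D_\alpha| < |D|$ for each $\alpha < |D|$ (condition (i) of Theorem~\ref{theo:markow}), setting $x_\alpha := \sup D_\alpha$ gives $x_\alpha \in Y$, and the sequence $(x_\alpha)_{\alpha < |D|}$ is increasing by monotonicity of $\sup$ together with $D_\alpha \subseteq D_\beta$.

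It remains to identify $\sup_{\alpha<|D|} x_\alpha$ with $s = \sup D$. Each $x_\alpha = \sup D_\alpha \leq \sup D = s$, so $s$ is an upper bound of $(x_\alpha)$. Conversely, any upper bound $u$ of $(x_\alpha)_{\alpha<|D|}$ satisfies $u \geq \sup D_\alpha \geq d$ for every $d \in D_\alpha$ and every $\alpha$; since $D = \bigcup_\alpha D_\alpha$, $u$ is an upper bound of $D$, hence $u \geq s$. Thus $\sup_{\alpha<|D|} x_\alpha = s$. By hypothesis (ii) applied to the increasing transfinite sequence $(x_\alpha)_{\alpha<|D|}$ in $Y$, this supremum lies in $Y$, so $s \in Y$, contradicting the choice of $D$. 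This completes the induction and the proof. The main obstacle is purely bookkeeping: arranging the minimal-cardinality induction so that Theorem~\ref{theo:markow} can be applied to each $D_\alpha$; once that is set up the supremum identification is routine.
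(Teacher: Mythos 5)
Your proposal is correct and follows essentially the same route as the paper: the paper also handles (ii)$\implies$(i) by taking a directed $D\subseteq Y$ of minimal cardinality with $\sup D\notin Y$, applying Theorem~\ref{theo:markow}, setting $x_\alpha:=\sup D_\alpha\in Y$ by minimality, and noting $\sup_{\alpha<|D|}x_\alpha=\sup D$ (the paper phrases it contrapositively, you phrase it as a contradiction, which is an immaterial difference). Your extra details — the finite case via the greatest element of a finite directed set, $|D_\alpha|<|D|$, and the verification that $\sup_\alpha \sup D_\alpha=\sup D$ — are exactly the routine steps the paper leaves implicit.
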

\begin{proof}
(i)$\implies$(ii) is obvious.
Assume that (i) is not true.
Then there exists a directed subset $D \subseteq Y$
satisfying $\sup D \not\in Y .$
We can take the cardinality of $D$ to be minimal
so that 
$\sup D^\prime \in Y $
for any directed subset $D^\prime \subseteq Y$ 
with $|D^\prime| < |D| .$
$D$ should be infinite.
Let $(D_\alpha)_{\alpha < |D|}$ be a transfinite sequence 
of directed subsets of $D$
satisfying the conditions (i)-(iii)
of Theorem~\ref{theo:markow}.
Then $x_\alpha := \sup D_\alpha \in Y$ for each $\alpha <|D|$
and the transfinite sequence $(x_\alpha)_{\alpha < |D|}$
is increasing. 
Furthermore $ \sup_{\alpha < |D|} x_\alpha = \sup D \not\in Y .$
Therefore (ii) does not hold.
\end{proof}

\section{The set of randomization equivalence classes} \label{sec:smallness}
In this section we show that for a given parameter set $\Theta \neq \varnothing ,$
the set $\E (\Theta)$ of randomization equivalence classes of quantum statistical experiments 
is well-defined.
The construction of $\E (\Theta)$ is based on the operator algebraic canonical state
introduced in Ref.~\onlinecite{gutajencova2007}.
Here we slightly generalize the construction in Ref.~\onlinecite{gutajencova2007} 
to the case 
when each state is not necessarily faithful
and the parameter set is finite.

\subsection{Connes' cocycle derivative and minimal sufficiency}
For the construction of the canonical state, we need the following concepts.~\cite{takesakivol2}
\begin{definition}[Modular automorphism group]
\label{defi:modular}
Let $\M$ be a von Neumann algebra and let $\vph \in \Ss (\M)$
be a faithful normal state.
Then the modular automorphism group $(\sigma^\vph_t)_{t \in \realn}$
is a unique ultrastrongly ($\sigma$-strongly) 
continuous one-parameter group of automorphisms on $\M$
satisfying the following modular condition:
for each $A , B \in \M$ there exists a function $F \in \AD$ such that
\[
	F (t) = \vph (\sigma^\vph_t (A) B) ,
	\quad
	F(t+i) = \vph (B \sigma^\vph_t (A)) ,
\]
where $D_{-1} : = \set{z \in \cmplx | 0 <  \Imag \, z  <1} $
and, for a domain $D \subseteq \cmplx ,$
$\mathbb{A} (D)$ denotes the set of bounded complex-valued functions 
analytic in $D$
and continuous in $\overline{D} .$
\end{definition}
\begin{definition}[Connes' cocycle derivative]
\label{defi:connes}
Let $\vph$ be a faithful normal state on a von Neumann algebra $\M .$
For each normal state $\psi \in \Ss (\M) ,$
there exists an ultrastrongly continuous one-parameter family
$(u_t)_{t \in \realn}$ of partial isometries in $\M$ satisfying the following conditions.
\begin{enumerate}[(i)]
\item(Cocycle condition)
$u_{s+t} = u_s \sigma^\vph_s (u_t)$
($s,t \in \realn$).
\item
$u_s u_s^\ast = \s (\psi) ,$
$u_s^\ast u_s = \sigma^\vph_s (\s (\psi))$
$(s \in \realn) .$
\item
For each $A, B \in \M ,$ there exists a function
$F \in \AD$ such that
\[
	F (t) = \psi (u_t \sigma^\vph_t (B) A) ,
	\quad
	F(t+i) = \vph ( A u_t \sigma^\vph_t (B))
	\quad
	(t \in \realn) .
\]
\item
$\sigma^\psi_t (A) = u_t \sigma^\vph (A) u_t^\ast$
($A \in \M_{\s (\psi)} , $ $t \in \realn$).
\end{enumerate}
The condition (iii) conversely characterizes the one-parameter family 
$(u_t)_{t \in \realn} .$
We write $ [D\psi , D \vph]_t := u_t $
and call the family $([D \psi , D \vph]_t)_{t \in \realn}$
the cocycle derivative of $\psi$ relative to $\vph .$
\end{definition}

\begin{theorem}[Ref.~\onlinecite{jencovapetz2006}]
Let $\Theta \neq \varnothing$ be a finite set, 
let $\E = \seE$ be a faithful statistical experiment,
and let $\alpha \in \ncpchset{\N}{\M}$ be a faithful normal channel.
Then $\E \eqcp \alpha_\ast (\E)$ if and only if
$[D\phth , D \vph ]_t = \alpha (  [D \phth \circ \alpha , D \vph \circ \alpha]_t  )$
for all $\thin$ and all $t \in \realn ,$
where $\vph := |\Theta|^{-1} \sum_{\thin} \phth .$
\end{theorem}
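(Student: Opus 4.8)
The plan is to recognize this statement as the channel version of the quantum sufficiency theorem, so that both implications reduce to the Jen\v{c}ov\'a--Petz sufficiency theory with the average state $\varphi$ serving as a common faithful reference. Put $\psi := \varphi \circ \alpha$ and $\psi_\theta := \varphi_\theta \circ \alpha$ on $\N$. Since $\E$ is faithful and $\Theta$ is finite, $\s(\varphi) = \bigvee_{\thin}\s(\varphi_\theta) = \unit_\M$, so $\varphi$ is faithful; and since $\alpha$ is faithful, $\psi$ is faithful on $\N$. Hence, although the $\varphi_\theta$ and $\psi_\theta$ themselves need not be faithful, their cocycle derivatives relative to the faithful references $\varphi$ and $\psi$ are defined as in Definition~\ref{defi:connes}, those of the $\varphi_\theta$ lying in $\M$ and those of the $\psi_\theta$ in $\N$. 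Now I would reduce the randomization equivalence to recoverability: since $\alpha_\ast(\E) \cocp \E$ always holds, $\E \eqcp \alpha_\ast(\E)$ is equivalent to $\E \cocp \alpha_\ast(\E)$, and by the remark following Definition~\ref{defi:randse} this means there is a normal channel $\beta \in \ncpchset{\M}{\N}$ with $\varphi_\theta = \psi_\theta \circ \beta$ for every $\thin$; averaging over the finite set $\Theta$, any such $\beta$ also satisfies $\varphi = \psi \circ \beta$. Thus $\E \eqcp \alpha_\ast(\E)$ says precisely that one channel $\beta$ simultaneously recovers, through $\alpha$, every $\varphi_\theta$ together with their average $\varphi$.

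For the direction ``$\E \eqcp \alpha_\ast(\E) \Rightarrow$ cocycle identity'' I would argue one parameter at a time. A recovery $\beta$ for the whole family recovers, for each fixed $\thin$, the pair $\{\varphi_\theta,\varphi\}$ through $\alpha$, i.e.\ $\varphi_\theta = \psi_\theta\circ\beta$ and $\varphi = \psi\circ\beta$; so $\alpha$ is sufficient for that pair. By the equality case of the data-processing inequality for the Araki relative entropy --- which is finite here because $\varphi_\theta \leq |\Theta|\,\varphi$ --- sufficiency for the pair $\{\varphi_\theta,\varphi\}$ is equivalent to the factorization $[D\varphi_\theta , D\varphi]_t = \alpha\big([D\psi_\theta , D\psi]_t\big)$ for all $t \in \realn$. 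Applying this for every $\thin$ gives the stated identity.

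For the converse I would use the same equivalence and then pass from pairs back to the whole family through the Petz transpose channel. Assume the cocycle identity. For each $\thin$ the factorization $[D\varphi_\theta , D\varphi]_t = \alpha\big([D\psi_\theta , D\psi]_t\big)$ shows, again by the Jen\v{c}ov\'a--Petz criterion, that $\alpha$ is sufficient for the pair $\{\varphi_\theta,\varphi\}$, and hence that the Petz transpose channel $\widehat{\alpha}_\varphi \in \ncpchset{\M}{\N}$ of $\alpha$ relative to $\varphi$ recovers it: $\varphi_\theta\circ\alpha\circ\widehat{\alpha}_\varphi = \varphi_\theta$. But $\widehat{\alpha}_\varphi$ depends only on $\alpha$ and the reference $\varphi$, not on $\theta$, so the single channel $\beta := \widehat{\alpha}_\varphi$ recovers every $\varphi_\theta$ at once; hence $\E \cocp \alpha_\ast(\E)$, and together with the automatic reverse relation, $\E \eqcp \alpha_\ast(\E)$. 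This is also where the finiteness of $\Theta$ is genuinely used: it makes $\varphi = |\Theta|^{-1}\sum_\theta \varphi_\theta$ a faithful normal state dominating every $\varphi_\theta$, so that a single reference serves all parameters.

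I expect the main obstacle to be the input from Jen\v{c}ov\'a--Petz sufficiency theory (Ref.~\onlinecite{jencovapetz2006}): the equality case of the monotonicity of the Araki relative entropy for general von Neumann algebras, together with the two reformulations used above --- factorization of the Connes cocycles through the channel, and recoverability by the Petz transpose channel, in the variant that allows the non-reference states to be non-faithful. The remaining points --- that $S(\varphi_\theta \Vert \varphi) < \infty$ because $\varphi_\theta \leq |\Theta|\,\varphi$, and that $\widehat{\alpha}_\varphi$ can be taken normal --- are routine; granting the sufficiency theory, both implications follow as sketched.
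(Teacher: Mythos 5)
The paper does not actually prove this theorem: it is quoted verbatim from Ref.~\onlinecite{jencovapetz2006} and used as imported input for the construction of the canonical state, so there is no internal proof to compare against. Your proposal is a correct reconstruction of how the statement follows from the Jen\v{c}ov\'a--Petz sufficiency theory, and the reductions you supply are sound: faithfulness of $\vph$ from faithfulness of $\E$ plus finiteness of $\Theta$, faithfulness of $\vph\circ\alpha$ from faithfulness of $\alpha$ (this is exactly where that hypothesis enters, so the cocycles on $\N$ are well defined), the reformulation of $\E \eqcp \alpha_\ast(\E)$ as the existence of a single normal recovery channel $\beta$ with $\phth = \phth\circ\alpha\circ\beta$, the passage to the average $\vph = \vph\circ\alpha\circ\beta$, and the finiteness $S(\phth\Vert\vph)\leq\log|\Theta|$ from $\phth \leq |\Theta|\vph$. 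The one structural point worth noting is that your pairwise strategy (recoverability $\Rightarrow$ equality in the data-processing inequality for each pair $\{\phth,\vph\}$ $\Rightarrow$ cocycle factorization, and conversely cocycle factorization $\Rightarrow$ recovery of each $\phth$ by the $\vph$-dependent, $\theta$-independent Petz transpose channel $\widehat{\alpha}_\vph$) is precisely the mechanism of the cited reference itself, where the multi-state version is already proved; so your argument is not an independent proof but a correct and carefully bookkept derivation of the stated form from the two-state results of that reference, with the $\theta$-independence of $\widehat{\alpha}_\vph$ correctly identified as the step that glues the pairwise statements back into a statement about the whole experiment.
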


\begin{definition}[Refs.~\onlinecite{jencovapetz2006,Luczak2014,kuramochi2017minimal}]
\label{defi:msalg}
Let $\E = \seE$ be a statistical experiment and let $\N \subseteq \M$
be a von Neumann subalgebra of $\M .$
\begin{enumerate}
\item
$\N$ is called a sufficient subalgebra for $\E$ (in the sense of CP channel)
if there exists a normal channel $\alpha \in \ncpchset{\M}{\N}$
such that $\phth = \phth \circ \alpha $
for all $\thin .$
\item
$\N$ is called a minimal sufficient subalgebra 
for $\E$
(in the sense of CP channel)
if $\N$ is a sufficient subalgebra and included in any sufficient subalgebra.
\end{enumerate}
\end{definition}

\begin{definition}[Ref.~\onlinecite{kuramochi2017minimal}]
A statistical experiment $\E = \seE$ is called minimal sufficient 
if
$\phth \circ \alpha = \phth$
for all $\thin$ implies $\alpha = \id_\M$
for any normal channel $\alpha \in \ncpch{\M} .$
\end{definition}
It is known~\cite{kuramochi2017minimal} that any statistical experiment
is randomization-equivalent to a minimal sufficient statistical experiment
unique up to normal isomorphism.

If $\Theta \neq \varnothing$ is finite,
a minimal sufficient statistical experiment randomization-equivalent to
a statistical experiment $\E = \seE$ can be constructed as follows.~\cite{jencovapetz2006,gutajencova2007}
Define $\vph := |\Theta|^{-1} \sum_{\thin} \phth ,$
whose support projection $\s (\vph)$ coincides with $\bigvee_{\thin}\s (\phth) .$
Then by restricting the outcome algebra $\M$ to $\M_{\s (\vph)} ,$
we may assume that $\E$ is faithful.
Let $\M_0 \subseteq \M$ 
be the von Neumann subalgebra generated by 
the Connes' cocycle derivatives:
\[
	\M_0:=
	\{
	[D\phth , D \vph]_t , [D\phth , D \vph]_t^\ast
	\}_{(\theta , t ) \in \Theta \times \realn}^{\prime \prime} .
\]
Then $\M_0$ is a minimal sufficient subalgebra with respect to $\E$
and $\E_0 = \seEz$ is a minimal sufficient statistical experiment
randomization-equivalent to $\E ,$
where $\phthz$ is the restriction of $\phth$ to $\M_0 .$
From this construction we can see that,
if $\Theta$ is finite,
a statistical experiment $\E = \seE$ is minimal sufficient
if and only if $\M$ is generated by the Connes' cocycle derivatives
$\{
	[D\phth , D \vph]_t
	\}_{(\theta , t ) \in \Theta \times \realn} .
$

\subsection{Canonical states}
For the definition of the canonical state,
some notions of $\ast$-monoids are needed.

A $\ast$-monoid (or involutive monoid) $M$
is a monoid (i.e.\ a semigroup with a unit element $e$)
with a map $M\ni g \mapsto g^\ast \in M$ satisfying the following condition:
for any $g,h \in M ,$
\[
	g^{\ast \ast } = g ,
	\quad
	(gh)^{\ast} = h^\ast g^\ast .
\]
Such a map $g \mapsto g^\ast$ is called an involution on $M .$
For $\ast$-monoids $M$ and $N ,$
a map $\pi \colon M \to N$ is called a $\ast$-representation if
$\pi (gh) = \pi(g) \pi (h) $ and
$\pi (g^\ast) = \pi (g)^\ast$
for each $g,h \in M$
and $\pi$ is unit-preserving.

Let $M$ be a $\ast$-monoid.
A complex-valued functional $\omega \colon M \to \cmplx$
is called $\ast$-definite~\cite{lindahl1971}
if $(\omega(g_i^\ast g_j))_{1 \leq i,j \leq n}$
is a positive $n \times n$-matrix for all
integer $n \geq 1$ and all $\{ g_i \}_{i=1}^n \subseteq M .$
A $\ast$-definite functional $\omega$ on $M$ is called a state if
$\omega (e) =1 .$
A triple $(\cK , \pi , \xi)$ is called a Gelfand-Naimark-Segal-representation
(GNS-representation) of a
$\ast$-definite functional $\omega$ on $M $
if $\cK$ is a Hilbert space, $\pi \colon M \to \LK$ is a $\ast$-representation,
$\xi \in \cK$ is a vector such that
$
	\omega (g)
	=
	\braket{\xi|\pi (g) \xi}
$
$(g\in M)$
and the closed linear span of $\pi (M) \xi$ coincides with $\cK .$
GNS-representation of a $\ast$-definite functional on $M$ is, if exists,
unique up to unitary equivalence.

For a set $\Theta \neq \varnothing ,$ 
we denote by $M_\Theta$ the free $\ast$-monoid generated by 
$\Theta \times \realn$ satisfying the following condition: 
for each $(\theta , t) \in \Theta \times \realn ,$
\[
	(\theta , 0) (\theta , t) = (\theta ,t) ,
	\quad
	(\theta , t) (\theta , t)^\ast = (\theta , 0) .
\]
The $\ast$-monoid $M_\Theta$ can be characterized by the following universal property:
if $f \colon \Theta \times \realn \to N$ is a map to a $\ast$-monoid $N$ satisfying  
\[
	f(\theta , 0) f(\theta , t) = f(\theta ,t) ,
	\quad
	f(\theta , t) f(\theta , t)^\ast = f(\theta , 0) 
\]
for all $(\theta , t) \in \Theta \times \realn ,$
then there exists a unique $\ast$-representation 
$\pi \colon M_\Theta \to N$ satisfying 
$\pi ((\theta , t)) = f(\theta ,t) $
for all $(\theta , t) \in \Theta \times \realn .$

Let $\Theta \neq \varnothing$ be a finite set. 
For each statistical experiment $\E = \seE ,$
we define the canonical state $\omega_\E$ on $M_\Theta$ as follows.
We take a minimal sufficient statistical experiment 
$\E_0 = \seEz$ randomization-equivalent to
$\E $ and define 
$\phz := |\Theta|^{-1} \sum_{\thin} \phthz .$
Since $\phz$ is faithful,
by taking a GNS-representation of $\phz, $
we may assume that $\M_0$ is realized in a Hilbert space
$\cH_\E $ and $\phz (A) = \braket{\xi_\E | A \xi_\E}$
$(A \in \M_0)$ for a vector $\xi_\E \in \cH_\E$ that separates
$\M_0$ and $\M_0^\prime .$
Since the map
\[
	u \colon 
	\Theta \times \realn 
	\ni
	(\theta , t) \mapsto 
	[D \phthz , D \phz]_t
	\in
	( \M_0 )_1
\] 
satisfies 
\[
	u(\theta , 0) u(\theta , t) = u(\theta , t) ,
	\quad
	u(\theta , t) u(\theta , t)^\ast = \s (\phthz) = u(\theta , 0)
\]
for all $(\theta , t) \in \Theta \times \realn ,$
there exists a unique $\ast$-representation $\pi_\E \colon M_\Theta \to (\M_0)_1$
such that
$\pi_\E ((\theta , t)) =[D\phthz , D \phz]_t $
for all $(\theta , t) \in \Theta \times \realn .$
We define the canonical state $\omega_\E \in \linf (M_\Theta)$ on $M_\Theta$
by
\[
	\omega_\E (g)
	:=
	\braket{\xi_\E| \pi_\E (g) \xi_\E} 
	=
	\phz (\pi_\E (g)) 
	\quad
	(g \in M_\Theta) ,
\]
where $\linf (S)$ denotes the set of bounded complex-valued functions on
a set $S .$
By the minimal sufficiency, $\M_0$ is generated by $\pi_\E (M_\Theta),$
and hence $\cH_\E$ coincides with the closed linear span of 
$\pi_\E (M_\Theta) \xi_\E .$
Therefore $(\cH_\E  , \pi_\E  , \xi_\E)$ is a GNS-representation of $\omega_\E .$

The following proposition can be shown almost parallel as in 
Ref.~\onlinecite{gutajencova2007} (Theorem~3.5).
\begin{proposition}
\label{prop:canonical}
Let $\Theta \neq \varnothing$ be a finite set 
and let $\E = \seE$ and $\F = \seF$
be statistical experiments.
Then $\E \eqcp \F$ if and only if $\omega_\E = \omega_\F .$
\end{proposition}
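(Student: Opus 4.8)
The plan is to reduce everything to the minimal sufficient representatives and then transport information between the canonical state on $M_\Theta$ and the outcome algebra together with its average state, by means of Connes' cocycle derivative. Fix minimal sufficient statistical experiments $\E_0 = \seEz$ and $\F_0 = (\N_0 , \psthz : \thin)$ with $\E_0 \eqcp \E$ and $\F_0 \eqcp \F ,$ put $\phz := |\Theta|^{-1} \sum_{\thin} \phthz$ and $\psi^{(0)} := |\Theta|^{-1} \sum_{\thin} \psthz ,$ and take the GNS data $(\cH_\E , \pi_\E , \xi_\E)$ and $(\cH_\F , \pi_\F , \xi_\F)$ of $\omega_\E$ and $\omega_\F$ as in the construction above. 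I would use three facts: (a)~the minimal sufficient statistical experiment randomization-equivalent to a given one is unique up to a normal $\ast$-isomorphism intertwining the two families of states; (b)~if $\vph$ is a faithful normal state on a von Neumann algebra, then a normal state $\psi$ is determined by its cocycle derivative $([D\psi , D\vph]_t)_{t \in \realn}$, and in the possibly non-faithful case one first recovers $\s (\psi) = u_0 u_0^\ast$ from $u_0 := [D\psi , D\vph]_0$ by Definition~\ref{defi:connes}(ii); and (c)~a normal $\ast$-isomorphism $\Phi$ intertwines cocycle derivatives, $\Phi([D\mu , D\nu]_t) = [D(\mu \circ \Phi^{-1}) , D(\nu \circ \Phi^{-1})]_t .$

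For the implication $\E \eqcp \F \implies \omega_\E = \omega_\F$: by transitivity of $\eqcp$ we have $\E_0 \eqcp \F_0 ,$ so by~(a) there is a normal $\ast$-isomorphism $\Phi \colon \M_0 \to \N_0$ with $\psthz \circ \Phi = \phthz$ for all $\thin ,$ whence $\psi^{(0)} \circ \Phi = \phz$ and $\phz \circ \Phi^{-1} = \psi^{(0)} .$ By~(c), applied with reference state $\phz ,$ we get $\Phi([D\phthz , D\phz]_t) = [D\psthz , D\psi^{(0)}]_t$ for all $\thin$ and $t \in \realn .$ Since $\pi_\E (M_\Theta) \subseteq (\M_0)_1$ and $\Phi$ is isometric, $\Phi \circ \pi_\E$ is a $\ast$-representation of $M_\Theta$ into $(\N_0)_1$ sending each generator $(\theta , t)$ to $[D\psthz , D\psi^{(0)}]_t ;$ by the universal property of $M_\Theta$ it therefore coincides with $\pi_\F .$ Hence, for every $g \in M_\Theta ,$
\[
	\omega_\E (g) = \phz (\pi_\E (g)) = \psi^{(0)} (\Phi (\pi_\E (g))) = \psi^{(0)} (\pi_\F (g)) = \omega_\F (g) .
\]

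For the converse, assume $\omega_\E = \omega_\F .$ Since $(\cH_\E , \pi_\E , \xi_\E)$ and $(\cH_\F , \pi_\F , \xi_\F)$ are GNS-representations of the same state on $M_\Theta ,$ the uniqueness of the GNS-representation gives a unitary $U \colon \cH_\E \to \cH_\F$ with $U \xi_\E = \xi_\F$ and $U \pi_\E (g) U^\ast = \pi_\F (g)$ $(g \in M_\Theta) .$ As $\M_0$ and $\N_0$ are generated by $\pi_\E (M_\Theta)$ and $\pi_\F (M_\Theta)$ by minimal sufficiency, $\Phi := U (\cdot) U^\ast$ restricts to a normal $\ast$-isomorphism $\M_0 \to \N_0$ with $\Phi (\pi_\E (g)) = \pi_\F (g) ,$ and, since $\xi_\E , \xi_\F$ are the GNS vectors of $\phz , \psi^{(0)} ,$ we obtain $\psi^{(0)} (\Phi (A)) = \braket{\xi_\F | U A U^\ast \xi_\F} = \braket{\xi_\E | A \xi_\E} = \phz (A)$ for $A \in \M_0 ,$ i.e.\ $\psi^{(0)} \circ \Phi = \phz$ and $\phz \circ \Phi^{-1} = \psi^{(0)} .$ Then
\[
	\Phi ([D\phthz , D\phz]_t) = \Phi (\pi_\E ((\theta , t))) = \pi_\F ((\theta , t)) = [D\psthz , D\psi^{(0)}]_t ,
\]
while by~(c) the left-hand side also equals $[D(\phthz \circ \Phi^{-1}) , D\psi^{(0)}]_t ;$ hence by~(b) we conclude $\phthz \circ \Phi^{-1} = \psthz ,$ i.e.\ $\psthz \circ \Phi = \phthz$ for all $\thin .$ Therefore $\Phi$ and $\Phi^{-1}$ are channels witnessing $\E_0 \eqcp \F_0 ,$ and together with $\E \eqcp \E_0$ and $\F \eqcp \F_0$ this yields $\E \eqcp \F .$

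The single step that genuinely goes beyond Ref.~\onlinecite{gutajencova2007} (Theorem~3.5), and the one I expect to be the main obstacle, is fact~(b) in the non-faithful case: one must verify that equality of the cocycle derivatives relative to the faithful average state forces equality of the support projections, and then equality of the faithful restrictions to that common support, by working through the boundary-value characterization of Definition~\ref{defi:connes}(iii). The remaining ingredients, namely the universal property of $M_\Theta$ and the covariance~(c) of cocycle derivatives under $\ast$-isomorphisms, are standard.
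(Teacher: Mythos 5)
Your proof is correct and takes essentially the approach the paper intends: the paper only defers to Ref.~\onlinecite{gutajencova2007} (Theorem~3.5), whose argument is exactly the combination you spell out, namely passing to minimal sufficient representatives, using covariance of Connes' cocycle derivatives under state-intertwining normal isomorphisms together with the universal property of $M_\Theta$ for one direction, and GNS uniqueness plus injectivity of the map $\psi \mapsto ([D\psi , D\vph]_t)_t$ for the other. The one place where ``almost parallel'' genuinely needs extra care is the non-faithful case of that injectivity, and you identify and resolve it correctly by recovering $\s(\psi)$ from $u_0$ and the state itself from the boundary values in Definition~\ref{defi:connes}(iii).
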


\subsection{Construction of $\E  (\Theta )$}
Proposition~\ref{prop:canonical} assures that we may identify 
the set of randomization equivalence classes of statistical experiments 
with the set of canonical states on $M_\Theta$
if $\Theta$ is finite.
Now we consider general $\Theta .$

For a statistical experiment $\E= \seE$
and a subset $\Xi \subseteq \Theta ,$
we define the restriction of $\E$ to $\Xi$ by
\[
	\E \rvert_\Xi
	:=
	(\M , \phth : \theta \in \Xi) .
\]
For a set $\Theta ,$
we denote the set of finite subsets of $\Theta$ by $\fth ,$
which is directed by the set inclusion $\subseteq .$

\begin{proposition}
\label{prop:finite}
Let $\E = \seE$ and $\F = \seF$ be statistical experiments.
Then $\E \cocp \F$ if and only if 
$\E \rvert_F \cocp \F \rvert_F$
for all $F \in \fth .$ 
\end{proposition}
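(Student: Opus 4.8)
The forward implication is immediate: if $\E \cocp \F$ via a channel $\alpha \in \cpchset{\M}{\N}$ with $\phth = \psth \circ \alpha$ for all $\thin$, then the same $\alpha$ witnesses $\E\rvert_F \cocp \F\rvert_F$ for every $F \in \fth$, since the defining identity holds in particular for $\theta \in F$. So the content is in the converse, and the obstacle is the non-uniqueness of the witnessing channels: for each finite $F$ we get some $\alpha_F \in \cpchset{\M}{\N}$ with $\phth = \psth \circ \alpha_F$ for $\theta \in F$, but these need not be compatible as $F$ grows, so we cannot simply take a ``union.''

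The plan is to extract a limit channel by a compactness argument, of exactly the type already recorded in the Preliminaries (the Tychonoff/ultraweak-compactness discussion for nets of channels into a von Neumann algebra). First I would, for each $F \in \fth$, fix a normal channel $\alpha_F \in \ncpchset{\M}{\N}$ with $\phth = \psth \circ \alpha_F$ for all $\theta \in F$; normality can be assumed by the remark following Definition~\ref{defi:randse}. View $(\alpha_F)_{F \in \fth}$ as a net indexed by the directed set $\fth$. Since each $\alpha_F(A)$ lies in the ultraweakly compact ball $(\N)_{\norm{A}}$, Tychonoff's theorem applied to $\prod_{A \in \M}(\N)_{\norm{A}}$ yields a subnet $(\alpha_{F(j)})_{j \in J}$ such that $(\alpha_{F(j)}(A))_{j}$ converges ultraweakly for each $A \in \M$. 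Define $\alpha(A) := \uwlim_{j \in J}\alpha_{F(j)}(A)$; as noted in the Preliminaries this $\alpha$ is a CP channel $\M \to \N$.

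It remains to check $\phth = \psth \circ \alpha$ for every $\thin$. Fix $\theta_0 \in \Theta$. Because $\{\theta_0\} \in \fth$ and $\fth$ is directed, the subnet $(F(j))_{j \in J}$ is eventually $\supseteq \{\theta_0\}$, so for all sufficiently large $j$ we have $\phi_{\theta_0} = \psi_{\theta_0}\circ \alpha_{F(j)}$, i.e.\ $\phi_{\theta_0}(A) = \psi_{\theta_0}(\alpha_{F(j)}(A))$ for all $A \in \M$. Since $\psi_{\theta_0}$ is a normal state, it is ultraweakly continuous, so passing to the limit along $j$ gives $\phi_{\theta_0}(A) = \psi_{\theta_0}(\alpha(A))$ for all $A \in \M$. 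As $\theta_0$ was arbitrary, $\alpha$ witnesses $\E \cocp \F$. The only mild subtlety — and the step I would state carefully — is the ``eventually'' claim for the subnet together with the interchange of the ultraweak limit with the normal state $\psi_{\theta_0}$; both are routine once one observes that a subnet of a net indexed by $\fth$ is still eventually above any fixed finite set, and that normality of states is precisely ultraweak continuity.
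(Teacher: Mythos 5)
Your argument is correct and follows essentially the same route as the paper: extract a channel $\alpha$ as an ultraweak pointwise limit of a Tychonoff subnet of $(\alpha_F)_{F\in\fth}$, then use that the subnet is eventually above any fixed $\{\theta\}$ together with the normality (ultraweak continuity) of $\psth$ to pass to the limit in $\phth = \psth\circ\alpha_{F(j)}$. The only difference is that you insist on taking each $\alpha_F$ normal, which is harmless but unnecessary, since the paper works with arbitrary CP channels $\alpha_F$ and only needs normality of the states $\psth$.
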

\begin{proof}
\lq\lq{}Only if\rq\rq{} part is obvious.
Assume $\E \rvert_F \cocp \F \rvert_F$
for all $F \in \fth .$
Then for each $F \in \fth$
there exists a channel $\alpha_F \in \cpchset{\M}{\N}$
such that $\phth = \psth \circ \alpha_F$
for all $\theta \in F .$
By Tychonoff's theorem, 
there exist a subnet $(\alpha_{F(i)})_{i \in I}$
and a channel $\alpha \in \cpchset{\M}{\N}$
such that $\alpha_{F(i)} (A) \xrightarrow{uw} \alpha (A) $
for each $A \in \M ,$
where $\xrightarrow{uw}$ denotes the ultraweak convergence.
Since $\theta \in F(i)$ eventually for each $\thin ,$
we have
\[
	\psth \circ \alpha (A)
	=
	\uwlim_{i \in I , \{ \theta \} \subseteq  F(i) }
	\psth \circ \alpha_{F(i)} (A)
	=
	\phth (A) 
\]
for each $A \in \M$ and each $\thin ,$
where 
we used the normality of $\psth $
in the first equality.
Therefore $\E \cocp \F .$
\end{proof}
Now we define the set $\ETh$ of equivalence classes of statistical experiments
for an arbitrary parameter set $\Theta \nonempty .$ 
If $\Theta$ is finite, we define 
$
	\ETh := \set{\omega_\E \in \linf (M_\Theta) | \E \in \exper (\Theta)}
$
and write $[\E] := \omega_\E$ for each $\E \in \exper (\Theta ) .$
If $\Theta$ is infinite,
we define $\ETh$ as the image of the following class-to-set map:
\[
	\exper (\Theta)
	\ni
	\E
	\longmapsto
	[\E]
	:=
	([\E \rvert_F])_{F \in \fth}
	\in
	\prod_{F \in \fth} \E (F) .
\]
Then, for any $\Theta \nonempty $ and $\E , \F \in \exper (\Theta) ,$
Propositions~\ref{prop:canonical} and \ref{prop:finite}
imply that
$\E \eqcp \F$ if and only if $[\E] = [\F] .$
Furthermore, the map $\exper (\Theta) \ni \E \mapsto [\E] \in \ETh$ is surjective.
Therefore we may regard $\ETh$ as the set of equivalence classes of 
statistical experiments with the parameter set $\Theta .$
For a set $\Theta \nonempty ,$ we define a partial order $\cocp$ on $\ETh$
by 
$[\E] \cocp [\F]$
$: \defarrow$
$\E \cocp \F $
$([\E] , [\F] \in \ETh ) .$
We also define the set of equivalence classes of classical statistical experiments by
\[
	\EcTh := \set{[\E] \in \ETh |  \text{the outcome space of $\E$ is commutative}}.
\]

The above definition of $\ETh$ immediately leads to the definition 
of the set of equivalence classes of normal channels as follows.
For a von Neumann algebra $\Min, $
let $\ncpchset{}{\Min}$ denote the class of normal channels with the input space
$\Min. $ 
For a von Neumann algebra $\Min$
and a normal channel 
$\Lambda \in \ncpchset{}{\Min} ,$ 
we define
$[\Lambda] := [\E_\Lambda] \in \E (\Ss (\Min))$
and 
\[	
	\CH (\Min)
	:= 
	\set{[\Lambda] \in \E (\Ss (\Min))  | 
	 	\Lambda \in \ncpchset{}{\Min}
	}
	=
	\downarrow [\E_{\id_{\Min}}],
\]
where the last equality follows from Lemma~\ref{lemm:che}.
Then the map $\ncpchset{}{\Min} \ni \Lambda \mapsto [\Lambda] \in \CH (\Min)$
is surjective. Furthermore, for each $\Lambda , \Gamma \in \ncpchset{}{\Min} ,$
$\Lambda \eqcp \Gamma$ if and only if $[\Lambda] = [\Gamma] .$
Therefore we may regard $\CH (\Min)$ as the set of equivalence classes of 
normal channels with the input space $\Min .$
We define a partial order $\cocp$ on $\CH (\Min)$ by
$[\Lambda ] \cocp [\Gamma]$
$:\defarrow$
$\Lambda \cocp \Gamma$
$([\Lambda] , [\Gamma] \in \CH (\Min)) .$
We also define the set of equivalence classes of QC channels by
\[
	\CHqc (\Min)
	:=
	\set{
	[\Lambda] \in \CH (\Min)
	|
	\text{the outcome space of $\Lambda$ is commutative}
	} .
\]

\begin{remark}\label{rem:otherorder}
The well-definedness of $\ETh$ established above immediately implies
that 
the set of equivalence classes of statistical experiments 
with respect to any equivalence relation less restrictive than $\eqcp$
is also well-defined.
Examples of such equivalence relations are
those induced by normal positive channels and
statistical morphisms.~\cite{kaniowski2013quantum,Buscemi2012}
Note that normal $n$-positive channels for $n \geq 2$ or normal Schwarz channels 
induce exactly the same equivalence relation as $\eqcp$
(Ref.~\onlinecite{kuramochi2017minimal}, Corollary~1).
\end{remark}

\section{Directed-completeness of statistical experiments and channels} \label{sec:main}
In this section, we prove the following two theorems, 
which are the main results of this paper.

\begin{theorem}
\label{theo:mainch}
Let $\Min$ be a von Neumann algebra acting on a Hilbert space $\Hin. $
\begin{enumerate}
\item
$\CH (\Min)$ is an upper and lower dcpo.
\item
$\CHqc (\Min)$ is an upper and lower Dedekind-closed subset of $\CH (\Min) .$ 
\end{enumerate}
\end{theorem}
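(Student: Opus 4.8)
The plan is to reduce the whole statement, for an arbitrary von Neumann algebra $\Min$, to the case of increasing transfinite sequences in $\CH(\LHin)$, construct their suprema by a \cstar-inductive-limit argument, and pass to infima by channel conjugation.

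First, two reductions. By Proposition~\ref{prop:dc} and Lemma~\ref{lemm:dcsubset} it is enough to verify each of the four assertions (upper and lower dcpo of $\CH(\Min)$, upper and lower Dedekind-closedness of $\CHqc(\Min)$) for increasing transfinite sequences $([\Lambda_\alpha])_{\alpha<\alpha_0}$. Moreover, post-composition with the inclusion $\iota\colon\Min\hookrightarrow\LHin$ is an order-embedding $\CH(\Min)\hookrightarrow\CH(\LHin)$, $[\Lambda]\mapsto[\iota\circ\Lambda]$, whose image is the down-set $\downarrow[\iota]$; this set has $[\iota]$ as greatest element, and it matches $\CHqc(\Min)$ with $\CHqc(\LHin)\cap\downarrow[\iota]$. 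A down-set with a greatest element inherits the upper and lower dcpo properties of the ambient poset (an increasing net lies below the top; a down-set contains every infimum of a subset of it that exists in the ambient poset), and the displayed intersection likewise inherits Dedekind-closedness; hence it suffices to prove all four assertions for $\Min=\LHin$, with $\Hin$ arbitrary.

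The crux is the construction of $\sup_\alpha[\Lambda_\alpha]$ in $\CH(\LHin)$, i.e.\ Lemma~\ref{lemm:inch}. Fix normal representatives $\Lambda_\alpha\in\ncpchset{\M_\alpha}{\LHin}$ and normal channels $\gamma_{\beta\gets\alpha}\colon\M_\alpha\to\M_\beta$ with $\Lambda_\alpha=\Lambda_\beta\circ\gamma_{\beta\gets\alpha}$ for $\alpha\le\beta$. These connecting channels are in general neither $\ast$-homomorphisms nor consistent, so they cannot be fed into the \cstar-inductive-limit machinery directly; the key step, which I expect to be the main obstacle, is to replace the $\Lambda_\alpha$ by randomization-equivalent representatives — on minimal-sufficient outcome algebras, or built from minimal Stinespring dilations — for which the outcome algebras and connecting maps assemble into a genuine inductive system $(\M_\alpha,\pi_{\beta\gets\alpha})_{\alpha\le\beta}$ of \cstar-algebras with \emph{faithful} connecting representations. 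Granting this, form the algebraic and \cstar-inductive limits $\widetilde{\A}_0=\bigcup_\alpha\pi_\alpha(\M_\alpha)\subseteq\widetilde{\A}$ with principal representations $\pi_\alpha$; define $\widetilde{\Lambda}\colon\widetilde{\A}_0\to\LHin$ by $\widetilde{\Lambda}(\pi_\alpha(A)):=\Lambda_\alpha(A)$, which is well-defined and contractive, hence extends to a channel $\widetilde{\Lambda}\in\cpchset{\widetilde{\A}}{\LHin}$; and take its normal extension $\overline{\widetilde{\Lambda}}\in\ncpchset{\widetilde{\A}^{\ast\ast}}{\LHin}$. Then $[\overline{\widetilde{\Lambda}}]=\sup_\alpha[\Lambda_\alpha]$: it is an upper bound since $\Lambda_\alpha=\widetilde{\Lambda}\circ\pi_\alpha\cocp\widetilde{\Lambda}\cocp\overline{\widetilde{\Lambda}}$; and it is least since, for any normal $\Gamma$ with $\Lambda_\alpha\cocp\Gamma$ for all $\alpha$, the witnessing channels agree on the dense $\ast$-subalgebra $\widetilde{\A}_0$, extend to a channel on $\widetilde{\A}$, and therefore — the normal extension being the least normal channel that dominates $\widetilde{\Lambda}$ — factor through $\overline{\widetilde{\Lambda}}$, whence $\overline{\widetilde{\Lambda}}\cocp\Gamma$. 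If every $\M_\alpha$ is commutative, the representatives can be chosen with commutative outcome algebras, so $\widetilde{\A}$ and $\widetilde{\A}^{\ast\ast}$ are commutative and $\overline{\widetilde{\Lambda}}$ is a QC channel; with Lemma~\ref{lemm:dcsubset} this gives the upper-Dedekind-closedness of $\CHqc(\LHin)$.

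For the lower halves, on $\CH(\LHin)$ the map $[\Lambda]\mapsto[\Lambda^c]$ is a well-defined, order-reversing involution by Proposition~\ref{prop:conj}. Applying it to a decreasing transfinite sequence $([\Lambda_\alpha])$ in $\CH(\LHin)$ produces an increasing one; if $[\Sigma]$ denotes its supremum from the previous paragraph, then $[\Sigma^c]=\inf_\alpha[\Lambda_\alpha]$, so $\CH(\LHin)$ is a lower dcpo. Since $(\cdot)^c$ need not preserve commutative outcome algebras, the commutativity of the infimum I would establish directly, dualizing the supremum construction: realize a decreasing sequence of QC channels, up to equivalence, as a decreasing chain of commutative von Neumann subalgebras, and take the intersection as the outcome algebra of the infimum; Lemma~\ref{lemm:dcsubset} then yields lower-Dedekind-closedness of $\CHqc(\LHin)$. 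Combined with the reductions of the first step, this proves Theorem~\ref{theo:mainch}. Besides the inductive-system step already flagged, the remaining care points are checking that $[\overline{\widetilde{\Lambda}}]$ and the dual construction are independent of all the choices, and that passing to commutative outcome algebras does not destroy the relevant universal property.
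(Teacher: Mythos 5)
Your overall skeleton (reduce to $\LHin$ and to transfinite sequences, build suprema by a \cstar-inductive-limit plus normal extension, get infima by conjugation) is the same as the paper's, but the two places you defer are precisely where the real work lies, and neither is closed by what you write. First, the step you label ``Granting this'' --- turning a merely increasing transfinite sequence $([\Lambda_\alpha])$ into a genuine inductive system with \emph{faithful $\ast$-homomorphic} connecting maps --- is the technical heart of the paper (Lemma~\ref{lemm:embed} together with the transfinite recursion in Lemma~\ref{lemm:inch}). Your suggested fixes do not obviously work: minimal sufficient representatives are only unique up to isomorphism for \emph{equivalent} channels, and for strictly ordered ones the connecting map is still just a channel, not a representation. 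The paper's solution is different in kind: given $\Lambda_1=\Lambda_2\circ\Phi$, it takes a Stinespring dilation of $\Phi$ with faithful $\pi_1$, enlarges the outcome algebra to the von Neumann algebra generated by $\pi_1(\A_1)\cup V_1\M_2 V_1^\ast$, and shows the enlarged channel is still $\eqcp\Lambda_2$; this has to be done by transfinite recursion because at limit ordinals one must re-embed the \cstar-inductive limit, using Lemma~\ref{lemm:increasing_alg} to know the limit channel is still dominated by $\Lambda_\gamma$. Relatedly, your verification that $\overline{\widetilde{\Lambda}}$ is the \emph{least} upper bound is flawed as stated: the channels witnessing $\Lambda_\alpha\cocp\Gamma$ live on different algebras and have no reason to agree on $\widetilde{\A}_0$; one needs the ultraweak-compactness (Tychonoff) argument of Lemma~\ref{lemm:increasing_alg} to extract a single channel, not a gluing of the witnesses.

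Second, your route to lower Dedekind-closedness of $\CHqc(\LHin)$ --- ``dualize the supremum construction'' and realize a decreasing sequence of QC channels, up to equivalence, as restrictions of one channel to a \emph{decreasing chain of commutative} von Neumann subalgebras, then intersect --- is an unjustified and, as far as I can see, unavailable step. The realization one actually gets (paper: proof of Lemma~\ref{lemm:decreasingqc}, and Corollary~\ref{coro:const}~1(ii)) is by restrictions to the commutants $\widetilde{\pi}(\A_\alpha)^\prime$ inside a Stinespring dilation of the supremum of the conjugates, and these commutants are generally \emph{noncommutative} even when every $\Lambda_\alpha$ is QC; there is no reason the chain can be re-chosen commutative, and if it could, the problem would be trivial. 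The paper instead recovers commutativity of the infimum indirectly: each $\widetilde{\Lambda}_\alpha\eqcp\Lambda_\alpha$ is broadcastable by the no-broadcasting theorem (Lemma~\ref{lemm:nobroadcast}), a compactness argument passes the broadcasting channels to the intersection algebra, and the no-broadcasting theorem is applied again to conclude the infimum is equivalent to a QC channel. This missing idea (the broadcastability characterization of QC-equivalence) is essential, so as it stands your proposal does not prove part~2 in the lower direction, and part~1's supremum construction rests on an unproved reduction.
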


\begin{theorem}
\label{theo:mainex}
Let $\Theta \nonempty$ be a set.
\begin{enumerate}
\item
$\ETh$ is an upper and lower dcpo.
\item
$\EcTh$ is an upper and lower Dedekind-closed subset of 
$\ETh .$
\end{enumerate}
\end{theorem}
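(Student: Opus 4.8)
The plan is to deduce Theorem~\ref{theo:mainex} from Theorem~\ref{theo:mainch} by realizing $\ETh$ as a principal lower set inside $\CH (\LlTh) .$ First I would use the assignment $\E \mapsto \Lambda_\E$ together with the remarks preceding Lemma~\ref{lemm:ech} — which record that $\E \cocp \F$ (respectively $\E \eqcp \F$) is equivalent to $\Lambda_\E \cocp \Lambda_\F$ (respectively $\Lambda_\E \eqcp \Lambda_\F$) — to check that
\[
	\Phi \colon \ETh \ni [\E] \longmapsto [\Lambda_\E] \in \CH (\LlTh)
\]
is a well-defined, injective, order-preserving and order-reflecting map. By Lemma~\ref{lemm:ech}, a class $[\Lambda] \in \CH (\LlTh)$ lies in the image of $\Phi$ precisely when $\Lambda \cocp \condi_\Theta ,$ i.e.\ when $[\Lambda] \cocp [\condi_\Theta] ;$ hence $\Phi$ is an order isomorphism of $\ETh$ onto $\downarrow [\condi_\Theta]$ in $\CH (\LlTh) .$ Since $\Lambda_\E$ has the same outcome space as $\E ,$ and since Lemma~\ref{lemm:ech} also shows that any channel with input space $\LlTh$ representing an element of $\downarrow [\condi_\Theta]$ is of the form $\Lambda_\F$ for some $\F \in \exper (\Theta) ,$ this isomorphism restricts to an identification of $\EcTh$ with $\CHqc (\LlTh) \cap \downarrow [\condi_\Theta] .$

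Next I would record the elementary order-theoretic fact that these properties descend to principal lower sets: if $(X,\leq)$ is an upper (respectively lower) dcpo and $a \in X ,$ then $\downarrow a$ is again an upper (respectively lower) dcpo, and suprema (respectively infima) of directed subsets of $\downarrow a$ are computed as in $X ,$ because $a$ bounds any directed $D \subseteq \, \downarrow a$ from above (so $\sup_X D \leq a$) and, dually, $\inf_X D \leq d \leq a$ for any $d \in D .$ Taking $X = \CH (\LlTh) ,$ which is an upper and lower dcpo by Theorem~\ref{theo:mainch}(1), and $a = [\condi_\Theta] ,$ and transporting along $\Phi ,$ yields Theorem~\ref{theo:mainex}(1). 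For part~(2), let $D$ be an upper (respectively lower) directed subset of $\EcTh ;$ by part~(1) its supremum (respectively infimum) exists in $\ETh ,$ and transporting along $\Phi$ it coincides with the supremum (respectively infimum), computed in $\CH (\LlTh) ,$ of the image of $D ,$ a directed subset of $\CHqc (\LlTh) \cap \downarrow [\condi_\Theta] \subseteq \CHqc (\LlTh) .$ By Theorem~\ref{theo:mainch}(2) this limit lies in $\CHqc (\LlTh) ,$ and by part~(1) it lies in $\downarrow [\condi_\Theta] ,$ hence in the intersection, which pulls back under $\Phi$ to $\EcTh .$ Thus $\EcTh$ is upper and lower Dedekind-closed in $\ETh .$

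All the substantive content — the existence of the supremum and infimum of a randomization-monotone net of normal channels, and the preservation of commutativity of the outcome space — is supplied by Theorem~\ref{theo:mainch}, so here the work is purely bookkeeping at the level of equivalence classes. The step I expect to need the most care is the identification of $\EcTh$ with $\CHqc (\LlTh) \cap \downarrow [\condi_\Theta] ,$ specifically the claim that every QC channel with input space $\LlTh$ dominated by $\condi_\Theta$ in the randomization order actually arises as $\Lambda_\F$ for a classical statistical experiment $\F ;$ this is exactly the content of Lemma~\ref{lemm:ech}, so the obstacle is mild.
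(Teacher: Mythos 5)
Your proposal is correct and takes essentially the same route as the paper: the paper's proof likewise sends a monotone net $([\E_i])$ to $([\Lambda_{\E_i}])$ in $\CH(\LlTh)$, invokes Theorem~\ref{theo:mainch} for the supremum/infimum $[\tL]$, notes $\tL \cocp \condi_\Theta$, and uses Lemma~\ref{lemm:ech} to write $\tL = \Lambda_{\tE}$, which is exactly your identification of $\ETh$ with the principal lower set $\downarrow [\condi_\Theta]$ phrased element-wise. Your more structural packaging (order isomorphism onto $\downarrow[\condi_\Theta]$, plus the observation that $\EcTh$ corresponds to $\CHqc(\LlTh) \cap \downarrow[\condi_\Theta]$) only makes explicit a small step the paper leaves implicit in part~(2), but the substance is identical.
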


We split the proof into some lemmas.

We first consider increasing normal channels.
The following two lemmas are essential for 
the construction of a supremum.
\begin{lemma}
\label{lemm:increasing_alg}
Let $\Lambda \in \cpchset{\A}{\LHin}$ be a channel,
let $\Gamma \in \ncpchset{\N}{\LHin}$ be a normal channel,
let $(\A_i)_{i\in I}$ be a net of unital \cstar-subalgebras of
$\A $
such that $\A_{i_1} \subseteq \A_{i_2}$ for each 
$i_1 \leq i_2 $
and 
$\A_0 := \bigcup_{i \in I} \A_i $ is a norm dense $\ast$-subalgebra of $\A ,$
and let $\Lambda_i$ be the restriction of $\Lambda$
to $\A_i .$
Suppose that $\Lambda_i \cocp \Gamma$ for all $i \in I.$
Then $\Lambda \cocp \Gamma .$
\end{lemma}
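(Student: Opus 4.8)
The plan is to obtain the channel $\alpha \in \cpchset{\A}{\N}$ with $\Lambda = \Gamma \circ \alpha$ as an ultraweak limit of the channels witnessing the randomizations $\Lambda_i \cocp \Gamma .$ By hypothesis, for each $i \in I$ there is a channel $\alpha_i \in \cpchset{\A_i}{\N}$ with $\Lambda_i = \Gamma \circ \alpha_i ,$ that is, $\Lambda (A) = \Gamma (\alpha_i (A))$ for all $A \in \A_i .$ The obstacle is that the channels $\alpha_i$ have \emph{different} domains $\A_i ,$ so one cannot directly apply the ultraweak compactness argument of Sec.~\ref{sec:prel} to the net $(\alpha_i)_{i \in I} ;$ I expect dealing with this to be the main point, the remaining verifications being routine.

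To get around it, recall that a channel is unital and positive, hence contractive, so $\alpha_i (A) \in (\N)_{\norm{A}}$ whenever $A \in \A_i .$ Define $\hat{\alpha}_i \colon \A_0 \to \N$ by $\hat{\alpha}_i (A) := \alpha_i (A)$ if $A \in \A_i$ and $\hat{\alpha}_i (A) := 0$ otherwise, so that $\hat{\alpha}_i$ is a point of the product $\prod_{A \in \A_0} (\N)_{\norm{A}} ,$ which is ultraweakly compact by Tychonoff's theorem. Passing to a subnet, we may assume that $\hat{\alpha}_i (A)$ converges ultraweakly for every $A \in \A_0 ;$ call the limit $\alpha_0 (A) .$ Since each fixed $A \in \A_0$ lies in $\A_i$ for all sufficiently large $i ,$ along this subnet the net $\alpha_i (A)$ is eventually defined and also converges ultraweakly to $\alpha_0 (A) .$

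Next I would verify that $\alpha_0$ is a unital CP map on the $\ast$-subalgebra $\A_0 :$ linearity, unitality, and $n$-positivity each follow by choosing $i$ (along the subnet) large enough that the finitely many operators involved lie in $\A_i ,$ invoking the corresponding property of the channel $\alpha_i ,$ and passing to the ultraweak limit, using that the positive cone of $M_n (\N)$ is ultraweakly closed. Being contractive, $\alpha_0$ extends uniquely to a bounded linear map $\alpha \colon \A \to \N$ (its range stays in $\N$ since $\N$ is norm closed); complete positivity survives this norm-continuous extension because the positive elements of $M_n (\A_0)$ are norm dense in those of $M_n (\A) ,$ and $\alpha (\unit_\A) = \unit .$ Hence $\alpha \in \cpchset{\A}{\N} .$

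Finally, for $A \in \A_0$ and $i$ large along the subnet we have $\Lambda (A) = \Gamma (\alpha_i (A)) ;$ since $\Gamma$ is normal and $\alpha_i (A)$ converges ultraweakly to $\alpha (A) ,$ passing to the limit gives $\Lambda (A) = \Gamma (\alpha (A)) .$ As $\Lambda$ and $\Gamma \circ \alpha$ are both norm continuous and agree on the norm-dense $\ast$-subalgebra $\A_0 ,$ they coincide on all of $\A ,$ so $\Lambda = \Gamma \circ \alpha$ and therefore $\Lambda \cocp \Gamma .$
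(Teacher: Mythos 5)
Your proposal is correct and follows essentially the same route as the paper's proof: extend each witnessing channel by zero to a map on $\A_0 ,$ use Tychonoff's theorem to extract an ultraweakly convergent subnet, verify that the limit is a unital CP map on $\A_0$ (using that finitely many elements of $\A_0$ eventually lie in a common $\A_i$), extend by norm continuity to a channel on $\A ,$ and conclude $\Lambda = \Gamma \circ \alpha$ from the normality of $\Gamma$ and norm density of $\A_0 .$ No gaps; the only cosmetic difference is that you spell out the contractivity and CP-extension details slightly more explicitly than the paper does.
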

\begin{proof}
By assumption, for each $i \in I$ 
there exists a channel 
$\Psi_i \in \cpchset{\A_i}{\N}$
such that
$\Lambda_i = \Gamma \circ \Psi_i . $
For each $i \in I$ we define a map 
$\widetilde{\Psi}_i \colon \A_0 \to \N$
by
\[
	\widetilde{\Psi}_i (A)
	:=
	\begin{cases}
	\Psi_i(A) 
	&\text{if $A \in \A_i ;$}
	\\
	0
	&
	\text{otherwise.}
	\end{cases}
\]
By Tychonoff's theorem,
we can take a subnet 
$(\widetilde{\Psi}_{i(j)})_{j \in J}$
such that the ultraweak limit
\[
	\Psi_{0} (A) := \uwlim_{j \in J}  \widetilde{\Psi}_{i(j) } (A)
	\in
	(\N)_{\norm{A}}
\]
exists for each $A \in \A_0 .$
Since
$\widetilde{\Psi}_{i(j)} (c_1A +c_2B) =  
c_1 \widetilde{\Psi}_{i(j)} (A)
+
c_2 \widetilde{\Psi}_{i(j)} (B)  $ 
eventually for each $A, B \in \A_0$ and each $c_1 ,c_2 \in \cmplx, $
$\Psi_{0}$ is a bounded linear map.
From the complete positivity of $\Psi_i$ 
$(i \in I) ,$
we can show that $\Psi_{0} $ is CP 
in the following sense:
for each $n \geq 1$ and each
$(A_k)_{k=1}^n \subseteq \A_{0} ,$
the $n \times n$ matrix
$(\Psi_{0}  (A_i^\ast A_j)  )_{i, j =1}^n$
is positive.
Hence $\Psi_{0}$
uniquely extends to a CP channel
$\Psi \in \cpchset{\A}{\N } .$
Then for each $i \in I$
and each $A \in \A_i ,$
\[
	\Gamma \circ \Psi (A)
	=
	\uwlim_{j \in J }
	\Gamma (
	\widetilde{\Psi}_{i(j)} (A)
	)
	=
	\uwlim_{j \in J , i(j) \geq i}
	\Gamma (
	\Psi_{i(j)} (A)
	)
	=
	\uwlim_{j \in J , i(j) \geq i}
	\Lambda_{i(j)} (A)
	=
	\Lambda (A) ,
\]
where we used the normality of
$\Gamma$
in the first equality.
Since $\A_{0}$ is norm dense in $\A ,$
this implies 
$\Lambda = \Gamma \circ \Psi \cocp \Gamma . $
\end{proof}

\begin{lemma}
\label{lemm:embed}
Let $\Lambda_1 \in \cpchset{\A_1}{\LHin}$ be a channel
and let $\Lambda_2 \in \ncpchset{\M_2}{\LHin}$
be a normal channel.
Assume $\Lambda_1 \cocp \Lambda_2 .$
Then there exist a von Neumann algebra $\tM_2 ,$
a faithful representation 
$\pi_1 \colon \A_1 \to \tM_2 ,$
and a normal channel 
$\tL_2 \in \ncpchset{\tM_2}{\LHin}$
such that
$\tL_2 \eqcp \Lambda_2$
and
$\Lambda_1 = \tL_2 \circ \pi_1 .$
If both $\A_1$ and $\M_2$ are commutative,
$\tM_2$ can be taken to be commutative.
\end{lemma}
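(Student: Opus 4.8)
The plan is to dilate $\Lambda_1$ \emph{through} $\Lambda_2$ and then enlarge the resulting outcome algebra so that it carries a faithful copy of $\A_1 .$ By definition of $\cocp$ there is a channel $\beta \in \cpchset{\A_1}{\M_2}$ with $\Lambda_1 = \Lambda_2 \circ \beta .$ I would take a Stinespring representation $(\cK_2 , \pi_2 , V_2)$ of $\Lambda_2$ with $\pi_2$ a \emph{faithful normal} representation of $\M_2$ (possible by the inflation construction recalled in Sec.~\ref{sec:prel}, even when $\Lambda_2$ is not faithful), so that $\Lambda_2 (B) = V_2^\ast \pi_2 (B) V_2$ and $\pi_2^{-1} \colon \pi_2 (\M_2) \to \M_2$ is a normal channel. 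The composite $\pi_2 \circ \beta$ is a channel from $\A_1$ into $\pi_2 (\M_2) \subseteq \LKtwo ;$ let $(\cK_1 , \sigma_1 , W_1)$ be a \emph{minimal} Stinespring representation of it, with $W_1 \colon \cK_2 \to \cK_1$ an isometry, so that $W_1^\ast \sigma_1 (A) W_1 = \pi_2 (\beta (A))$ and hence $(W_1 V_2)^\ast \sigma_1 (A)(W_1 V_2) = \Lambda_1 (A)$ for all $A \in \A_1 .$

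The main tool is commutant lifting for the minimal Stinespring dilation: there is a normal unital $\ast$-homomorphism $\delta \colon \pi_2 (\M_2)^\prime \to \sigma_1 (\A_1)^\prime$ characterized by $\delta (D) W_1 = W_1 D$ (set $\delta (D) \sigma_1 (A) W_1 \xi := \sigma_1 (A) W_1 D \xi ;$ well-definedness uses that $D$ and $D^\ast$ commute with the range of $\pi_2 \circ \beta \subseteq \pi_2 (\M_2) ,$ and normality is checked on matrix elements over the dense set $\sigma_1 (\A_1) W_1 \cK_2$). Put $\tM_2^{\,0} := \delta (\pi_2 (\M_2)^\prime)^\prime \subseteq \LKone$ and let $\tL_2^{\,0}$ be the normal channel $\tM_2^{\,0} \ni X \mapsto (W_1 V_2)^\ast X (W_1 V_2) \in \LHin .$ Since $\mathrm{ran}\, \delta \subseteq \sigma_1 (\A_1)^\prime$ we have $\sigma_1 (\A_1) \subseteq \tM_2^{\,0} ,$ whence $\tL_2^{\,0} \circ \sigma_1 = \Lambda_1 ;$ and from $\delta (D) W_1 = W_1 D$ together with its adjoint $W_1^\ast \delta (D) = D W_1^\ast$ one checks directly that $W_1 \pi_2 (\M_2) W_1^\ast \subseteq \tM_2^{\,0}$ and $W_1^\ast \tM_2^{\,0} W_1 \subseteq \pi_2 (\M_2)^{\prime\prime} = \pi_2 (\M_2) .$ Hence $\gamma := \pi_2^{-1} \circ (X \mapsto W_1^\ast X W_1) \in \ncpchset{\tM_2^{\,0}}{\M_2}$ satisfies $\Lambda_2 \circ \gamma = \tL_2^{\,0}$ (using $W_1^\ast W_1 = \unit_{\cK_2}$), while, for a fixed normal state $\tau$ on $\M_2 ,$ the map $\eta (B) := W_1 \pi_2 (B) W_1^\ast + \tau (B)(\unit_{\cK_1} - W_1 W_1^\ast)$ lies in $\ncpchset{\M_2}{\tM_2^{\,0}}$ and satisfies $\tL_2^{\,0} \circ \eta = \Lambda_2 ;$ therefore $\tL_2^{\,0} \eqcp \Lambda_2 .$

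It remains to make the representation of $\A_1$ faithful and, in the commutative case, to keep the outcome algebra commutative. For the first, set $\tM_2 := \tM_2^{\,0} \oplus \A_1^{\ast\ast} ,$ $\pi_1 := \sigma_1 \oplus \iota$ with $\iota \colon \A_1 \hookrightarrow \A_1^{\ast\ast}$ the canonical embedding, and $\tL_2 := \tL_2^{\,0} \circ p$ with $p \colon \tM_2 \to \tM_2^{\,0}$ the canonical projection; then $\pi_1$ is a faithful representation into $\tM_2 ,$ $\tL_2 \in \ncpchset{\tM_2}{\LHin}$ satisfies $\tL_2 \circ \pi_1 = \Lambda_1 ,$ and $\gamma \circ p$ together with $B \mapsto \eta (B) \oplus \tau (B) \unit$ witness $\tL_2 \eqcp \Lambda_2 .$ For the second, assume $\A_1$ and $\M_2$ commutative. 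Then $\sigma_1 (\A_1)$ is commutative (a homomorphic image of $\A_1$) and $\pi_2 (\M_2) \subseteq \pi_2 (\M_2)^\prime ,$ so $\delta$ is defined on $\pi_2 (\M_2)$ and $\pi_2^\prime := \delta \circ \pi_2 \colon \M_2 \to \sigma_1 (\A_1)^\prime$ is a normal unital $\ast$-homomorphism with commutative range commuting with $\sigma_1 (\A_1) ;$ replacing $\tM_2^{\,0}$ by $(\sigma_1 (\A_1) \cup \pi_2^\prime (\M_2))^{\prime\prime}$ — which is commutative and still contained in $\delta (\pi_2 (\M_2)^\prime)^\prime$ — the argument above goes through verbatim with $\eta$ replaced by the $\ast$-homomorphism $\pi_2^\prime$ (for which $\tL_2^{\,0} \circ \pi_2^\prime = \Lambda_2$ again by $W_1^\ast \delta (D) = D W_1^\ast$), and $\A_1^{\ast\ast}$ is commutative, so the final $\tM_2$ is commutative. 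I expect this last, commutative, refinement to be the crux: the compressed copy $W_1 \pi_2 (\M_2) W_1^\ast$ of $\M_2$ inside $\tM_2^{\,0}$ does \emph{not} commute with $\sigma_1 (\A_1) ,$ so one cannot simply pass to a commutative subalgebra of the general construction — the commutant lifting $\delta$ is exactly what supplies a copy $\delta (\pi_2 (\M_2))$ of $\M_2$ lying in $\sigma_1 (\A_1)^\prime .$ The remaining points, namely the construction and normality of $\delta$ and the algebraic identities relating $W_1 , V_2 , \sigma_1 , \pi_2 ,$ are routine.
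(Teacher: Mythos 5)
Your proof is correct, but it takes a genuinely different route from the paper's. The paper dilates the connecting channel $\Phi \in \cpchset{\A_1}{\M_2}$ by a Stinespring triple $(\cK_1 , \pi_1 , V_1)$ chosen with $\pi_1$ faithful (so faithfulness of the $\A_1$-representation is built in from the start), takes $\tM_2$ to be the von Neumann algebra generated by $\pi_1 (\A_1) \cup V_1 \M_2 V_1^\ast ,$ and verifies $V_1^\ast\, \tM_2\, V_1 \subseteq \M_2$ by an explicit computation on words in the generators (plus ultraweak continuity of the compression); the commutative case is then handled by an entirely separate construction through the channel $A \otimes B \mapsto \Phi (A) B$ on $\A_1 \otimes \M_2$ and the universal enveloping algebra $(\A_1 \otimes \M_2)^{\ast\ast} .$ You instead dilate $\pi_2 \circ \beta$ minimally, with $\pi_2$ a faithful normal Stinespring representation of $\Lambda_2$ itself, and invoke Arveson's commutant lifting $\delta$ for the minimal dilation; taking $\tM_2^{\,0} = \delta (\pi_2 (\M_2)^\prime)^\prime$ turns the containments $\sigma_1 (\A_1) \subseteq \tM_2^{\,0} ,$ $W_1 \pi_2 (\M_2) W_1^\ast \subseteq \tM_2^{\,0}$ and, crucially, $W_1^\ast \tM_2^{\,0} W_1 \subseteq \pi_2 (\M_2)^{\prime\prime} = \pi_2 (\M_2)$ into one-line commutation arguments, replacing the paper's word-by-word computation, and your $\gamma$ and $\eta$ play exactly the roles of the paper's compression $V_1^\ast (\cdot) V_1$ and $\Psi .$ What your route buys is a uniform treatment of the commutative case: there $\pi_2 (\M_2) \subseteq \pi_2 (\M_2)^\prime ,$ so $\delta \circ \pi_2$ supplies a copy of $\M_2$ inside $\sigma_1 (\A_1)^\prime ,$ and the commutative algebra $(\sigma_1 (\A_1) \cup \delta (\pi_2 (\M_2)))^{\prime\prime} \subseteq \tM_2^{\,0}$ (note $\delta (\pi_2 (B))$ commutes with $\delta (D)$ because $\pi_2 (B)$ commutes with $D$) carries the same equivalences with $\eta$ replaced by $\delta \circ \pi_2$ — you correctly identify that this commuting copy of $\M_2$ is exactly what the paper's general construction lacks, which is why the paper switches to the tensor-product argument. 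The price is the reliance on commutant lifting (whose well-definedness and normality you rightly label routine, though the well-definedness estimate needs $D$ to commute with all of $\pi_2 (\beta (A^\ast B)) ,$ not merely with the range of $\pi_2 \circ \beta$ — harmless since $D \in \pi_2 (\M_2)^\prime$) and the direct-sum repair $\tM_2 := \tM_2^{\,0} \oplus \A_1^{\ast\ast}$ for faithfulness of $\pi_1 ,$ which is genuinely needed in your setup because inflating $\sigma_1$ would destroy the minimality that commutant lifting requires; the paper avoids this by inflating the Stinespring representation of $\Phi$ at the outset.
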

\begin{proof}
By the assumption $\Lambda_1 \cocp \Lambda_2$
there exists a channel $\Phi \in \cpchset{\A_1}{\M_2}$
satisfying $\Lambda_1 = \Lambda_2 \circ \Phi .$
Suppose that $\M_2$ acts on a Hilbert space $\cH_2 $
and let $(\cK_1 , \pi_1 , V_1)$ be a Stinespring representation 
of $\Phi$ such that $\pi_1 \colon \A_1 \to \calL (\cK_1)$ is faithful.
We define a (possibly non-unital) normal $\ast$-homomorphism
$\rho \colon \M_2 \to \calL (\cK_1)$
by 
$\rho (B) := V_1 B V_1^\ast $
$(B \in \M_2) .$
Then for each $n \geq 1 , $
each $(A_i)_{i =1}^{n+1} \subseteq \A_1 ,$
and each 
$(B_i)_{i=1}^n \subseteq \M_2$ we have
\[
	V_1^\ast \pi_1 (A_1) V_1 
	=
	\Phi (A_1) \in \M_2 ,
\]
\begin{align*}
	&V_1^\ast 
	\pi_1 (A_1)
	\rho (B_1)
	\cdots
	\pi_1 (A_n)
	\rho (B_n)
	\pi_1 (A_{n+1})
	V_1
	\\
	&=
	V_1^\ast \pi_1 (A_1) V_1 B_1 V_1^\ast
	\cdots
	V_1^\ast \pi_1 (A_n) V_1 B_n V_1^\ast
	\pi_1 (A_{n+1}) V_1
	\\
	&=
	\Phi (A_1) B_1 \cdots \Phi (A_n) B_n \Phi (A_{n+1})
	\\
	& \in \M_2 .
\end{align*}
This implies $V_1^\ast C V_1 \in \M_2$ for each
$C \in \tM_2 ,$
where $\tM_2 \subseteq \calL (\cK_1)$ is the von Neumann algebra
generated by $\pi_1 (\A_1) \cup \rho (\M_2 ) .$ 
Hence we may define a normal channel 
$\tL_2 \in \ncpchset{\tM_2}{\LHin}$
by
\begin{gather*}
	\tL_2 (C)
	:=
	\Lambda_2 ( V_1^\ast C V_1 )
	\quad
	(C \in \tM_2) .
\end{gather*}
We have $\tL_2 \cocp \Lambda_2$ by definition.
On the other hand, if we define
$\Psi \in \ncpchset{\M_2}{\tM_2}$
by
\[
\Psi (B)
	:=
	\rho (B) 
	+ \vph (B)  (  \unit_{\cK_1} - V_1 V_1^\ast )
	\quad
	(B \in \M_2) ,
\]
where $\vph \in \Ss (\M_2)$ is a fixed normal state,
then for each $B \in \M_2 $
\[
	\tL_2 \circ \Psi (B)
	=
	\Lambda_2 \left(
	V_1^\ast V_1 B V_1^\ast V_1
	+
	\vph (B)
	 V_1^\ast    (\unit_{\cK_1} - V_1V_1^\ast)    V_1
	\right)
	=
	\Lambda_2 (B) .
\]
Hence $\tL_2 \eqcp \Lambda_2 .$
For each $A \in \A_1 ,$ we have
\[
	\Lambda_1 (A)
	=
	\Lambda_2 \circ \Phi (A)
	=
	\Lambda_2
	(V_1^\ast \pi_1 (A) V_1)
	=
	\tL_2 \circ \pi_1 ( A) .
\]
Therefore $(\tM_2 , \pi_1 , \tL_2)$ satisfies all the conditions 
of the claim.

Next, we consider the case where $\A_1$ and $\M_2$ are commutative.
Let $\Phi \in \cpchset{\A_1}{\M_2}$ be the same as the previous paragraph.
Then by the commutativity there exists a channel 
$\widetilde{\Phi} \in \cpchset{\A_1 \otimes \M_2}{ \M_2}$
such that
$
\widetilde{\Phi} (A \otimes B)
= 
\Phi (A) B
$
$(A \in \A_1 , B \in \M_2) ,$
where $\A_1 \otimes \M_2$ denotes 
the injective \cstar-tensor product.
We define $\Gamma_2 \in \cpchset{\A_1 \otimes \M_2}{\LHin}$
by 
$\Gamma_2 := \Lambda_2 \circ \widetilde{\Phi} ,$
$\N_2$ by 
the universal enveloping von Neumann algebra
$(\A_1 \otimes \M_2)^{\ast \ast} ,$
and
$\tG_2 \in \ncpchset{\N_2}{\LHin}$
by the normal extension of $\Gamma_2 .$
Since $\A_1 \otimes \M_2$ is commutative,
so is $\N_2 .$
By definition we have $\Gamma_2 \cocp \Lambda_2 ,$
and hence $\tG_2 \cocp \Lambda_2 .$
On the other hand, for each $B \in \M_2$
we have
$\Lambda_2 (B) = \Lambda_2 (  \widetilde{\Phi} (\unit_{\A_1} \otimes B  )   )
= \Gamma_2  (\unit_{\A_1} \otimes B) ,
$
which implies $\Lambda_2 \cocp \Gamma_2 \cocp \tG_2 .$
Therefore $\Lambda_2 \eqcp \tG_2 .$
If we define a representation 
$\pi \colon \A_1 \ni A \mapsto A \otimes \unit_{\M_2} \in \N_2  ,$
then for each $A \in \A_1$ we have
\[
	\tG_2 \circ \pi (A)
	=
	\Gamma_2 (A \otimes \unit_{\M_2})
	=
	\Lambda_2 (\Phi (A))
	=
	\Lambda_1 (A) .
\]
Therefore 
$(\tM_2 , \pi_1 , \tL_2) = (\N_2 , \pi , \tG_2)$
satisfies all the conditions of the claim.
\end{proof}

\begin{lemma}
\label{lemm:inch}
$\CH (\LHin)$ is an upper dcpo.
\end{lemma}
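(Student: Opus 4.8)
\emph{Proof proposal.} The plan is to reduce to transfinite sequences via Proposition~\ref{prop:dc} and then build, by transfinite recursion, an inductive system of von Neumann algebras whose $C^\ast$-inductive limit carries a channel whose normal extension is the desired supremum. By Proposition~\ref{prop:dc} it suffices to show that every increasing transfinite sequence $(c_\alpha)_{\alpha < \alpha_0}$ in $\CH (\LHin)$ has a supremum; the case where $\alpha_0$ is $0$ or a successor ordinal is trivial (the supremum is the least element of $\CH (\LHin) ,$ realized by the channel $\cmplx \ni z \mapsto z \unit_{\Hin} ,$ respectively the top term), so I would assume $\alpha_0$ is a limit ordinal, and I write each $c_\alpha$ as $[\Lambda_\alpha]$ for a suitable normal channel $\Lambda_\alpha$ into $\LHin$ to be chosen during the recursion.

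First I would construct, by recursion on $\alpha < \alpha_0 ,$ von Neumann algebras $\M_\alpha ,$ normal channels $\Lambda_\alpha \in \ncpchset{\M_\alpha}{\LHin}$ with $[\Lambda_\alpha] = c_\alpha ,$ and faithful representations $\pi_{\beta \gets \alpha} \colon \M_\alpha \to \M_\beta$ $(\alpha \leq \beta)$ satisfying $\pi_{\gamma \gets \beta} \circ \pi_{\beta \gets \alpha} = \pi_{\gamma \gets \alpha}$ and $\Lambda_\alpha = \Lambda_\beta \circ \pi_{\beta \gets \alpha} .$ For $\alpha = 0$ I pick any normal channel representing $c_0 .$ At a successor $\alpha + 1 ,$ since $c_\alpha \cocp c_{\alpha+1} ,$ I pick a normal channel $\Lambda^\prime$ representing $c_{\alpha+1} ,$ so $\Lambda_\alpha \cocp \Lambda^\prime ,$ and apply Lemma~\ref{lemm:embed} to get a von Neumann algebra $\tM ,$ a faithful representation $\pi \colon \M_\alpha \to \tM ,$ and a normal channel $\tL \eqcp \Lambda^\prime$ with $\Lambda_\alpha = \tL \circ \pi ;$ I set $\M_{\alpha+1} := \tM ,$ $\Lambda_{\alpha+1} := \tL ,$ $\pi_{\alpha+1 \gets \alpha} := \pi ,$ and $\pi_{\alpha+1 \gets \beta} := \pi \circ \pi_{\alpha \gets \beta}$ for $\beta \leq \alpha .$ At a limit ordinal $\lambda < \alpha_0$ I form the $C^\ast$-inductive limit $\B_\lambda$ of $(\M_\alpha , \pi_{\beta \gets \alpha})_{\alpha \leq \beta < \lambda} ,$ with principal representations $\sigma_\alpha \colon \M_\alpha \to \B_\lambda ,$ and let $\Theta_\lambda \colon \B_\lambda \to \LHin$ be the channel determined by $\Theta_\lambda (\sigma_\alpha (A)) = \Lambda_\alpha (A)$ (well defined on the norm-dense $\ast$-subalgebra $\bigcup_{\alpha < \lambda} \sigma_\alpha (\M_\alpha)$ by the consistency relations, and contractive there, hence extending to $\B_\lambda$). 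Picking a normal channel $\Lambda^{\prime\prime}$ representing $c_\lambda ,$ I note $\Theta_\lambda \rvert_{\sigma_\alpha (\M_\alpha)} \cocp \Lambda^{\prime\prime}$ for every $\alpha < \lambda$ (because $c_\alpha \cocp c_\lambda$), so Lemma~\ref{lemm:increasing_alg} gives $\Theta_\lambda \cocp \Lambda^{\prime\prime} ,$ and Lemma~\ref{lemm:embed} then supplies a von Neumann algebra $\tM ,$ a faithful representation $\pi \colon \B_\lambda \to \tM ,$ and a normal $\tL \eqcp \Lambda^{\prime\prime}$ with $\Theta_\lambda = \tL \circ \pi ;$ I set $\M_\lambda := \tM ,$ $\Lambda_\lambda := \tL ,$ $\pi_{\lambda \gets \alpha} := \pi \circ \sigma_\alpha$ for $\alpha < \lambda ,$ consistency being inherited from $\sigma_\beta \circ \pi_{\beta \gets \alpha} = \sigma_\alpha .$

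After the recursion I would form the $C^\ast$-inductive limit $\B$ of the full system $(\M_\alpha , \pi_{\beta \gets \alpha})_{\alpha \leq \beta < \alpha_0} ,$ with principal representations $\sigma_\alpha \colon \M_\alpha \to \B ,$ let $\Theta \colon \B \to \LHin$ be the channel with $\Theta \circ \sigma_\alpha = \Lambda_\alpha$ as above, and take its normal extension $\overline{\Theta} \in \ncpchset{\B^{\ast\ast}}{\LHin} .$ I claim $[\overline{\Theta}]$ is $\sup_{\alpha < \alpha_0} c_\alpha$ in $\CH (\LHin) .$ It is an upper bound, since $\Lambda_\alpha = \Theta \circ \sigma_\alpha \cocp \Theta \cocp \overline{\Theta}$ for every $\alpha .$ It is the least one: if $\Gamma$ is a normal channel into $\LHin$ with $c_\alpha \cocp [\Gamma]$ for all $\alpha ,$ then $\Theta \rvert_{\sigma_\alpha (\M_\alpha)} \cocp \Gamma$ for all $\alpha ,$ and since $(\sigma_\alpha (\M_\alpha))_{\alpha < \alpha_0}$ is an increasing net of unital $C^\ast$-subalgebras of $\B$ with norm-dense union, Lemma~\ref{lemm:increasing_alg} gives $\Theta \cocp \Gamma ,$ whence $\overline{\Theta} \cocp \Gamma$ because $\overline{\Theta}$ is the least normal channel that dominates $\Theta$ in the randomization preorder.

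The routine parts are the well-definedness of $\Theta_\lambda$ and $\Theta$ and the bookkeeping of the connecting maps. The main obstacle is the tension between the $C^\ast$-inductive limit, which is only a $C^\ast$-algebra, and the requirement that each class be realized by a \emph{normal} channel on a \emph{von Neumann} algebra; this is exactly what forces the re-embedding via Lemma~\ref{lemm:embed} at every limit stage and the passage to a normal extension at the very end, with Lemma~\ref{lemm:increasing_alg} certifying that no randomization information is lost under norm completion. I would also take care that Lemma~\ref{lemm:embed} is being invoked with its first channel living on a possibly non-von-Neumann $C^\ast$-algebra (namely $\B_\lambda ,$ resp.\ $\B$), which its statement allows.
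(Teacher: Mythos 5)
Your proposal is correct and follows essentially the same route as the paper's proof: reduction to transfinite sequences via Proposition~\ref{prop:dc}, a transfinite recursion building a compatible system of von Neumann algebras and normal channels using the \cstar-inductive limit together with Lemmas~\ref{lemm:increasing_alg} and \ref{lemm:embed}, and finally the normal extension of the limit channel as the supremum, certified least again by Lemma~\ref{lemm:increasing_alg}. The only (immaterial) difference is that you treat successor ordinals separately by applying Lemma~\ref{lemm:embed} directly, whereas the paper handles successor and limit stages uniformly through the inductive-limit construction.
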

\begin{proof}
By Proposition~\ref{prop:dc},
we have only to establish the existence of a supremum of
an arbitrary increasing transfinite sequence 
$([\Lambda_\alpha ])_{ \alpha < \alpha_0}$
in $\CH (\LHin) .$
Let $\M_\alpha$ be the outcome space of $\Lambda_\alpha .$
We inductively construct a transfinite sequence
$
(
\widetilde{\M}_\alpha , 
(\pi_{\alpha \gets \beta} )_{\beta \leq \alpha}, 
\widetilde{\Lambda}_\alpha
)_{\alpha  < \alpha_0}
$
such that for each $0 \leq \alpha \leq \beta \leq \gamma < \alpha_0 ,$
\begin{itemize}
\item[$\bullet$]
$\widetilde{\M}_\alpha$
is a von Neumann algebra;
\item[$\bullet$]
$\pi_{\beta \gets \alpha} \colon \widetilde{\M}_\alpha \to \widetilde{\M}_\beta$
is a (not necessarily normal) faithful representation
satisfying
$\pi_{\alpha \gets \alpha} = \id_{\tM_\alpha} $
and
$
\pi_{\gamma  \gets \alpha} 
= 
\pi_{\gamma \gets \beta} 
\circ 
\pi_{\beta \gets \alpha}   ;$
\item[$\bullet$]
$\tL_\alpha \in \ncpchset{\tM_\alpha}{\LHin}$
is a normal channel satisfying 
$\tL_\alpha \eqcp \Lambda_\alpha  $
and
$\tL_\alpha = \tL_\beta \circ \pi_{\beta \gets \alpha} .$
\end{itemize}
We define $\tM_0 := \M_0,$
$\pi_{0\gets 0} := \id_{\M_0} ,$
and 
$\tL_0 := \Lambda_0 .$
Now for an ordinal $0 < \gamma < \alpha_0$
suppose that we have constructed 
$
(
\widetilde{\M}_\alpha , 
(\pi_{\alpha \gets \beta} )_{\beta \leq \alpha}, 
\widetilde{\Lambda}_\alpha
)_{\alpha < \gamma}
$
satisfying the required properties.
Let $\A_{0\gamma }$ and $\A_\gamma$
be the algebraic and the \cstar-inductive limits 
of
$(\tM_\alpha , \pi_{\beta \gets \alpha})_{\alpha \leq \beta < \gamma} ,$
respectively,
and let $\sigma_{\gamma \gets \alpha} \colon \tM_\alpha \to \A_\gamma$
be the principal isomorphism such that
$\sigma_{\gamma \gets \alpha} 
= \sigma_{\gamma \gets \beta} 
\circ 
\pi_{\beta \gets \alpha}
$
$(\forall \alpha \leq \forall  \beta < \gamma) .$
By identifying $\tM_\alpha$ with $\sigma_{\gamma \gets \alpha} (\tM_\alpha) ,$
we may regard $(\tM_\alpha)_{\alpha < \gamma}$
as a monotonically increasing transfinite sequence of 
\cstar-subalgebras of $\A_{\gamma} $ 
such that
$\A_{0\gamma} = \bigcup_{\alpha < \gamma } \tM_\alpha   .$
From the condition 
$\tL_\alpha = \tL_\beta \circ \pi_{\beta \gets \alpha}$
$(\forall \alpha \leq \forall \beta < \gamma) ,$
we may define a bounded linear map
$\Phi_{0\gamma} \colon \A_{0\gamma} \to \LHin$
by
$
	\Phi_{0 \gamma} (A)
	:=
	\tL_\alpha (A) 
$
if $A \in \tM_\alpha ,$
which is well-defined irrespective of the choice of  $\alpha .$
The linear map $\Phi_{0\gamma}$ uniquely extends to a CP channel
$\Phi_\gamma \in \cpchset{\A_\gamma}{\LHin} .$
Since $\tL_\alpha \cocp \Lambda_\gamma$ for all $\alpha < \gamma ,$
Lemma~\ref{lemm:increasing_alg} implies
$\Phi_\gamma \cocp \Lambda_\gamma . $
Therefore by Lemma~\ref{lemm:embed}
there exist a von Neumann algebra $\tM_\gamma ,$
faithful representation 
$\rho_\gamma \colon \A_\gamma \to \tM_\gamma ,$
and a normal channel 
$\tL_\gamma \in \ncpchset{\tM_\gamma}{\LHin}$
such that 
$\Lambda_\gamma \eqcp \tL_\gamma $
and 
$\Phi_\gamma = \tL_\gamma \circ \rho_\gamma .$
We define $\pi_{\gamma \gets \gamma} := \id_{\tM_\gamma}$ 
and
for each $\alpha < \gamma$ define a faithful representation
$\pi_{\gamma \gets \alpha} \colon \tM_\alpha \to \tM_\gamma$
by
$\pi_{\gamma \gets \alpha} := \rho_\gamma \circ \sigma_{\gamma \gets \alpha} .$
Then it is straightforward to show that
$(\tM_\alpha , (\pi_{\alpha \gets \beta} )_{\beta \leq \alpha}, 
\tL_\alpha)_{\alpha \leq \gamma }$
satisfies the required properties.
Thus by induction we have constructed 
$(\tM_\alpha , (\pi_{\alpha \gets \beta} )_{\beta \leq \alpha}, \tL_\alpha)_{\alpha < \alpha_0 } .$

Now define $\A_{\alpha_0}$ by 
the \cstar-inductive limit of 
$(\tM_\alpha , \pi_{\beta \gets \alpha})_{\alpha \leq \beta < \alpha_0} $
and let $\sigma_{\alpha_0 \gets \alpha} \colon \tM_\alpha \to \A_{\alpha_0}$
be the principal isomorphism such that
$\sigma_{\alpha_0 \gets \alpha} 
= \sigma_{\alpha_0 \gets \beta} 
\circ 
\pi_{\beta \gets \alpha}
$
$(\forall \alpha \leq \forall  \beta < \alpha_0) .$
Then there exists a channel 
$\Phi_{\alpha_0} \in \cpchset{\A_{\alpha_0}}{\LHin} $
satisfying
$\tL_\alpha = \Phi_{\alpha_0} \circ \sigma_{\alpha_0 \gets \alpha } $
for all $\alpha < \alpha_0 .$
Then, for any normal channel $\Gamma \in \ncpchset{\N}{\LHin}$
satisfying 
$\Lambda_\alpha \cocp \Gamma$
for all $\alpha < \alpha_0 ,$
we have $\Phi_{\alpha_0} \cocp \Gamma$
by Lemma~\ref{lemm:increasing_alg}.
Thus if we define $\tL_{\alpha_0} \in \ncpchset{\A_{\alpha_0}^{\ast\ast}}{\LHin}$
as the normal extension of $\Phi_{\alpha_0} ,$
we have $\tL_{\alpha_0} \cocp \Gamma.$
Since $\Lambda_\alpha \cocp \tL_{\alpha_0} $ 
$(\alpha < \alpha_0)$
is immediate from the definition, 
$[\tL_{\alpha_0}]$ is 
a supremum of 
$([\Lambda_\alpha ])_{\alpha < \alpha_0} .$
\end{proof}

\begin{lemma}
\label{lemm:increasingcomm}
$\CHqc (\LHin)$ is an upper Dedekind-closed subset of $\CH (\LHin) .$ 
\end{lemma}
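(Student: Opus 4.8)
The plan is to combine Lemma~\ref{lemm:dcsubset} with a careful re-reading of the construction of suprema in the proof of Lemma~\ref{lemm:inch}, keeping track of commutativity at every ordinal stage. By Lemma~\ref{lemm:dcsubset} applied to the dcpo $\CH (\LHin)$ (Lemma~\ref{lemm:inch}) and the subset $\CHqc (\LHin)$, it suffices to prove that whenever $([\Lambda_\alpha])_{\alpha < \alpha_0}$ is an increasing transfinite sequence in $\CHqc (\LHin)$, its supremum again lies in $\CHqc (\LHin)$. Since membership of a class in $\CHqc (\LHin)$ only requires \emph{some} representative with commutative outcome space, I would first choose representatives $\Lambda_\alpha$ whose outcome space $\M_\alpha$ is commutative, and then run the inductive construction of the proof of Lemma~\ref{lemm:inch} verbatim with these representatives, obtaining $(\tM_\alpha , (\pi_{\alpha \gets \beta})_{\beta \leq \alpha}, \tL_\alpha)_{\alpha < \alpha_0}$, the \cstar-inductive limit $\A_{\alpha_0}$, the channel $\Phi_{\alpha_0} \in \cpchset{\A_{\alpha_0}}{\LHin}$, and its normal extension $\tL_{\alpha_0}$, whose class $[\tL_{\alpha_0}]$ is the supremum in question.

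Next I would show, by transfinite induction on $\gamma < \alpha_0$, that every $\tM_\gamma$ occurring in that construction can be taken commutative. The base case $\tM_0 = \M_0$ is commutative because $\Lambda_0$ is a QC channel. For $0 < \gamma < \alpha_0$, assume $\tM_\alpha$ is commutative for all $\alpha < \gamma$; then the algebraic inductive limit $\A_{0\gamma} = \bigcup_{\alpha < \gamma} \tM_\alpha$ is an increasing union of commutative \cstar-subalgebras, hence commutative, and so is its norm closure $\A_\gamma$. Applying Lemma~\ref{lemm:embed} to $\Phi_\gamma \in \cpchset{\A_\gamma}{\LHin}$ and $\Lambda_\gamma \in \ncpchset{\M_\gamma}{\LHin}$ — both of which now have commutative (outcome) algebras — the final clause of Lemma~\ref{lemm:embed} permits the choice of $\tM_\gamma$ commutative together with a faithful representation $\rho_\gamma$ and a normal channel $\tL_\gamma \eqcp \Lambda_\gamma$ with $\Phi_\gamma = \tL_\gamma \circ \rho_\gamma$, exactly as in the original argument. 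This closes the induction, so all $\tM_\alpha$ are commutative.

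Finally, at stage $\alpha_0$ the \cstar-inductive limit $\A_{\alpha_0}$ of the commutative algebras $(\tM_\alpha , \pi_{\beta \gets \alpha})$ is commutative, since its norm-dense $\ast$-subalgebra $\bigcup_{\alpha < \alpha_0} \sigma_{\alpha_0 \gets \alpha}(\tM_\alpha)$ is again an increasing union of commutative algebras. Hence $\Phi_{\alpha_0} \in \cpchset{\A_{\alpha_0}}{\LHin}$ has commutative domain, and its normal extension $\tL_{\alpha_0} \in \ncpchset{\A_{\alpha_0}^{\ast\ast}}{\LHin}$ has outcome space $\A_{\alpha_0}^{\ast\ast}$, which is commutative because the bidual of a commutative \cstar-algebra is commutative. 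Thus $\tL_{\alpha_0}$ is a QC channel and the supremum $[\tL_{\alpha_0}]$ belongs to $\CHqc (\LHin)$; Lemma~\ref{lemm:dcsubset} then gives the claim.

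The main obstacle is not a new idea but the bookkeeping: one must verify that the "commutative" variants supplied by Lemma~\ref{lemm:embed} can be threaded consistently through the transfinite recursion of Lemma~\ref{lemm:inch}, i.e.\ that at each successor and limit ordinal one may keep all the $\tM_\alpha$ commutative \emph{simultaneously} while retaining the compatibility maps $\pi_{\beta \gets \alpha}$ and the identity $\tL_\alpha = \tL_\beta \circ \pi_{\beta \gets \alpha}$. Beyond that, only the elementary stability facts are used — that increasing unions, norm closures, \cstar-inductive limits, and biduals of commutative \cstar-algebras are commutative.
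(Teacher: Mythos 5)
Your proposal is correct and follows essentially the same route as the paper: reduce to transfinite sequences via Lemma~\ref{lemm:dcsubset}, rerun the construction of Lemma~\ref{lemm:inch} using the commutative clause of Lemma~\ref{lemm:embed} at each stage, and observe that the bidual $\A_{\alpha_0}^{\ast\ast}$ of the commutative inductive limit is commutative. You merely spell out the transfinite bookkeeping that the paper's proof leaves implicit.
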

\begin{proof}
By Lemma~\ref{lemm:dcsubset}, we have only to show 
that $[\tL_{\alpha_0}] : =\sup_{\alpha < \alpha_0} [\Lambda_\alpha] \in \CHqc (\LHin)$
for any increasing transfinite sequence 
$([\Lambda_\alpha])_{\alpha < \alpha_0}$
in $\CHqc (\LHin) .$
We apply the construction of $(\tL_\alpha)_{\alpha < \alpha_0}$
given in Lemma~\ref{lemm:inch}. 
Then we can construct the outcome space $\tM_\alpha$ of $\tL_{\alpha}$
to be commutative for each $\alpha < \alpha_0 .$
In this case, the outcome space $\A_{\alpha_0}^{\ast \ast}$
of the supremum $\tL_{\alpha_0} $ is also
commutative.
Therefore $[\tL_{\alpha_0}] \in \CHqc (\LHin) .$
\end{proof}

We next consider decreasing channels.

\begin{lemma}
\label{lemm:decreasingch}
$\CH (\LHin)$ is a lower dcpo. 
\end{lemma}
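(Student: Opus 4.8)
The plan is to deduce the lower directed-completeness of $\CH (\LHin)$ from the upper directed-completeness already established in Lemma~\ref{lemm:inch}, using channel conjugation as the bridge, exactly as the introduction advertises. Concretely, let $([\Lambda_\alpha])_{\alpha < \alpha_0}$ be a decreasing transfinite sequence in $\CH (\LHin)$; by Proposition~\ref{prop:dc} (applied to the opposite order) it suffices to produce an infimum of such a sequence. For each $\alpha$ pick a representative normal channel $\Lambda_\alpha \in \ncpchset{\M_\alpha}{\LHin}$ and form a conjugate channel $\Lambda_\alpha^c \in \ncpchset{\pi_\alpha (\M_\alpha)^\prime}{\LHin}$ via some Stinespring representation. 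By Proposition~\ref{prop:conj}(1), $\Lambda_\beta \cocp \Lambda_\alpha$ for $\alpha \leq \beta$ translates into $\Lambda_\alpha^c \cocp \Lambda_\beta^c$, so $([\Lambda_\alpha^c])_{\alpha < \alpha_0}$ is an \emph{increasing} transfinite sequence in $\CH (\LHin)$. Lemma~\ref{lemm:inch} then yields a supremum $[\Sigma]$ of $([\Lambda_\alpha^c])_{\alpha < \alpha_0}$ in $\CH (\LHin)$, represented by a normal channel $\Sigma \in \ncpchset{\N}{\LHin}$.

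Next I would conjugate back: set $\Gamma := \Sigma^c$, a normal channel with input space $\LHin$, and claim $[\Gamma]$ is an infimum of $([\Lambda_\alpha])_{\alpha < \alpha_0}$. For the lower bound property, from $\Lambda_\alpha^c \cocp \Sigma$ Proposition~\ref{prop:conj}(1) gives $\Sigma^c \cocp (\Lambda_\alpha^c)^c$, and by Proposition~\ref{prop:conj}(2) $(\Lambda_\alpha^c)^c \eqcp \Lambda_\alpha$, so $\Gamma \cocp \Lambda_\alpha$ for all $\alpha$. For maximality among lower bounds, suppose $\Delta \in \ncpchset{\mathcal{P}}{\LHin}$ satisfies $\Delta \cocp \Lambda_\alpha$ for all $\alpha$; then $\Lambda_\alpha^c \cocp \Delta^c$ for all $\alpha$, so by the supremum property $\Sigma \cocp \Delta^c$, whence $\Delta \eqcp (\Delta^c)^c \cocp \Sigma^c = \Gamma$ using Proposition~\ref{prop:conj} twice more. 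Thus $[\Gamma] = \inf_{\alpha < \alpha_0} [\Lambda_\alpha]$, and $\CH (\LHin)$ is a lower dcpo.

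The one genuine subtlety is that Proposition~\ref{prop:conj} is stated for normal channels whose input space is exactly $\LHin$, so I must make sure that every channel appearing in the argument — the $\Lambda_\alpha^c$, the supremum $\Sigma$, the candidate infimum $\Gamma = \Sigma^c$, and any competing lower bound $\Delta$ — really is a normal channel with input space $\LHin$. The conjugate of a channel into $\LHin$ is by construction a normal channel into $\LHin$, and the supremum produced by Lemma~\ref{lemm:inch} lives in $\CH (\LHin)$ hence is represented by a normal channel into $\LHin$; for a competing lower bound $\Delta \in \CH (\LHin)$ the same applies. So the hypotheses of Proposition~\ref{prop:conj} are met at every step, and the only care needed is to pass consistently between channels and their randomization classes, observing that $\cocp$, $\eqcp$, and the operations $\Lambda \mapsto \Lambda^c$ all descend to well-defined relations and maps on $\CH (\LHin)$ (the latter because conjugate channels of a fixed channel are mutually randomization-equivalent and conjugation is order-reversing). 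The argument is otherwise a routine order-duality transport, so I expect no real obstacle beyond this bookkeeping.
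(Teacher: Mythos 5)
Your proposal is correct and follows essentially the same route as the paper: conjugate the decreasing family to get an increasing one, take its supremum via Lemma~\ref{lemm:inch}, and conjugate back, using Proposition~\ref{prop:conj} to transfer the order relations. The only difference is cosmetic — you first reduce to transfinite sequences via Proposition~\ref{prop:dc}, which is harmless but unnecessary since the duality argument applies to an arbitrary decreasing net, as in the paper.
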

\begin{proof}
Let $([\Lambda_i])_{i \in I}$ be a decreasing net on
$\CH (\LHin) .$
Then Proposition~\ref{prop:conj} implies that the net $([\Lambda_i^c])_{i \in I}$
of conjugate channels 
is increasing 
and hence by Lemma~\ref{lemm:inch}
has a supremum $[\Gamma] \in \CH (\LHin) .$
Again by Proposition~\ref{prop:conj},
$[\Gamma^c]$ is an infimum of
$([(\Lambda_i^c)^c])_{i \in I} =([\Lambda_i])_{i \in I} ,$
which proves the claim.
\end{proof}

To establish the lower Dedekind-closedness of 
$\CHqc (\LHin) ,$
we use an operator algebraic version of the quantum no-broadcasting theorem.~\cite{kuramochi2018access}
A normal channel $\Lambda \in \ncpchset{\M}{\Min}$ is said to be broadcastable
if there exists a channel
$\Psi \in \cpchset{\M \otimes \M}{\M}$ such that
$\Lambda (A) = \Lambda \circ \Psi (A \otimes \unit_\M)
= \Lambda \circ \Psi (\unit_\M \otimes A )  $
for all $A \in \M .$
Here $\M \otimes \M$ denotes the injective \cstar-tensor product.
Such a channel $\Psi$ is called a broadcasting channel of $\Lambda .$

\begin{lemma}[No-broadcasting theorem for normal channel]
\label{lemm:nobroadcast}
Let $\Lambda \in \ncpchset{\M}{\Min}$ be a normal channel.
Then $\Lambda$ is broadcastable if and only if 
$\Lambda$ is randomization-equivalent to a QC channel.
\end{lemma}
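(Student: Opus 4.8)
The plan is to prove both implications directly, the nontrivial one by a multiplicative-domain argument after reducing to a minimal sufficient channel. For the ``if'' direction I would first record that every QC channel is broadcastable: if $\Gamma \in \ncpchset{\C}{\Min}$ with $\C$ commutative, identify $\C$ with $C(X)$ and let $\Psi_\Gamma \colon \C \otimes \C \to \C$ be the unital $\ast$-homomorphism on the injective $C^\ast$-tensor product induced by the diagonal $X \to X \times X$; then $\Psi_\Gamma(A \otimes \unit) = A = \Psi_\Gamma(\unit \otimes A)$, so $\Psi_\Gamma$ is a broadcasting channel for $\Gamma$. Next I would check that broadcastability is invariant under $\eqcp$: writing $\Lambda = \Gamma \circ \beta$ and $\Gamma = \Lambda \circ \alpha$ for channels $\beta \colon \M \to \C$ and $\alpha \colon \C \to \M$, the map $\Psi_\Lambda := \alpha \circ \Psi_\Gamma \circ (\beta \otimes \beta)$ is a channel on $\M \otimes \M$, and since $\beta(\unit) = \unit$ and $\Lambda \circ \alpha = \Gamma$ one gets $\Lambda \circ \Psi_\Lambda(A \otimes \unit) = \Gamma \circ \Psi_\Gamma(\beta(A) \otimes \unit) = \Gamma(\beta(A)) = \Lambda(A)$, and symmetrically with $\unit \otimes A$; hence $\Lambda$ is broadcastable. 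Combining these two facts settles the ``if'' direction.

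For the ``only if'' direction, suppose $\Lambda \in \ncpchset{\M}{\Min}$ is broadcastable. I would pass to the minimal sufficient statistical experiment of $\E_\Lambda$, which by Lemma~\ref{lemm:che} (using $\E_\Lambda \cocp \E_{\id_{\Min}}$) has the form $\E_{\Lambda_0}$ for a normal channel $\Lambda_0 \in \ncpchset{\M_0}{\Min}$ with $\Lambda_0 \eqcp \Lambda$; by the invariance just established $\Lambda_0$ is again broadcastable, so it suffices to show $\M_0$ is commutative. Let $\Psi \in \cpchset{\M_0 \otimes \M_0}{\M_0}$ be a broadcasting channel for $\Lambda_0$, and define channels $\gamma_1, \gamma_2$ on $\M_0$ by $\gamma_1(A) = \Psi(A \otimes \unit)$, $\gamma_2(A) = \Psi(\unit \otimes A)$. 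Then $\Lambda_0 \circ \gamma_i = \Lambda_0$, so by minimal sufficiency $\gamma_1 = \gamma_2 = \id_{\M_0}$, i.e.\ $\Psi(A \otimes \unit) = A = \Psi(\unit \otimes A)$ for all $A$. Consequently every element of $\M_0 \otimes \unit$ and of $\unit \otimes \M_0$ lies in the multiplicative domain of the unital CP map $\Psi$ (for instance $\Psi((A \otimes \unit)^\ast(A \otimes \unit)) = A^\ast A = \Psi(A \otimes \unit)^\ast \Psi(A \otimes \unit)$, and likewise with $A A^\ast$). Since the multiplicative domain is a $C^\ast$-subalgebra and $\M_0 \otimes \unit$ together with $\unit \otimes \M_0$ generate $\M_0 \otimes \M_0$, the map $\Psi$ is a $\ast$-homomorphism; hence $AB = \Psi(A \otimes \unit)\Psi(\unit \otimes B) = \Psi(A \otimes B) = \Psi(\unit \otimes B)\Psi(A \otimes \unit) = BA$ for all $A, B \in \M_0$, so $\M_0$ is commutative and $\Lambda_0$ is a QC channel randomization-equivalent to $\Lambda$.

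The step I expect to require the most care is the appeal to minimal sufficiency to conclude $\gamma_1 = \gamma_2 = \id_{\M_0}$: minimal sufficiency as defined above is phrased for \emph{normal} channels, whereas $\gamma_1$ and $\gamma_2$ need not be normal, since the broadcasting channel $\Psi$ is only defined on the injective $C^\ast$-tensor product $\M_0 \otimes \M_0$, which is not a von Neumann algebra. Closing this gap — either by upgrading the minimal sufficiency property to arbitrary (not necessarily normal) channels with the same outcome algebra, or by first producing a normal broadcasting channel on the spatial von Neumann tensor product — is the technical heart of the argument, and is the point at which I would invoke the operator-algebraic no-broadcasting results of Ref.~\onlinecite{kuramochi2018access}.
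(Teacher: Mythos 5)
The paper's own proof of this lemma is a one-line citation: it applies Corollary~1 of Ref.~\onlinecite{kuramochi2018access} to $\E_\Lambda$, so your attempt at a self-contained argument is a genuinely different route. Your ``if'' direction is fine: the diagonal $\ast$-homomorphism broadcasts any QC channel, and your computation showing that broadcastability is invariant under $\eqcp$ is correct (and the broadcasting channel is not required to be normal, so no continuity issue arises there). The reduction to a minimal sufficient channel $\Lambda_0 \eqcp \Lambda$ via Lemma~\ref{lemm:che} and the multiplicative-domain argument showing that $\Psi(A\otimes\unit)=A=\Psi(\unit\otimes A)$ forces $\M_0$ to be commutative are also sound.

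The problem is exactly the step you flag yourself, and it is a genuine gap, not a routine technicality. Minimal sufficiency, as defined in the paper and in Ref.~\onlinecite{kuramochi2017minimal}, only says that a \emph{normal} channel $\gamma\in\ncpch{\M_0}$ with $\Lambda_0\circ\gamma=\Lambda_0$ must be $\id_{\M_0}$. Your $\gamma_1,\gamma_2$ are built from a broadcasting channel $\Psi$ defined only on the injective \cstar-tensor product, with no ultraweak continuity, so there is no reason for them to be normal, and nothing in the paper (or in an obvious extension of the uniqueness results for minimal sufficient experiments) upgrades the implication to arbitrary unital CP maps; one cannot simply replace $\Psi$ by its normal extension either, since that lives on $(\M_0\otimes\M_0)^{\ast\ast}$ rather than on a von Neumann tensor product carrying $\M_0\otimes\unit$ and $\unit\otimes\M_0$ in the needed way. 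Your proposed remedy is to ``invoke the operator-algebraic no-broadcasting results of Ref.~\onlinecite{kuramochi2018access}''---but that corollary \emph{is} the entire content of the lemma (the paper's proof consists of nothing else), so at the decisive point your argument is either incomplete or circular. To make the proof self-contained you would have to actually prove that a minimal sufficient experiment admits no nontrivial (possibly non-normal) state-preserving channel on its outcome algebra, or otherwise manufacture a normal broadcasting channel; as written, the hard direction is not established.
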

\begin{proof}
Application of Corollary~1 of Ref.~\onlinecite{kuramochi2018access}
to $\E_\Lambda .$
\end{proof}

\begin{lemma}
\label{lemm:decreasingqc}
$\CHqc (\LHin)$ is a lower Dedekind-closed subset of 
$\CH (\LHin).$ 
\end{lemma}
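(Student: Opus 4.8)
The plan is to use Lemma~\ref{lemm:nobroadcast} to replace the claim ``the infimum of a lower directed family of QC channels is QC'' by the claim ``that infimum is broadcastable'', and then to assemble a broadcasting channel for the infimum out of the (trivial) broadcasting channels of the members of the family; the point is that the \emph{greatest}-lower-bound property of the infimum is exactly what makes this assembly possible. In detail, let $D \subseteq \CHqc (\LHin)$ be lower directed with infimum $[\Gamma]$ in $\CH (\LHin)$, which exists by Lemma~\ref{lemm:decreasingch}, and fix a representative $\Gamma \in \ncpchset{\N}{\LHin}$. For each member of $D$ I would fix a representative $\Lambda_\alpha \in \ncpchset{\M_\alpha}{\LHin}$ with $\M_\alpha$ commutative, a channel $\beta_\alpha \in \cpchset{\N}{\M_\alpha}$ with $\Gamma = \Lambda_\alpha \circ \beta_\alpha$ (available since $[\Gamma] \cocp [\Lambda_\alpha]$), and the diagonal broadcasting channel $\delta_\alpha \in \cpchset{\M_\alpha \otimes \M_\alpha}{\M_\alpha}$ of $\Lambda_\alpha$, i.e.\ the $\ast$-homomorphism dual to the diagonal embedding of the spectrum, which satisfies $\delta_\alpha (B \otimes \unit) = B = \delta_\alpha (\unit \otimes B)$ (this is the easy ``QC $\Rightarrow$ broadcastable'' direction of Lemma~\ref{lemm:nobroadcast}, where commutativity of $\M_\alpha$ is used). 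Setting $\Theta_\alpha := \Lambda_\alpha \circ \delta_\alpha \circ (\beta_\alpha \otimes \beta_\alpha) \in \cpchset{\N \otimes \N}{\LHin}$, one checks immediately that $\Theta_\alpha (A \otimes \unit) = \Gamma (A) = \Theta_\alpha (\unit \otimes A)$ for all $A \in \N$, while $\Theta_\alpha \cocp \Lambda_\alpha$ holds by construction.

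The next step is a limit over $D$. For $\alpha , \gamma \in D$ with $[\Lambda_\gamma] \cocp [\Lambda_\alpha]$ I would write $\Theta_\gamma = \Lambda_\alpha \circ \eta^{(\alpha)}_\gamma$ for a channel $\eta^{(\alpha)}_\gamma \in \cpchset{\N \otimes \N}{\M_\alpha}$, and then, applying Tychonoff's theorem to the product of the ultraweakly compact balls $(\M_\alpha)_{\norm{B}}$ over all pairs $(\alpha , B)$, extract a subnet along which every $\eta^{(\alpha)}_\gamma$ converges pointwise ultraweakly to a channel $\eta^{(\alpha)}_\infty \in \cpchset{\N \otimes \N}{\M_\alpha}$. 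By normality of the $\Lambda_\alpha$, the induced limit $\Theta := \uwlim_j \Theta_{\gamma (j)} \in \cpchset{\N \otimes \N}{\LHin}$ satisfies $\Theta = \Lambda_\alpha \circ \eta^{(\alpha)}_\infty$ for \emph{every} $\alpha$, as well as $\Theta (A \otimes \unit) = \Gamma (A) = \Theta (\unit \otimes A)$. Hence $\Theta \cocp \Lambda_\alpha$ for all members of $D$, so $[\Theta]$ is a lower bound of $D$ in $\CH (\LHin)$; since $[\Gamma]$ is the greatest lower bound, $[\Theta] \cocp [\Gamma]$, i.e.\ $\Theta = \Gamma \circ \Psi$ for some channel $\Psi \in \cpchset{\N \otimes \N}{\N}$. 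Then $\Gamma (A) = \Theta (A \otimes \unit) = \Gamma \Psi (A \otimes \unit)$ and likewise $\Gamma (A) = \Gamma \Psi (\unit \otimes A)$, so $\Psi$ is a broadcasting channel of $\Gamma$, and Lemma~\ref{lemm:nobroadcast} gives $[\Gamma] = \inf D \in \CHqc (\LHin)$. As $D$ was an arbitrary lower directed subset of $\CHqc (\LHin)$, this is the lower Dedekind-closedness; note that, unlike in some of the earlier proofs, the transfinite-sequence reduction of Lemma~\ref{lemm:dcsubset} is not needed here.

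I expect the main obstacle — in the sense of the idea one has to hit upon — to be the realisation that the ``partial broadcasting attempts'' $\Theta_\alpha$, each of which only broadcasts $\Gamma$ relative to the single channel $\Lambda_\alpha$, aggregate in the limit into one honest lower bound $\Theta$ of the whole family, whereupon the universal property of the infimum forces $\Theta$ to factor through $\Gamma$ and thereby supplies a single broadcasting channel for $\Gamma$. This is the one place where the conjugation strategy used for the other lemmas fails: one cannot deduce the statement by conjugating and invoking Lemma~\ref{lemm:increasingcomm}, since conjugates of QC channels are in general not QC, which is precisely why Lemma~\ref{lemm:nobroadcast} is indispensable. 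The remaining technical care goes into the simultaneous subnet extraction that makes $\Theta \cocp \Lambda_\alpha$ hold for all $\alpha$ at once rather than merely eventually, which the Tychonoff product over all of $D$ takes care of.
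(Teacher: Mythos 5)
Your argument is correct in substance but follows a genuinely different route from the paper's. The paper first reduces to decreasing transfinite sequences via Lemma~\ref{lemm:dcsubset}, then reruns the conjugation/supremum construction of Lemma~\ref{lemm:inch} so that the infimum is realized \emph{concretely}: each $\Lambda_\alpha$ is replaced by $\tL_\alpha = \tV^\ast (\cdot) \tV$ on $\tM_\alpha = \tpi (\A_\alpha)^\prime$ and the infimum is literally the same map restricted to $\tM = \bigcap_\alpha \tM_\alpha$; it then restricts broadcasting channels of the $\tL_\alpha$ (obtained from Lemma~\ref{lemm:nobroadcast}, since the $\tM_\alpha$ are in general noncommutative) to $\tM \otimes \tM$, takes an ultraweak cluster point, and verifies the broadcast identity directly on this explicit model. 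You instead keep the commutative representatives $\M_\alpha$, use only the trivial diagonal broadcasters there, form $\Theta_\alpha = \Lambda_\alpha \circ \delta_\alpha \circ (\beta_\alpha \otimes \beta_\alpha)$, pass to a cluster point $\Theta$, and then invoke the greatest-lower-bound property of $[\Gamma]$ to factor $\Theta$ through $\Gamma$ — the universal property replaces the paper's concrete realization. What your route buys: an arbitrary lower directed family is handled directly (no Iwamura--Markowsky reduction, no conjugation, no Corollary~\ref{coro:const}-type model of the infimum). What the paper's route buys: one never leaves the setting of normal channels with von Neumann outcome algebras, so no appeal to the infimum's universal property (or to normal extensions) is needed. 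Both proofs rest on Lemma~\ref{lemm:nobroadcast} and a Tychonoff cluster-point extraction, and both hinge on the same realisation you highlight, namely that the member-wise broadcasters aggregate into a single broadcaster of the infimum.

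One step needs a small patch: $\Theta \in \cpchset{\N \otimes \N}{\LHin}$ is not normal and its outcome space $\N \otimes \N$ is only a \cstar-algebra, so ``$[\Theta]$ is a lower bound of $D$ in $\CH (\LHin)$'' is not literally meaningful. Pass to the normal extension $\overline{\Theta} \in \ncpchset{(\N \otimes \N)^{\ast \ast}}{\LHin}$: since $\Theta \cocp \Lambda_\alpha$ with each $\Lambda_\alpha$ normal, the property of the normal extension recalled in Sec.~\ref{sec:prel} gives $\overline{\Theta} \cocp \Lambda_\alpha$ for all $\alpha$, hence $[\overline{\Theta}] \cocp [\Gamma]$, hence $\Theta \cocp \overline{\Theta} \cocp \Gamma$, which is exactly the factorization $\Theta = \Gamma \circ \Psi$ with $\Psi \in \cpchset{\N \otimes \N}{\N}$ that you use. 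With this insertion the remaining steps (well-definedness of $\beta_\alpha \otimes \beta_\alpha$ on the injective tensor product, the fact that the pointwise ultraweak cluster point of unital CP maps is again a channel, with $D$ reversed so that each $\eta^{(\alpha)}_\gamma$ is defined eventually, and the final appeal to Lemma~\ref{lemm:nobroadcast}) are routine and consistent with the paper's own Tychonoff arguments.
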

\begin{proof}
By Lemma~\ref{lemm:dcsubset}, we have only to show 
$\inf_{\alpha < \alpha_0} [\Lambda_\alpha] \in \CHqc (\LHin)$
for any decreasing transfinite sequence 
$([\Lambda_\alpha])_{\alpha < \alpha_0}$
in $\CHqc (\LHin) .$
We apply the same construction of 
a supremum channel
$\tG \in \ncpchset{\tN}{\LHin}$
of increasing 
$(\Lambda_\alpha^c)_{\alpha < \alpha_0} $
as in the proof of Lemma~\ref{lemm:inch}.
Then there exists an increasing transfinite sequence
$(\A_\alpha)_{\alpha < \alpha_0}$
of \cstar-subalgebras of $\tN$
such that 
the restriction
$\Gamma_\alpha := \tG \rvert_{\A_\alpha}$
is randomization-equivalent to $\Lambda_\alpha^c$
for each $\alpha < \alpha_0 $
and 
$\bigcup_{\alpha < \alpha_0} \A_\alpha$
is ultraweakly dense in $\tN .$
Take a Stinespring representation 
$(\tK , \tpi , \tV)$
of 
$\tG$
such that $\tpi$ is faithful and normal,
and define
$\tM_\alpha := \tpi (\A_\alpha)^\prime  $
and
$\tM := \tpi (\tN)^\prime .$
Then $\tM = \bigcap_{\alpha < \alpha_0} \tM_\alpha .$
We also define
$\tL_\alpha \in \ncpchset{\tM_\alpha}{\LHin}$
$(\alpha < \alpha_0)$
and
$\tL \in \ncpchset{\tM}{\LHin}$
by
\begin{gather*}
	\tL_\alpha (A)
	:=
	\tV^\ast A \tV 
	\quad
	(A \in \tM_\alpha) ,
	\\
	\tL (A)
	:=
	\tV^\ast A \tV 
	\quad
	(A \in \tM) . 
\end{gather*}
Since the double conjugate channel of a normal channel
is randomization-equivalent to the original channel,
we have
$\Lambda_\alpha \eqcp \tL_\alpha$
for each $\alpha< \alpha_0$
and 
$ [\tL] = \inf_{\alpha < \alpha_0}
[\Lambda_\alpha] .$

By assumption and Lemma~\ref{lemm:nobroadcast},
we can take a broadcasting channel
$T_\alpha \in \cpchset{\tM_\alpha \otimes \tM_\alpha}{\tM_\alpha}$
of $\tL_\alpha$
for each $\alpha < \alpha_0 .$
Since $\tM \otimes \tM \subseteq \tM_\alpha \otimes \tM_\alpha$
(e.g.\ Ref.~\onlinecite{takesakivol1}, Proposition~4.22; 
or Ref.~\onlinecite{brown2008c}, Proposition~3.6.1),
we may define $S_\alpha \in \cpchset{\tM \otimes \tM}{\tM_\alpha}$
as the restriction of $T_\alpha$ to
$\tM \otimes \tM .$
By Tychonoff's theorem,
there exist a subnet $(S_{\alpha(j)})_{j \in J}$ 
and
a channel
$S \in \cpchset{\tM \otimes \tM}{\calL (\tK)}$
such that
$S(X) = \uwlim_{j \in J} S_{\alpha (j)} (X)$
$(X \in \tM \otimes \tM ) .$
Since $S_{\alpha (j)} (X) \in \tM_\alpha$
eventually for each $X\in \tM \otimes \tM $
and each $\alpha < \alpha_0 ,$
the ultraweak closedness of each $\M_\alpha$ implies
$S (X) \in \bigcap_{\alpha < \alpha_0} \tM_\alpha = \tM $
for all $X \in \tM \otimes \tM .$
Thus 
$S \in \cpchset{\tM \otimes \tM}{\tM } .$
Then for each $A \in \tM$ we have
\begin{align*}
	\tL \circ S(A \otimes \unit_{\tK})
	&=
	\tV^\ast 
	\left(
	\uwlim_{j \in J} S_{\alpha (j)}
	(A \otimes \unit_{\tK})
	\right)
	\tV
	\\
	&=
	\uwlim_{j \in J}
	\tV^\ast 
	S_{\alpha (j)} (A \otimes \unit_{\tK})
	\tV
	\\
	&=
	\tV^\ast A \tV ,
	\\
	\tL \circ S (  \unit_{\tK}\otimes A)
	&=
	\uwlim_{j \in J}
	\tV^\ast 
	S_{\alpha (j)}  (  \unit_{\tK}\otimes A)
	\tV
	\\
	&=
	\tV^\ast A \tV  
	.
\end{align*}
Therefore $S$ is a broadcasting channel of $\tL ,$
and hence Lemma~\ref{lemm:nobroadcast} implies  
$[\tL] \in \CHqc (\LHin) .$
\end{proof}

\noindent
\textit{Proof of Theorem~\ref{theo:mainch}.}
If $\Min = \LHin ,$ the claim is immediate from Lemmas~\ref{lemm:inch},
\ref{lemm:increasingcomm}, \ref{lemm:decreasingch}, and \ref{lemm:decreasingqc}.
Since $\CH (\Min)$ and $\CHqc (\Min)$ can be identified with 
the lower subsets
\begin{gather*}
	\set{[\Lambda] \in \CH (\LHin)  |  [\Lambda] \cocp [\id_{\Min}]  }
	\\
	\set{[\Lambda] \in \CHqc (\LHin)  |  [\Lambda] \cocp [\id_{\Min}]  }
\end{gather*}
of $\CH (\LHin) ,$ respectively,
the claim for general $\Min$ follows from that for $\LHin .$
\qed

\noindent
\textit{Proof of Theorem~\ref{theo:mainex}.}
Let $([\E_i])_{i \in I}$ be an increasing (respectively, decreasing)
net in $\ETh .$
Then the net $([\Lambda_{\E_i}])_{i \in I}$ of normal channels 
is increasing (respectively, decreasing) in $\CH (\calL (\ltwo (\Theta)))$
and hence has a supremum (respectively, infimum)
$[\tL] \in \CH (\calL (\ltwo (\Theta))) $
by Theorem~\ref{theo:mainch}~(i).
If $([\E_i])_{i \in I}$ is increasing,
we have $\Lambda_{\E_i} \cocp \condi_{\Theta}$
for all $i \in I$
and hence
$\tL \cocp \condi_\Theta .$
If $([\E_i ])_{i \in I}$ is decreasing,
we have $\tL \cocp \Lambda_{\E_i} \cocp \condi_\Theta$
for any $i \in I .$
Thus by Lemma~\ref{lemm:ech}
$\tL = \Lambda_{\tE} $ for some 
$\tE \in \exper (\Theta ) .$
Then $[\tE]$ is a supremum (respectively, infimum)
of $([\E_i ])_{i \in I} .$
If $[\E_i] \in \Ecl (\Theta)$ for all $i \in I ,$ 
$\Lambda_{\E_i} \in \CHqc (\calL (\ltwo (\Theta)))$  for all $i \in I$
and hence 
$\tL =\Lambda_{\tE} \in  \CHqc (\calL (\ltwo (\Theta))) $ 
by Theorem~\ref{theo:mainch}~(ii).
In this case $[\tE] \in \Ecl (\Theta) .$  
\qed

The constructions of the supremum and the infimum in Lemmas~\ref{lemm:inch}
and \ref{lemm:decreasingch} are summarized as in the following corollary.

\begin{corollary}
\label{coro:const}
\begin{enumerate}[1]
\item
Let $\Min$ be a von Neumann algebra.
\begin{enumerate}[(i)]
\item
For any increasing transfinite sequence 
$([\Lambda_\alpha])_{\alpha < \alpha_0}$ in $\CH (\Min) $
and the supremum 
$[\tL] := \sup_{\alpha < \alpha_0} [\Lambda_\alpha]$ in $\CH (\Min) ,$
the representative elements $\Lambda_\alpha \in \ncpchset{\M_\alpha}{\Min}$ 
and $\tL \in \ncpchset{\tM}{\Min} $ can be taken such that 
$(\M_\alpha)_{\alpha < \alpha_0}$ is an increasing transfinite sequence of
von Neumann subalgebras of $\tM ,$
$\bigcup_{\alpha < \alpha_0} \M_\alpha $ is ultraweakly dense in $\tM ,$
and $\Lambda_\alpha$ is the restriction of $\tL$ to $\M_\alpha .$
\item
For any decreasing transfinite sequence 
$([\Lambda_\alpha])_{\alpha < \alpha_0}$ in $\CH (\Min) $
and the infimum 
$[\tL] := \inf_{\alpha < \alpha_0} [\Lambda_\alpha]$ in $\CH (\Min) ,$
the representative elements $\Lambda_\alpha \in \ncpchset{\M_\alpha}{\Min}$ 
and $\tL \in \ncpchset{\tM}{\Min} $ can be taken such that 
$(\M_\alpha)_{\alpha < \alpha_0}$ is a decreasing transfinite sequence of
von Neumann subalgebras of $\M_0 ,$
$\bigcap_{\alpha < \alpha_0} \M_\alpha = \tM ,$
$\Lambda_\alpha$ is the restriction of $\Lambda_0$ to $\M_\alpha  ,$
and 
$\tL$ is the restriction of $\Lambda_0$ (and hence of $\Lambda_\alpha$)
to $\tM .$
\end{enumerate}
\item
Let $\Theta \nonempty$ be a set.
\begin{enumerate}[(i)]
\item
For any increasing transfinite sequence 
$([\E_\alpha])_{\alpha < \alpha_0}$ in $\ETh $
and the supremum 
$[\tE] := \sup_{\alpha < \alpha_0} [\E_\alpha]$ in $\ETh ,$
the representative elements 
$\E_\alpha = (\M_\alpha , \phth^\alpha : \thin)$ 
and $\tE = (\tM , \widetilde{\vph}_\theta : \thin)$ can be taken such that 
$(\M_\alpha)_{\alpha < \alpha_0}$ is an increasing transfinite sequence of
von Neumann subalgebras of $\tM ,$
$\bigcup_{\alpha < \alpha_0} \M_\alpha $ is ultraweakly dense in $\tM ,$
and $\phth^\alpha$ is the restriction of $\widetilde{\vph}_\theta$ 
to $\M_\alpha .$
\item
For any decreasing transfinite sequence 
$([\E_\alpha])_{\alpha < \alpha_0}$ in $\ETh $
and the infimum 
$[\tE] := \inf_{\alpha < \alpha_0} [\E_\alpha]$ in $\ETh ,$
the representative elements 
$\E_\alpha = (\M_\alpha , \phth^\alpha : \thin)$ 
and $\tE = (\tM , \widetilde{\vph}_\theta : \thin)$ can be taken such that 
$(\M_\alpha)_{\alpha < \alpha_0}$ is a decreasing transfinite sequence of
von Neumann subalgebras of $\M_0 ,$
$\bigcap_{\alpha < \alpha_0} \M_\alpha = \tM ,$
$\phth^\alpha$ is the restriction of $\phth^0$ to $\M_\alpha  ,$
and 
$\widetilde{\vph}_\theta$ is the restriction of 
$\phth^0$ (and hence of $\phth^\alpha$)
to $\tM .$
\end{enumerate}
\end{enumerate}
\end{corollary}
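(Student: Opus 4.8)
The plan is to read off part~1 in the case $\Min=\LHin$ directly from the constructions already carried out in the proofs of Lemmas~\ref{lemm:inch} and \ref{lemm:decreasingqc}, to reduce a general von Neumann algebra $\Min$ to $\LHin$ exactly as in the proof of Theorem~\ref{theo:mainch}, and to obtain part~2 from part~1 through the correspondence $\E\mapsto\Lambda_\E$ together with Lemma~\ref{lemm:ech}.

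For part~1(ii) with $\Min=\LHin$, I would note that the argument in the proof of Lemma~\ref{lemm:decreasingqc}, \emph{up to but not including the no-broadcasting argument}, never uses commutativity and already yields everything required. Applied to an arbitrary decreasing transfinite sequence $([\Lambda_\alpha])_{\alpha<\alpha_0}$ in $\CH(\LHin)$ it produces: the supremum $\tG\in\ncpchset{\tN}{\LHin}$ of the increasing conjugates $(\Lambda_\alpha^c)$, built as in Lemma~\ref{lemm:inch}; an increasing net $(\A_\alpha)$ of unital \cstar-subalgebras of $\tN$ with ultraweakly dense union and $\tG|_{\A_\alpha}\eqcp\Lambda_\alpha^c$; a faithful normal Stinespring representation $(\tK,\tpi,\tV)$ of $\tG$; and the von Neumann algebras $\tM_\alpha:=\tpi(\A_\alpha)^\prime$ and $\tM:=\tpi(\tN)^\prime$. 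Then $(\tM_\alpha)$ is a decreasing net of von Neumann subalgebras of $\tM_0$, the commutation theorem gives $\bigcap_{\alpha<\alpha_0}\tM_\alpha=\tM$, the normal channels $\tL_\alpha:=\tV^\ast(\cdot)\tV$ on $\tM_\alpha$ and $\tL:=\tV^\ast(\cdot)\tV$ on $\tM$ are restrictions of $\tL_0$, and Proposition~\ref{prop:conj} gives $\tL_\alpha\eqcp\Lambda_\alpha$ and $[\tL]=\inf_{\alpha<\alpha_0}[\Lambda_\alpha]$. Taking $\tL_\alpha$ and $\tL$ as the representatives establishes part~1(ii).

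For part~1(i) with $\Min=\LHin$ I would deduce the statement from part~1(ii) by channel conjugation; this is the step I expect to require the most care. Given an increasing $([\Lambda_\alpha])_{\alpha<\alpha_0}$ in $\CH(\LHin)$, the conjugates $([\Lambda_\alpha^c])$ decrease (Proposition~\ref{prop:conj}), so part~1(ii) supplies representatives $\Gamma_\alpha=\Gamma_0|_{\M_\alpha}$ with $\Gamma_\alpha\eqcp\Lambda_\alpha^c$, a decreasing net $(\M_\alpha)$ of von Neumann subalgebras of $\M_0$ with $\bigcap_{\alpha<\alpha_0}\M_\alpha=\tM$, and $\tG:=\Gamma_0|_{\tM}$ with $[\tG]=\inf_{\alpha<\alpha_0}[\Lambda_\alpha^c]$. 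Fix a Stinespring representation $(\cK_0,\pi_0,V_0)$ of $\Gamma_0$ with $\pi_0$ faithful and normal and put $\N_\alpha:=\pi_0(\M_\alpha)^\prime\subseteq\tN:=\pi_0(\tM)^\prime$. Since $(\M_\alpha)$ decreases, $(\N_\alpha)$ is an increasing net of von Neumann subalgebras of $\tN$, and since $\pi_0$ is a normal isomorphism we have $\bigcap_{\alpha<\alpha_0}\pi_0(\M_\alpha)=\pi_0(\tM)$, so the commutation theorem shows $\bigcup_{\alpha<\alpha_0}\N_\alpha$ is ultraweakly dense in $\tN$. Restricting $\pi_0$ to $\M_\alpha$ and to $\tM$ exhibits $(\cK_0,\pi_0|_{\M_\alpha},V_0)$ and $(\cK_0,\pi_0|_{\tM},V_0)$ as Stinespring representations of $\Gamma_\alpha$ and of $\tG$; hence the normal channels $V_0^\ast(\cdot)V_0$ on $\N_\alpha$ and on $\tN$ are conjugates of $\Gamma_\alpha$ and of $\tG$, the former is visibly the restriction of the latter to $\N_\alpha$, and by Proposition~\ref{prop:conj} (conjugation is order-reversing and involutive, hence interchanges infima and suprema) they are randomization-equivalent to $\Lambda_\alpha$ and to $\sup_{\alpha<\alpha_0}[\Lambda_\alpha]$, respectively. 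The one delicate point, and my reason for routing through conjugation rather than extracting part~1(i) straight from Lemma~\ref{lemm:inch}, is that the outcome algebras must be genuine \emph{von Neumann} subalgebras of the outcome algebra of the supremum, not merely the \cstar-subalgebras appearing in Lemma~\ref{lemm:inch}; commutants are von Neumann by construction, so the conjugation route settles this for free. (The direct route would instead require replacing each \cstar-subalgebra by its ultraweak closure and checking that restricting the supremum to that closure remains in the class $[\Lambda_\alpha]$, which does hold, since representatives in $\CH$ are normal channels on von Neumann algebras and therefore coincide with their own normal extensions.)

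The passage to a general von Neumann algebra $\Min$ acting on $\Hin$ proceeds as in the proof of Theorem~\ref{theo:mainch}: composing the outcome with the inclusion $\Min\hookrightarrow\LHin$ identifies $\CH(\Min)$ with the lower set $\{[\Lambda]\in\CH(\LHin):[\Lambda]\cocp[\id_{\Min}]\}$; a supremum or infimum computed in $\CH(\LHin)$ of a net from this lower set again lies in it; and the representatives produced above are bounded above by $[\id_{\Min}]$, so they factor through the inclusion and descend to representatives with input space $\Min$ of the asserted form. Finally, for part~2, given an increasing (respectively, decreasing) transfinite sequence $([\E_\alpha])_{\alpha<\alpha_0}$ in $\ETh$, the sequence $([\Lambda_{\E_\alpha}])$ is increasing (respectively, decreasing) in $\CH(\LlTh)$ and each $\Lambda_{\E_\alpha}$ is a randomization of $\condi_\Theta$ (Lemma~\ref{lemm:ech}). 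I would apply part~1, observe that the resulting representatives $\widehat\Lambda_\alpha=\widehat\Lambda|_{\M_\alpha}$ (respectively, $\widehat\Lambda_0|_{\M_\alpha}$) and $\widehat\Lambda$ are still randomizations of $\condi_\Theta$ -- for the supremum because $\condi_\Theta$ is an upper bound, for the infimum because $\widehat\Lambda\cocp\widehat\Lambda_\alpha\cocp\condi_\Theta$ -- hence are of the form $\Lambda_{\E'_\alpha}$ and $\Lambda_{\tE}$ by Lemma~\ref{lemm:ech} for statistical experiments $\E'_\alpha=(\M_\alpha,\phth^\alpha:\thin)$ and $\tE=(\tM,\widetilde{\vph}_\theta:\thin)$, and then read off $\phth^\alpha=\widetilde{\vph}_\theta|_{\M_\alpha}$ (respectively, $=\phth^0|_{\M_\alpha}$) by comparing $\braket{\delta_\theta|\widehat\Lambda_\alpha(A)\delta_\theta}$ with $\braket{\delta_\theta|\widehat\Lambda(A)\delta_\theta}$ (respectively, $\braket{\delta_\theta|\widehat\Lambda_0(A)\delta_\theta}$) for $A\in\M_\alpha$. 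Since $\E'_\alpha\eqcp\E_\alpha$ and $[\tE]$ is the supremum (respectively, infimum) of $([\E_\alpha])$, taking $\E'_\alpha$ and $\tE$ as the representatives completes part~2.
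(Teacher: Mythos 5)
Your proposal is correct, and most of it --- part 1(ii) read off from the first paragraph of the proof of Lemma~\ref{lemm:decreasingqc} (which, as you note, never uses commutativity), the reduction from a general $\Min$ to $\LHin$ as in the proof of Theorem~\ref{theo:mainch}, and the derivation of part 2 through $\E \mapsto \Lambda_\E$ and Lemma~\ref{lemm:ech} --- coincides with the paper's own argument. Where you genuinely diverge is part 1(i): the paper obtains it directly from the construction of Lemma~\ref{lemm:inch}, taking a Stinespring representation $(\cK , \pi , V)$ of the supremum channel with $\pi$ faithful and normal, setting $\M_\alpha := \pi (\A_\alpha)^{\prime\prime}$, $\tM := \pi (\M)$, and using $V^\ast (\cdot) V$ as the channels, the key point being that the channel on $\pi (\A_\alpha)^{\prime\prime}$ is a double conjugate of (a channel equivalent to) $\Lambda_\alpha$ and hence $\eqcp \Lambda_\alpha$ by normality; you instead prove 1(ii) first and deduce 1(i) by one further application of the conjugation anti-automorphism of Proposition~\ref{prop:conj}, passing to the commutants $\pi_0 (\M_\alpha)^\prime$ inside a faithful normal Stinespring representation of $\Gamma_0$. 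Both arguments are sound and use the same ingredients; the paper's version is one step shorter and makes the double-conjugate mechanism explicit, while your route stays entirely within the conjugation formalism and automatically yields von Neumann outcome algebras, which is exactly the point you flag. One small caution: your parenthetical claim that the \emph{direct} route is justified merely because normal channels coincide with their normal extensions is too quick as stated --- restricting the supremum to the ultraweak closure of $\A_\alpha$ a priori only dominates $\Lambda_\alpha$, and the equivalence is what the paper's double-conjugate identification (or an equivalent argument) actually supplies --- but since that remark is not load-bearing in your proof, it does not affect its correctness.
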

\begin{proof}
We first show the claim~1~(i) when $\Min = \LHin .$
By applying the construction of Lemma~\ref{lemm:inch}, 
there exists a von Neumann algebra 
$\M ,$ a normal channel $\Lambda \in \ncpchset{\M}{\LHin} ,$
and an increasing transfinite sequence $(\A_\alpha)_{\alpha < \alpha_0}$
of \cstar-subalgebras of $\M$ such that
$\A := \bigcup_{\alpha < \alpha_0} \A_\alpha$ is ultraweakly dense in $\M ,$
$\Lambda_\alpha \eqcp \Lambda \rvert_{\A_\alpha} ,$
and $[\Lambda] = \sup_{\alpha < \alpha_0} [\Lambda_\alpha] .$  
We take a Stinespring representation $(\cK , \pi , V)$ of $\Lambda$
such that $\pi \colon \M \to \LK$ is faithful and normal.
We put $\M_\alpha := \pi (\A_\alpha)^{\prime \prime} ,$
$\tM := \pi (\M) ,$
and define $\tL \in \ncpchset{\tM}{\LHin}$
and redefine $\Gamma_\alpha \in \ncpchset{\M_\alpha}{\LHin}$
by
\begin{gather*}
	\tL (A) := V^\ast A V  \quad (A \in \tM) ,
	\\
	\Gamma_\alpha (B) := V^\ast B V  \quad (B \in \M_\alpha) .
\end{gather*}
Then $\tL \eqcp \Lambda. $
Furthermore, since $\Gamma_\alpha$ is the double conjugate channel 
of $\Lambda_\alpha ,$ the normality of $\Lambda_\alpha$ implies
$\Lambda_\alpha \eqcp \Gamma_\alpha .$ 
Hence if we redefine $\Lambda_\alpha$ as $\Gamma_\alpha ,$
then $(\M_\alpha , \Lambda_\alpha)_{\alpha < \alpha_0},$
$\tM$ and $\tL$ satisfy all the conditions of the claim. 
The claim for general $\Min$ reduces to the case of $\LHin .$

The claim~1~(ii) can be shown similarly by using the construction given  
in the first paragraph of the proof of Lemma~\ref{lemm:decreasingqc}.

The claim~2 follows from the claim~1 by considering $\Lambda_{\E_\alpha}$
and $\Lambda_{\tE} .$
\end{proof}

%

\section{Markov process of statistical experiments or channels}\label{sec:markov}
In this section we consider two examples of 
homogeneous Markov processes of statistical experiments or channels
and derive the infima of the channels.
Before going into the examples,
we first establish the general relation between 
a net of decreasing channels and a Markov process of statistical experiments
induced by the channels.

\begin{definition}
\label{defi:markov}
Let $\Min$ be a von Neumann algebra
and let $([\Lambda_i])_{i\in I}$ be a decreasing net in 
$\CH (\Min) .$
For a statistical experiment
$\E = (\Min , \phth : \thin)$ on $\Min ,$
the net of statistical experiments $\E_i := \Lambda_{i\ast} (\E)$
$(i \in I)$
is called the Markov process induced by $\E$ and $(\Lambda_i)_{i \in I} .$
\end{definition}

In Definition~\ref{defi:markov},
the nets $([\Lambda_i])_{i \in I}$
and
$([\E_i])_{i \in I}$
are decreasing in $\CH (\Min)$ and $\ETh ,$
respectively.
Hence Theorems~\ref{theo:mainch} and \ref{theo:mainex}
imply the existence of the infima
$[\tL] := \inf_{i \in I} [\Lambda_i] \in \CH (\Min)$
and
$[\tE] := \inf_{i \in I} [\E_i] \in \ETh .$

\begin{proposition}
\label{prop:markov}
In the above setting,
$[\tE] = [\tL_\ast (\E)] .$
\end{proposition}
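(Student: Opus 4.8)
The plan is to realize the passage $[\Lambda_i]\mapsto[\E_i]$ as a single order-preserving map between directed-complete posets, to show that this map is lower Scott-continuous, and to reduce the arbitrary directed index set to a transfinite sequence via Theorem~\ref{theo:markow} so that the explicit nested-algebra representation of the infimum from Corollary~\ref{coro:const} becomes available.

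First I would observe that for any normal channel $\Gamma\in\ncpchset{\M}{\Min}$ the randomized experiment $\Gamma_\ast(\E)=(\M,\phth\circ\Gamma\colon\thin)$ has an $\ETh$-class depending only on $[\Gamma]$: if $\Gamma\cocp\Gamma'$, say $\Gamma=\Gamma'\circ u$ with $u$ a normal channel, then $\Gamma_\ast(\E)=u_\ast(\Gamma'_\ast(\E))\cocp\Gamma'_\ast(\E)$, and symmetrically. Hence $\Psi\colon\CH(\Min)\to\ETh$, $\Psi([\Gamma]):=[\Gamma_\ast(\E)]$, is well defined and order-preserving, with $\Psi([\Lambda_i])=[\E_i]$ and $\Psi([\tL])=[\tL_\ast(\E)]$. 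Since $[\tL]=\inf_{i\in I}[\Lambda_i]$ and $\{[\Lambda_i]\}_{i\in I}$ is lower-directed, the proposition is exactly the identity $\Psi(\inf_i[\Lambda_i])=\inf_i\Psi([\Lambda_i])$, so it is enough to prove that $\Psi$ is lower Scott-continuous. By the lower-order analogue of Lemma~\ref{lemm:scott} (whose proof, resting on Theorem~\ref{theo:markow}, carries over verbatim), this reduces to checking $\Psi\bigl(\inf_{\alpha<\alpha_0}[\Lambda_\alpha]\bigr)=\inf_{\alpha<\alpha_0}\Psi([\Lambda_\alpha])$ for every decreasing transfinite sequence $([\Lambda_\alpha])_{\alpha<\alpha_0}$ in $\CH(\Min)$.

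For such a sequence, with $[\tL]:=\inf_\alpha[\Lambda_\alpha]$, I would invoke Corollary~\ref{coro:const}~1~(ii) to choose representatives so that $(\M_\alpha)_{\alpha<\alpha_0}$ is a decreasing transfinite sequence of von Neumann subalgebras of $\M_0$ with $\bigcap_{\alpha<\alpha_0}\M_\alpha=\tM$ and with $\Lambda_\alpha$ and $\tL$ equal to the restrictions of $\Lambda_0$ to $\M_\alpha$ and $\tM$. Writing $\mu_\theta:=\phth\circ\Lambda_0\in\Ss(\M_0)$ gives $\E_\alpha=(\M_\alpha,\mu_\theta|_{\M_\alpha}\colon\thin)$ and $\tL_\ast(\E)=(\tM,\mu_\theta|_{\tM}\colon\thin)$. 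Monotonicity of $\Psi$ already shows that $[\tL_\ast(\E)]$ is a lower bound of $([\E_\alpha])_\alpha$, so the task is to show it is the greatest one. Let $\F=(\N,\psth\colon\thin)$ be any statistical experiment with $\F\cocp\E_\alpha$ for all $\alpha$; by the normality remark after Definition~\ref{defi:randse} I can pick normal channels $c_\alpha\in\ncpchset{\N}{\M_\alpha}$ with $\psth(A)=\mu_\theta(c_\alpha(A))$ for all $\theta$ and all $A\in\N$. Regarding each $c_\alpha$ as a channel into $\M_0$ with range in $\M_\alpha$ and using the ultraweak-compactness argument of Sec.~\ref{sec:prel}, I pass to a subnet $(c_{\alpha(j)})_{j\in J}$ such that $c_\infty(A):=\uwlim_j c_{\alpha(j)}(A)\in\M_0$ exists for each $A$ and defines a channel $c_\infty\in\cpchset{\N}{\M_0}$. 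For fixed $\alpha$ one eventually has $\alpha(j)\geq\alpha$, hence $c_{\alpha(j)}(A)\in\M_\alpha$, and ultraweak closedness of $\M_\alpha$ forces $c_\infty(A)\in\M_\alpha$; since this holds for every $\alpha$, $c_\infty(A)\in\bigcap_\alpha\M_\alpha=\tM$, so $c_\infty\in\cpchset{\N}{\tM}$. Normality of each $\mu_\theta$ then gives $\psth(A)=\lim_j\mu_\theta(c_{\alpha(j)}(A))=\mu_\theta(c_\infty(A))$, i.e.\ $\psth=(\mu_\theta|_{\tM})\circ c_\infty$, whence $\F\cocp\tL_\ast(\E)$. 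Applying this to a representative of the infimum $\inf_\alpha[\E_\alpha]$ (which exists by Theorem~\ref{theo:mainex} and is a lower bound of $([\E_\alpha])_\alpha$) yields $\inf_\alpha[\E_\alpha]\cocp[\tL_\ast(\E)]$, so together with the lower-bound inequality we get $\inf_\alpha[\E_\alpha]=[\tL_\ast(\E)]=\Psi(\inf_\alpha[\Lambda_\alpha])$, as required.

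The hard part will be this last step: upgrading the individual factorizations of $\F$ through the $\E_\alpha$ to a single factorization through $\tL_\ast(\E)$. The abstract infimum $[\tL]$ gives no handle here, and the whole point of routing through Corollary~\ref{coro:const} is that in a suitable representation the infimum channel lives on the honest intersection $\bigcap_\alpha\M_\alpha$, so that ultraweak closedness of the $\M_\alpha$ together with normality of the states $\mu_\theta$ can close the argument. Everything else --- well-definedness and monotonicity of $\Psi$, the easy lower-bound direction, and the reduction from arbitrary directed sets to transfinite sequences --- should be routine.
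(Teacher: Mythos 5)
Your proposal is correct and follows essentially the same route as the paper: it realizes the passage as the order-preserving map $h_\E$ of Proposition~\ref{prop:sconti}, reduces to decreasing transfinite sequences via the lower analogue of Lemma~\ref{lemm:scott}, uses the concrete nested-subalgebra representatives of Corollary~\ref{coro:const}~1~(ii), and then your Tychonoff/ultraweak-closedness/normality argument is precisely the paper's Lemma~\ref{lemm:decreasing_alg}, merely proved inline. The only (harmless) deviations are that you verify just the lower Scott-continuity rather than the full bi-Scott-continuity and that you take the factorizing channels $c_\alpha$ normal, which is not needed since only normality of the states is used.
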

From Proposition~\ref{prop:markov}, we can know the infimum statistical experiment
$\tE$ if we know the infimum channel $\tL .$

Proposition~\ref{prop:markov} is a corollary of the following more general fact.
\begin{proposition}\label{prop:sconti}
Let $\E = (\Min , \phth : \thin)$ be a statistical experiment.
Then the map
\[
	h_{\E}
	\colon
	\CH (\Min)
	\in [\Lambda]
	\longmapsto
	[\Lambda_\ast (\E)]
	\in
	\ETh
\]
is bi-Scott-continuous.
\end{proposition}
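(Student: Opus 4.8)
The plan is to reduce Proposition~\ref{prop:sconti} to the bi-Scott-continuity of a composition map on channels, prove a gluing lemma for increasing families of von Neumann subalgebras with ultraweakly dense union, and then settle the increasing case directly and the decreasing case by channel conjugation.

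First I would record that $h_\E$ is order-preserving: if $[\Lambda]\cocp[\Gamma]$, say $\Lambda=\Gamma\circ\beta$ for a normal channel $\beta$, then $\Lambda_\ast(\E)=\beta_\ast(\Gamma_\ast(\E))\cocp\Gamma_\ast(\E)$ (so in particular $h_\E$ is well defined on equivalence classes), and hence by Lemma~\ref{lemm:scott} and its order-dual it suffices to verify the supremum and infimum conditions along increasing and decreasing transfinite sequences. From the definitions, $\Lambda_{\Lambda_\ast(\E)}=\Lambda_\E\circ\Lambda$ for every normal channel $\Lambda$ with input space $\Min$. Moreover the order-embedding $j\colon\ETh\hookrightarrow\CH(\LlTh)$, $[\F]\mapsto[\Lambda_\F]$, whose image is the lower set $\downarrow[\condi_\Theta]$ by Lemma~\ref{lemm:ech}, preserves directed suprema and infima: a supremum or infimum of a subset of $j(\ETh)$ computed in $\CH(\LlTh)$ lies below the fixed bound $[\condi_\Theta]$ (it is a supremum that is at most this bound, or an infimum that is at most some element of $j(\ETh)$), hence already lies in $j(\ETh)$ and is the corresponding limit there. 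Therefore it is enough to show that the map $\Phi^\ast\colon\CH(\Min)\to\CH(\LlTh)$, $[\Lambda]\mapsto[\Lambda_\E\circ\Lambda]$, is bi-Scott-continuous.

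The technical core is the following claim, which I would prove first. \emph{Let $N\colon\mathcal{P}\to\LlTh$ be a normal channel and let $(\mathcal{P}_\alpha)$ be an increasing net of von Neumann subalgebras of $\mathcal{P}$ with $\bigcup_\alpha\mathcal{P}_\alpha$ ultraweakly dense in $\mathcal{P}$; then $[N]=\sup_\alpha[N\rvert_{\mathcal{P}_\alpha}]$ in $\CH(\LlTh)$.} Monotonicity makes $[N]$ an upper bound. For minimality, let $\B$ be the norm closure of $\bigcup_\alpha\mathcal{P}_\alpha$, a unital $\sigma$-weakly dense \cstar-subalgebra of $\mathcal{P}$ in which $\bigcup_\alpha\mathcal{P}_\alpha$ is norm dense; given any normal channel $\Gamma$ with $N\rvert_{\mathcal{P}_\alpha}\cocp\Gamma$ for all $\alpha$, Lemma~\ref{lemm:increasing_alg} applied to $N\rvert_\B$ and the $\mathcal{P}_\alpha$ gives $N\rvert_\B\cocp\Gamma$, hence $\overline{N\rvert_\B}\cocp\Gamma$. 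It then remains to note $\overline{N\rvert_\B}\eqcp N$: the inclusion $\B\hookrightarrow\mathcal{P}$ extends to a surjective normal $\ast$-homomorphism $\kappa\colon\B^{\ast\ast}\to\mathcal{P}$, so $\overline{N\rvert_\B}=N\circ\kappa$ by uniqueness of the normal extension, and writing $z$ for the central support of $\kappa$ and fixing a normal state $\psi$ on $\mathcal{P}$, the map $x\mapsto xz+\psi(x)(\unit-z)$ is a unital completely positive section of $\kappa$, whence $N=\overline{N\rvert_\B}\circ(\text{section})\cocp\overline{N\rvert_\B}$; the reverse relation is clear. Thus $N\cocp\Gamma$, proving the claim.

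Granting the claim, the increasing case is immediate: by Corollary~\ref{coro:const}~1(i) choose representatives with $(\M_\alpha)$ increasing and ultraweakly dense in $\tM$ and $\Lambda_\alpha=\tL\rvert_{\M_\alpha}$, so $\Lambda_\E\circ\Lambda_\alpha=(\Lambda_\E\circ\tL)\rvert_{\M_\alpha}$, and the claim with $N=\Lambda_\E\circ\tL$ yields $\sup_\alpha\Phi^\ast([\Lambda_\alpha])=[\Lambda_\E\circ\tL]=\Phi^\ast([\tL])$. For the decreasing case, Corollary~\ref{coro:const}~1(ii) provides representatives with $(\M_\alpha)$ decreasing, $\bigcap_\alpha\M_\alpha=\tM$ and $\Lambda_\alpha=\Lambda_0\rvert_{\M_\alpha}$; setting $N:=\Lambda_\E\circ\Lambda_0$ and fixing a faithful normal Stinespring representation $(\cK,\pi,V)$ of $N$, I pass to conjugates $N^c_\alpha\colon\pi(\M_\alpha)'\to\LlTh$, $B\mapsto V^\ast BV$. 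Since $\pi$ is faithful, $\bigcap_\alpha\pi(\M_\alpha)=\pi(\tM)$, so $(\pi(\M_\alpha)')$ is an increasing net of von Neumann subalgebras of $\pi(\tM)'$ with ultraweakly dense union and $N^c_\alpha=N^c_\tM\rvert_{\pi(\M_\alpha)'}$, where $N^c_\tM(B):=V^\ast BV$ on $\pi(\tM)'$; the claim gives $\sup_\alpha[N^c_\alpha]=[N^c_\tM]$. Because $[\Psi]\mapsto[\Psi^c]$ is an order-reversing involution of $\CH(\LlTh)$ (Proposition~\ref{prop:conj}), it carries this supremum to an infimum, and since double conjugation returns the original class this reads $\inf_\alpha[N\rvert_{\M_\alpha}]=[N\rvert_\tM]$, i.e.\ $\inf_\alpha\Phi^\ast([\Lambda_\alpha])=\Phi^\ast([\tL])$.

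I expect the main obstacle to be the claim above, and within it the identity $\overline{N\rvert_\B}\eqcp N$ for a $\sigma$-weakly dense \cstar-subalgebra $\B$, which rests on realizing $\mathcal{P}$ as a normal quotient of $\B^{\ast\ast}$ and exhibiting a unital completely positive section of that quotient; a secondary point is the bookkeeping in the decreasing case, where the faithfulness of $\pi$ is used essentially to identify $\bigcap_\alpha\pi(\M_\alpha)$ with $\pi(\tM)$ and thus to obtain the ultraweakly dense union in $\pi(\tM)'$.
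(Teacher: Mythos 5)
Your proposal is correct, and its skeleton coincides with the paper's: reduce to transfinite sequences via Lemma~\ref{lemm:scott}, pass to the channel picture through $[\E]\mapsto[\Lambda_\E]$ (whose image is the lower set $\downarrow[\condi_\Theta]$, so directed suprema and infima computed in $\CH(\LlTh)$ stay in it), take representatives as in Corollary~\ref{coro:const}, and use Lemma~\ref{lemm:increasing_alg} for the increasing half. Where you genuinely diverge is the decreasing half: the paper proves a dedicated statement, Lemma~\ref{lemm:decreasing_alg}, directly --- for a decreasing net of von Neumann subalgebras the infimum of the restricted experiments is the restriction to the intersection, established by one Tychonoff subnet argument using ultraweak closedness of each $\M_i$ and normality of the states --- and then the proposition follows immediately; you instead derive the decreasing case from the increasing one by conjugation, using a faithful normal Stinespring representation, the identity $\bigl(\bigcap_\alpha\pi(\M_\alpha)\bigr)^\prime=\bigl(\bigcup_\alpha\pi(\M_\alpha)^\prime\bigr)^{\prime\prime}$, and Proposition~\ref{prop:conj}. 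Your route is uniform with the paper's general philosophy (cf.\ Lemma~\ref{lemm:decreasingch}) but is heavier, while the paper's Lemma~\ref{lemm:decreasing_alg} is shorter and avoids Stinespring/commutant machinery. A second difference is that you make explicit the gluing step the paper leaves terse (``apply Lemma~\ref{lemm:increasing_alg} to the $\Lambda_{\E_\alpha}$'s''): your claim that $[N]=\sup_\alpha[N\rvert_{\mathcal{P}_\alpha}]$ when the union is only ultraweakly dense, bridged by $\overline{N\rvert_\B}\eqcp N$ via the normal quotient $\kappa\colon\B^{\ast\ast}\to\mathcal{P}$ and a unital normal CP section supported on the central support of $\kappa$, is a correct and useful filling-in (the paper effectively handles this inside Corollary~\ref{coro:const} with the double-conjugate trick); only note that your formula $x\mapsto xz+\psi(x)(\unit-z)$ should be read as $x\mapsto(\kappa\rvert_{z\B^{\ast\ast}})^{-1}(x)+\psi(x)(\unit-z)$, since $x$ lives in $\mathcal{P}$, not in $\B^{\ast\ast}$.
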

For the proof of Proposition~\ref{prop:sconti}, we first show the following lemma, 
which can be regarded as the \lq\lq{}conjugate\rq\rq{} version of Lemma~\ref{lemm:increasing_alg}.
\begin{lemma}
\label{lemm:decreasing_alg}
Let $\E_0 := (\M , \phth : \thin) $ be a statistical experiment,
let $(\M_i)_{i \in I}$ be a net of von Neumann subalgebras of $\M$
decreasing with respect to the set inclusion $\subseteq ,$
let $\phth^i$ be the restriction of $\phth$ to $\M_i ,$
and let $\E_i := (\M_i , \phth^i : \thin) .$
Then $\inf_{i \in I} [\E_i] = [\tE] ,$
where 
$\tE:= (\tM , \widetilde{\vph}_\theta : \thin) ,$
$\tM := \bigcap_{i \in I} \M_i ,$
and 
$\widetilde{\vph}_\theta$ is the restriction of $\phth$ to $\tM .$ 
\end{lemma}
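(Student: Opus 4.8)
The plan is to obtain this from the increasing case, Lemma~\ref{lemm:increasing_alg}, by passing to conjugate channels; this is exactly why the statement is the \lq\lq{}conjugate\rq\rq{} counterpart of that lemma. First I would dispose of the trivial inequality. Since $\tM=\bigcap_{i\in I}\M_i\subseteq\M_i$, each inclusion $\tM\hookrightarrow\M_i$ is a normal unital $\ast$-homomorphism, hence a normal channel, and $\widetilde{\vph}_\theta$ is the composition of $\phth^i$ with it; thus $[\tE]\cocp[\E_i]$ for all $i$, and likewise $([\E_i])_{i\in I}$ is decreasing, so $[\tE]$ is a lower bound of $([\E_i])_{i\in I}$. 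By Theorem~\ref{theo:mainex} the infimum $\inf_{i\in I}[\E_i]$ exists; fix a representative $\F=(\N,\psi_\theta:\thin)$ of it. Since $[\tE]\cocp[\F]$ already, it remains to prove $\F\cocp\tE$, equivalently (via the order embedding $\E\mapsto\Lambda_\E$) $\Lambda_\F\cocp\Lambda_\tE$.

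Next I would conjugate. I would fix a Stinespring representation $(\cK,\pi,V)$ of $\Lambda_{\E_0}\colon\M\to\LlTh$ with $\pi$ faithful and normal. Then $(\cK,\pi\rvert_{\M_i},V)$ and $(\cK,\pi\rvert_{\tM},V)$ are Stinespring representations of $\Lambda_{\E_i}$ and $\Lambda_\tE$, with associated conjugate channels $\Lambda_{\E_i}^c$ and $\Lambda_\tE^c$ both acting by $B\mapsto V^\ast BV$, on $\N_i:=\pi(\M_i)^\prime$ and on $\pi(\tM)^\prime$ respectively; since any two conjugate channels of a channel are randomization-equivalent, these representatives are harmless. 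Because $\pi$ is an injective normal $\ast$-homomorphism, $\pi(\tM)=\bigcap_i\pi(\M_i)$, so $\pi(\tM)^\prime=(\bigcup_i\N_i)^{\prime\prime}$; hence $(\N_i)_{i\in I}$ is an increasing net of von Neumann subalgebras of $\pi(\tM)^\prime$ with ultraweakly dense union, and $\Lambda_{\E_i}^c=\Lambda_\tE^c\rvert_{\N_i}$. From $\F\cocp\E_i$ and Proposition~\ref{prop:conj} I get $\Lambda_{\E_i}^c\cocp\Lambda_\F^c$ for every $i\in I$.

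The hard part will be passing from this to $\Lambda_\tE^c\cocp\Lambda_\F^c$, since $\bigcup_i\N_i$ is only \emph{ultraweakly} dense in $\pi(\tM)^\prime$, whereas Lemma~\ref{lemm:increasing_alg} needs a \emph{norm}-dense union. To bridge this I would let $\A$ be the norm closure of $\bigcup_i\N_i$ inside $\pi(\tM)^\prime$, a unital \cstar-subalgebra that is still ultraweakly dense in $\pi(\tM)^\prime$, and set $\Phi:=\Lambda_\tE^c\rvert_\A$. Lemma~\ref{lemm:increasing_alg} applied to $\Phi$, the norm-dense increasing union $(\N_i)_{i\in I}$ and $\Lambda_\F^c$, using $\Phi\rvert_{\N_i}=\Lambda_{\E_i}^c\cocp\Lambda_\F^c$, yields $\Phi\cocp\Lambda_\F^c$, so it suffices to show $\Lambda_\tE^c\eqcp\Phi$. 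The direction $\Phi\cocp\Lambda_\tE^c$ is the restriction channel, and since $\Lambda_\tE^c$ is normal the normal-extension property upgrades it to $\overline{\Phi}\cocp\Lambda_\tE^c$ for the normal extension $\overline{\Phi}\in\ncpchset{\A^{\ast \ast}}{\LlTh}$ of $\Phi$. For the converse, the inclusion $\A\hookrightarrow\pi(\tM)^\prime$ extends to a surjective normal $\ast$-homomorphism $q\colon\A^{\ast \ast}\to\pi(\tM)^\prime$, whose kernel is a central direct summand of $\A^{\ast \ast}$, so $q$ admits a unital completely positive right inverse $s$; since $\overline{\Phi}$ and $\Lambda_\tE^c\circ q$ are normal channels that both restrict to $\Phi$ on $\A$, they coincide, and therefore $\Lambda_\tE^c=\Lambda_\tE^c\circ q\circ s=\overline{\Phi}\circ s\cocp\overline{\Phi}$. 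Thus $\Lambda_\tE^c\eqcp\overline{\Phi}$, and since $\Phi\cocp\Lambda_\F^c$ gives $\overline{\Phi}\cocp\Lambda_\F^c$, we conclude $\Lambda_\tE^c\cocp\Lambda_\F^c$. Conjugating back, Proposition~\ref{prop:conj} turns this into $\Lambda_\F\eqcp(\Lambda_\F^c)^c\cocp(\Lambda_\tE^c)^c\eqcp\Lambda_\tE$ ($\Lambda_\F$ and $\Lambda_\tE$ being normal), i.e.\ $\F\cocp\tE$; with $[\tE]\cocp[\F]$ this gives $[\F]=[\tE]$, that is, $\inf_{i\in I}[\E_i]=[\tE]$.
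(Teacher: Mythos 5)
Your proof is correct, but it takes a genuinely different route from the paper. The paper proves this lemma directly, with the same compactness trick it uses elsewhere: given any lower bound $\F = \seF$ with witnessing channels $\alpha_i \in \cpchset{\N}{\M_i} \subseteq \cpchset{\N}{\M}$ satisfying $\psth = \phth \circ \alpha_i ,$ it extracts (via Tychonoff) a subnet with an ultraweak pointwise limit $\alpha ;$ since $\alpha_{i(j)}(A) \in \M_i$ eventually and each $\M_i$ is ultraweakly closed, $\alpha(A) \in \bigcap_i \M_i = \tM ,$ and the normality of $\phth$ gives $\widetilde{\vph}_\theta \circ \alpha = \psth ,$ so $\F \cocp \tE$ in a few lines, with no need to invoke Theorem~\ref{theo:mainex} for existence. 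You instead pass to conjugate channels and reduce to Lemma~\ref{lemm:increasing_alg}, which is exactly the duality the paper itself exploits for Lemmas~\ref{lemm:decreasingch} and \ref{lemm:decreasingqc} and which mirrors Corollary~\ref{coro:const}~1~(ii); your identifications $\pi(\tM) = \bigcap_i \pi(\M_i)$ and $\pi(\tM)^\prime = (\bigcup_i \N_i)^{\prime\prime}$ are correct, and bridging the gap between ultraweak and norm density via the normal extension $\overline{\Phi} = \Lambda_{\tE}^c \circ q$ on $\A^{\ast\ast}$ works. The one point you gloss over is the ``unital completely positive right inverse'' of $q$: the natural homomorphic section, the inverse of $q$ restricted to $(\unit - z)\A^{\ast\ast}$ with $\ker q = z \A^{\ast\ast} ,$ is \emph{not} unital; you must correct it, e.g.\ by $s(B) := \bigl(q\rvert_{(\unit - z)\A^{\ast\ast}}\bigr)^{-1}(B) + \omega(B) z$ for a fixed state $\omega ,$ which is unital, CP, and still a right inverse because $q(z)=0$ (equivalently, because $\overline{\Phi}(z) = \Lambda_{\tE}^c(q(z)) = 0$). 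With that small repair your argument is complete; the trade-off is that your route is longer and leans on the normal-extension and conjugation machinery, whereas the paper's direct argument is shorter and self-contained, but yours makes transparent why this lemma is the conjugate counterpart of Lemma~\ref{lemm:increasing_alg}.
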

\begin{proof}
That
$[\tE] \cocp [\E_i]$
for all $i \in I$ is immediate from the definition.
Therefore we have only to show 
$\F \cocp \tE $
for any statistical experiment
$\F = \seF$ satisfying $\F \cocp \E_i $
for all $i \in I .$
By assumption there exists a channel
$\alpha_i \in \cpchset{\N}{\M_i} \subseteq \cpchset{\N}{\M}$ such that
$\psth = \phth^i \circ \alpha_i = \phth \circ \alpha_i$
for all $i \in I .$
By Tychonoff's theorem, there exist a subnet $(\alpha_{i(j)})_{j \in J}$
and a channel $\alpha \in \cpchset{\N}{\M}$ such that
$\alpha_{i(j)} (A) \xrightarrow{uw} \alpha(A)$ for each $A \in \N .$
Since $\alpha_{i(j)} (A) \in \M_i$ eventually, 
the ultraweak closedness of $\M_i$ implies
$\alpha (A) = \uwlim_{j \in J} \alpha_{i(j)} (A) \in \M_i $
for each $A \in \N$ and each $i \in I .$
Hence $\alpha \in \cpchset{\N}{\tM} .$
Furthermore, for each $\thin$ and each $A \in \N ,$
\[
\widetilde{\vph}_\theta \circ \alpha (A)
=
\uwlim_{j \in J} \phth \circ \alpha_{i(j)} (A)
=
\psth (A) ,
\]
where we used the normality of $\phth .$
Therefore $\F \cocp \tE .$
\end{proof}

\noindent
\textit{Proof of Proposition~\ref{prop:sconti}.}
For normal channels $\Lambda , \Gamma \in \ncpchset{}{\Min} ,$
we can easily see that
$\Lambda \cocp \Gamma$ implies $\Lambda_\ast (\E) \cocp \Gamma_\ast (\E) .$
Hence $h_\E$ is well-defined and order-preserving.

We first show the lower Scott-continuity of $h_\E . $
By Lemma~\ref{lemm:scott} we have only to show
$h_\E  (  \inf_{\alpha < \alpha_0}  [\Lambda_\alpha]  ) 
=  \inf_{\alpha < \alpha_0}  h_\E ( [\Lambda_\alpha]  )  $
for any decreasing transfinite sequence $([\Lambda_\alpha])_{\alpha < \alpha_0}$
in $\CH (\Min) .$
We can take the channels $\Lambda_\alpha \in \ncpchset{\M_\alpha}{\Min}$
and $\tL \in \ncpchset{\tM}{\Min}$
as in 1~(ii) of Corollary~\ref{coro:const}.
Then the claim is immediate from Lemma~\ref{lemm:decreasing_alg}.

The upper Scott-continuity follows similarly from 
Corollary~\ref{coro:const} and 
by applying Lemma~\ref{lemm:increasing_alg} to $\Lambda_{\E_\alpha}$\rq{}s.
\qed

Now we consider two examples of homogeneous Markov processes 
on infinite-dimensional separable Hilbert spaces.
Here the term \lq\lq{}homogeneous\rq\rq{} means that
the family of decreasing channels is a discrete or continuous one-parameter semigroup
of channels with the same input and outcome space.

\subsection{Block-diagonalization with irrational translation}
\label{subsec:incommensurate}
Let $ L^p (\realn ) $ $(1 \leq p \leq \infty)$
be the $L^p$-space of the 
Lebesgue measure on the real line $\realn $
and let $\cH := L^2 (\realn )  .$
We denote by $\id_{L^\infty (\realn)} \in \ncpchset{ L^\infty (\realn) }{ \LH}$ 
the natural embedding,
where each $ f \in L^\infty (\realn)$ is identified with an element of $\LH$ by
\[
	(f g ) (x) := f(x)g (x)
	\quad
	( g \in L^2 (\realn)) .
\]
Operationally, $\id_{L^\infty (\realn)}$ 
is the QC channel corresponding to the
position measurement of a $1$-dimensional quantum particle.
For each $a \in \realn ,$ we define a projection $P_a \in L^\infty (\realn)$ 
and a unitary operator $U_a \in \LH$ 
by
\[
	  P_a  := \chi_{[a , a+1)}  ,
	 \quad
	 (U_a f) (x) 
	 :=
	 f(x-a)
	 \quad
	 (x \in \realn, f \in L^2 (\realn)) ,
\]
where $\chi_S$ is the indicator function of a set $S .$
We fix an irrational number $\alpha \in \realn$ and
define a normal channel $\Lshift \in \ncpch{\LH}$
by
\[
	\Lshift (A)
	:=
	\sum_{n\in \integer}
	P_n U_\alpha^\ast A  U_\alpha P_n 
	\quad
	(A \in \LH) ,
\]
where $\integer$ denotes the set of integers.
In the Schr\"odinger picture,
$\Lshift$ corresponds to the block-diagonalization in the position-representation 
followed by an irrational translation.
Then the sequence
$([(\Lshift )^k])_{k \in \natn}$ is decreasing in
$\CH (\LH)$ and 
has an infimum $\inf_{k \in \natn} [(\Lshift )^k]  ,$
which is given by

\begin{theorem}\label{theo:shift}
	$\inf_{k \in \natn} [(\Lshift )^k] = [\id_{L^\infty (\realn)}]
	 = \sup_{k\in \natn}  [((\Lshift )^k)^c]   .$
\end{theorem}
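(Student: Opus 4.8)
The plan is to show, up to randomization equivalence, that $(\Lshift)^k$ is the pinching with respect to a finite common refinement of shifted unit-interval partitions, and to exploit the irrationality of $\alpha$ to make these refinements generate $L^\infty(\realn)$.

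First I would factor $\Lshift = \Delta \circ \gamma$, where $\gamma := \mathrm{Ad}(U_\alpha^\ast) \in \ncpch{\LH}$ is the $\ast$-automorphism $A \mapsto U_\alpha^\ast A U_\alpha$ and $\Delta(A) := \sum_{n \in \integer} P_n A P_n$ is the pinching attached to the partition $\{[n,n+1)\}_{n}$. A telescoping argument gives $(\Lshift)^k = \Delta_0 \circ \Delta_1 \circ \cdots \circ \Delta_{k-1} \circ \gamma^k$ with $\Delta_j := \gamma^j \circ \Delta \circ \gamma^{-j}$, and a direct computation shows $\Delta_j(A) = \sum_{n} P_n^{(j)} A P_n^{(j)}$ where $P_n^{(j)} := \chi_{[n-j\alpha,\,n+1-j\alpha)}$ is a projection in $L^\infty(\realn)$. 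Since multiplication operators commute, all the $P_n^{(j)}$ commute, so expanding $\Delta_0 \circ \cdots \circ \Delta_{k-1}$ and collecting products of the $P_n^{(j)}$ collapses it to a single pinching $\Delta_{\mathcal P_k}(A) := \sum_{I \in \mathcal P_k} \chi_I A \chi_I$, where $\mathcal P_k$ is the partition of $\realn$ into the half-open intervals $\bigcap_{j=0}^{k-1}[n_j - j\alpha,\, n_j + 1 - j\alpha)$, $(n_0,\dots,n_{k-1}) \in \integer^k$. As $\gamma^k$ is a $\ast$-automorphism, precomposition by it preserves the equivalence class, so $(\Lshift)^k \eqcp \Delta_{\mathcal P_k}$.

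Next let $\M_k \subseteq \LH$ be the von Neumann algebra of operators commuting with every $\chi_I$, $I \in \mathcal P_k$, equivalently the $\mathcal P_k$-block-diagonal operators, which is the range of $\Delta_{\mathcal P_k}$. A pinching is always randomization-equivalent to the inclusion of its range, so $\Delta_{\mathcal P_k} \eqcp \id_{\LH}|_{\M_k}$, and $\M_{k+1} \subseteq \M_k$ because $\mathcal P_{k+1}$ refines $\mathcal P_k$. The key elementary observation is that for any decreasing sequence $(\M_k)$ of von Neumann subalgebras of $\LH$, $\inf_k [\,\id_{\LH}|_{\M_k}\,] = [\,\id_{\LH}|_{\bigcap_k \M_k}\,]$: if a normal channel $\Gamma$ satisfies $\Gamma \cocp \id_{\LH}|_{\M_k}$ for all $k$, then $\Gamma$ factors through $\id_{\LH}|_{\M_k}$, forcing its range into each $\M_k$, hence into $\bigcap_k \M_k$, so $\Gamma$ factors through the inclusion of $\bigcap_k \M_k$. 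It remains to compute $\bigcap_k \M_k$, and here the irrationality of $\alpha$ is essential: the endpoints occurring in $\mathcal P_k$ are $\{n - j\alpha : n \in \integer,\ 0 \le j \le k-1\}$, and since $\alpha \notin \rational$ the set $\{n - j\alpha : n \in \integer,\ j \ge 0\}$ is dense in $\realn$ (Weyl/Kronecker equidistribution of $(j\alpha \bmod 1)_j$); consequently the von Neumann algebra generated by $\{\chi_I : I \in \mathcal P_k,\ k \in \natn\}$ is all of $L^\infty(\realn)$, which is maximal abelian in $\LH$, so $\bigcap_k \M_k = L^\infty(\realn)' = L^\infty(\realn)$. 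Since $\id_{\LH}|_{L^\infty(\realn)}$ is precisely the embedding $\id_{L^\infty(\realn)}$, this yields $\inf_k [(\Lshift)^k] = [\id_{L^\infty(\realn)}]$.

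Finally, the embedding $\iota := \id_{L^\infty(\realn)}$ is self-conjugate: using the Stinespring representation $(L^2(\realn),\, \iota,\, \unit_{L^2(\realn)})$, the relevant commutant is $L^\infty(\realn)$ itself, whence $\iota^c = \iota$ and $[\iota^c] = [\iota]$. By Proposition~\ref{prop:conj}, the assignment $[\Lambda] \mapsto [\Lambda^c]$ is an order-reversing involution of $\CH(\LH)$, so it carries the infimum $\inf_k [(\Lshift)^k]$ to the supremum $\sup_k [((\Lshift)^k)^c]$; combined with $[\iota^c] = [\iota]$ and the identity just proved, this gives $\sup_k [((\Lshift)^k)^c] = [\id_{L^\infty(\realn)}]$. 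I expect the main obstacle to be the first step — verifying the telescoping identity, that the commuting pinchings compose to $\Delta_{\mathcal P_k}$, and the equidistribution computation pinning down $\bigcap_k \M_k$; everything afterwards is formal, relying on the decreasing-subalgebra observation and Proposition~\ref{prop:conj}.
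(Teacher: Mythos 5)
Your proposal is correct, and its first half (the factorization $(\Lshift)^k=\Delta_{\mathcal P_k}\circ\pi_{U_{k\alpha}}$ into the common-refinement pinching followed by an invertible automorphism, hence $(\Lshift)^k\eqcp\id_{\M_k}$ for the block-diagonal algebra $\M_k$) is exactly the paper's computation; where you genuinely diverge is in how the extremum is then identified. The paper works on the conjugate side first: it observes $((\Lshift)^k)^c\eqcp\id_{\M_k'}$ with $\M_k'$ the \emph{increasing} commutative algebras, shows $\bigcup_k\M_k'$ is weakly dense in $L^\infty(\realn)$ via the mesh of the partitions tending to $0$, invokes Lemma~\ref{lemm:increasing_alg} together with the normal-extension/double-conjugate identification to get $\sup_k[\id_{\M_k'}]=[\id_{L^\infty(\realn)}]$, and only then passes to the infimum by conjugation and $L^\infty(\realn)'=L^\infty(\realn)$. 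You instead compute the infimum directly, using the elementary fact that for decreasing von Neumann subalgebras the inclusion channels have infimum the inclusion of $\bigcap_k\M_k$ (your range argument is sound: any lower bound $\Gamma$ factoring through each inclusion literally has range in each $\M_k$, so no compactness or inductive-limit machinery is needed, in contrast to the paper's Lemma~\ref{lemm:decreasing_alg} for general experiments), identify $\bigcap_k\M_k$ as the commutant of the generated algebra $=L^\infty(\realn)'=L^\infty(\realn)$ via density of the orbit $\{n-j\alpha\}$, and then recover the supremum of the conjugates from Proposition~\ref{prop:conj} (order-reversing involution on $\CH(\LH)$) plus the self-conjugacy of the position embedding. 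The trade-off: your route is more self-contained on the infimum side and avoids Lemma~\ref{lemm:increasing_alg} and the C$^\ast$-inductive/normal-extension step, while the paper's route exhibits the supremum side concretely by reusing a lemma it has already established for the main theorem; both uses of irrationality (density of endpoints versus vanishing mesh) are equivalent in strength for this purpose.
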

\begin{proof}
From the definitions, we can easily see
$P_n U_\alpha = U_\alpha P_{n-\alpha}$
$(n \in \integer) .$
Hence
\[
	(U_\alpha P_{n_k })
	\cdots 
	(U_\alpha P_{n_2} )
	(U_\alpha P_{n_1 })
	=
	U_{k\alpha} 
	P_{n_k - (k-1) \alpha}
	\cdots 
	P_{n_2 - \alpha} 
	P_{n_1}
\]
for each $k \geq 2 $ and each 
$\{ n_j \}_{j=1}^k \subseteq \integer .$
Therefore we have
\[
	( \Lshift )^k
	=
	\Delta_k \circ \pi_{U_{k\alpha}} ,
\]
where
\begin{gather*}
	\Delta_k (A)
	:=
	\sum_{ n \in \integer}
	\sum_{l =0}^{k-1}
	P_{n,l}^k A P_{n,l}^k ,
	\\
	P_{n,l}^k 
	:=
	\chi_{ [ n+t_{k,l} , n+t_{k,l+1}  )  } \in L^\infty (\realn) , \\
	\pi_U (A)
	:=
	U^\ast A U ,
\end{gather*}
and $0 = t_{k,0} < t_{k,1} < \cdots < t_{k,k-1} < t_{k,k} =1$ is the division of 
the unit interval $[0,1]$ such that
the set $\{ t_{k,l} \}_{l=1}^{k-1}$ coincides with the set of 
the fractional parts of $- \alpha , -2 \alpha , \dots , - (k-1)\alpha .$
Since the channel $\pi_{U_{k\alpha}}$ has the inverse $\pi_{U_{-k\alpha}} ,$
we have $(\Lshift)^k \eqcp \Delta_k .$ 
We can easily see that $\Delta_k $ 
is randomization-equivalent to 
the identity channel
$\id_{\M_k} \in \ncpchset{ \M_k}{\LH },$
where $\M_k$ is the von Neumann algebra on $\cH$ defined by
\[
	\M_k := \bigoplus_{n \in \integer , 0 \leq l \leq k-1}
	\calL (P_{n,l}^k \cH ) .
\]
Therefore the conjugate channels $( (\Lshift)^n )^c \eqcp \Delta_n^c$
are randomization-equivalent to the identity channel 
$\id_{\M_k^\prime} \in \ncpchset{\M_k^\prime}{\LH}  ,$
where $\M_k^\prime$ is the commutant of $\M_k$ and given by
\[
	\M_k^\prime
	=
	\bigoplus_{n \in \integer , 0 \leq l \leq k-1}
	\cmplx P_{n,l}^k .
\]
Hence we have only to show
\begin{equation}
	\inf_{k \in \natn } [\id_{\M_k}]
	=
	[\id_{L^\infty (\realn)}]
	=
	\sup_{k \in \natn} 
	[\id_{\M_k^\prime}] .
	\label{eq:tobeshown}
\end{equation}
From the irrationality of $\alpha ,$
the width of the division
$\max_{0 \leq l \leq k-1} (t_{k,l +1} - t_{k, l})$
converges to $0$ when $k \to \infty  .$
This implies that there exists a sequence $(Q_n)_{n \in \natn}$ of projections 
in $\bigcup_{k \in \natn} \M_k^\prime$
weakly convergent to $\chi_I \in L^\infty (\realn)$
for any interval $I .$
Therefore $\A_0 := \bigcup_{k \in \natn} \M_k^\prime$ is a weakly dense 
$\ast$-subalgebra of $L^\infty (\realn) $
and hence Lemma~\ref{lemm:increasing_alg} implies 
\[
\sup_{k \in \natn} [\id_{\M_k^\prime}]
= [  (\id_{\A}^c)^c  ]
= [  \id_{\A^{\prime \prime}}  ]
=[  \id_{L^\infty (\realn)}  ] ,
\]
where $\A$ is the norm closure of $\A_0 .$
From this and $L^\infty (\realn)^\prime =L^\infty (\realn) ,$ we obtain
$\inf_{k \in \natn } [\id_{\M_k}]
	=
	[(\id_{L^\infty (\realn)} )^c] 
	=
	[\id_{L^\infty (\realn)^\prime}]
	=
	 [\id_{L^\infty (\realn)} ] ,
	$
proving \eqref{eq:tobeshown}.
\end{proof}
From Proposition~\ref{prop:markov} and Theorem~\ref{theo:shift},
we can see that if we perform the channel $\Lshift$ on the quantum 
system corresponding to $\cH $ infinitely many times, 
the information remaining in the system is exactly that obtained 
from the position measurement.

Theorem~\ref{theo:shift} also shows that
the set
\[
	\ncpch{\LH} / \eqcp \, =
	\set{[\Lambda] \in \CH (\LH) | \text{the outcome space of $\Lambda$ is $\LH$}   }
\]
is neither upper nor lower Dedekind-closed subset of $\CH (\LH) $
since 
\begin{equation}
[\id_{L^\infty (\realn)}] \not\in \ncpch{\LH} / \eqcp .
\label{eq:notin}
\end{equation}
The claim~\eqref{eq:notin} follows from 
the minimal sufficiency~\cite{kuramochi2017minimal} 
of $\id_{L^\infty (\realn)} ,$
which is immediate from the injectivity of $\id_{L^\infty (\realn)} ,$
and from that
any channel $\Lambda \in \ncpch{\LH} $ is randomization-equivalent to 
a unique, up to normal isomorphism, minimal sufficient channel with 
a \emph{discrete} type~I outcome algebra.
The latter fact follows by applying the discussion in
Ref.~\onlinecite{kuramochi2018access}
(Section~3.3) to $\E_\Lambda .$
Such non-closedness is peculiar to infinite-dimensional Hilbert spaces since 
for any finite-dimensional $\cH$ the set $\ncpch{\LH} / \eqcp$ is 
upper and lower Dedekind-closed,
which can be shown
by using the compactness 
of $\ncpch{\LH} = \cpch{\LH} $
in the norm topology.

\subsection{Ideal quantum linear amplifier}
\label{subsec:amplifier}
In this subsection, 
we consider the ideal quantum linear 
amplifier.~\cite{PhysRevA.86.063802}
Let $\cH = \ltwo (\natz) ,$
where $\natz$ is the set of natural numbers containing $0.$
We identify $\cH$ with the system of $1$-mode photon field,
in which the orthonormal basis $(\delta_n)_{n \in \natz}$ is the set of
eigenvectors of the unbounded photon number operator
$
	\sum_{n \in \natz} n \ket{\delta_n} \bra{\delta_n} .
$
For each $t >0,$
we define the ideal quantum linear amplifier channel 
$\Lamp_t \in \ncpch{\LH}$ 
by the following Kraus sum form:
\begin{gather}
	\Lamp_t (A)
	:=
	\sum_{m \in \natz} 
	M_m (t)^\ast A M_m (t) ,
	\notag 
	\\
	M_m (t)
	:=
	\sum_{n \in \natz}
	\left[
	(1-e^{-t})^m e^{- (n+1)t}
	\binom{n+m}{m}
	\right]^{1/2}
	\ket{\delta_{n+m}}
	\bra{\delta_n}  ,
	\notag
\end{gather}
where the RHSs of the first and second 
equalities are convergent in the weak and norm topologies, 
respectively.
If we define the creation and annihilation operators
\[
	a^\ast
	:=
	\sum_{n \in \natz}
	\sqrt{n+1} \ket{\delta_{n+1}} \bra{\delta_{n}} ,
	\quad
	a :=
	\sum_{n \in \natz}
	\sqrt{n+1} 
	\ket{\delta_{n}} \bra{\delta_{n+1}} ,
\]
the operator 
$M_m (t)$ can be written as
\begin{equation}
	M_m (t)
	=
	\left[
	\frac{( e^t -1 )^m }{m!}
	\right]^{1/2}
	e^{-t a a^\ast /2} 
	(a^\ast)^m.
	\label{eq:Mmt}
\end{equation}
In the Schr\"odinger picture, 
the corresponding predual map is given by
\begin{equation}
	(\Lamp_t)_\ast
	(T)
	=
	\sum_{m \in \natz}
	M_m (t) T M_m(t)^\ast 
	\quad
	(T \in \mathcal{T} (\cH)) ,
	\label{eq:Lamps}
\end{equation}
where $\mathcal{T} (\cH)$ denotes the set of trace-class operators 
on $\cH$ and the RHS of \eqref{eq:Lamps} is convergent in the trace-norm 
topology.
From \eqref{eq:Mmt}, we can see that 
the RHS of \eqref{eq:Lamps}
coincides with the expression given in 
Ref.~\onlinecite{PhysRevLett.68.3424} (Eq.~(21)).


The channels $(\Lamp_t)_{t >0}$ satisfy the semigroup property
$\Lamp_{t} \circ \Lamp_s = \Lamp_{t+s}$
$(t,s >0 ) .$
This can be shown
by comparing the Q-functions~\cite{1940264,PhysRev.138.B274}
of states as follows.
For each normal state $\vph \in \Ss (\LH)$
we define the Q-function of $\vph$ by
\[
	Q_\vph (\alpha)
	:=
	\pi^{-1} \vph (\ket{\psi_\alpha} \bra{\psi_\alpha}) 
	\quad
	(\alpha \in \cmplx) 
	,
\]
where
\[
	\psi_\alpha
	:=
	e^{- |\alpha|^2 /2}
	\sum_{n \in \natz}
	\frac{\alpha^n}{\sqrt{n!}} 
	\delta_n
\]
is the coherent state.~\cite{PhysRev.131.2766}
Coherent states satisfy the overcompleteness relation
\[
	\pi^{-1}
	\int_{\cmplx}
	\ket{\psi_\alpha}
	\bra{\psi_\alpha}
	d^2 \alpha 
	=
	\unit_{\cH} ,
\]
where $d^2 \alpha = d (\Real \alpha) d (\Imag \alpha)$
is the $2$-dimensional Lebesgue measure on $\cmplx$
and the integral is in the weak sense.
The positive-operator valued measure 
$\pi^{-1} \ket{\psi_\alpha} \bra{\psi_\alpha} d^2 \alpha$
defined on the Borel $\sigma$-algebra of $\cmplx$ is
called the Bargmann measure~\cite{bargmann1961,klauder1968fundamentals}
and the Q-function $Q_\vph$ is the outcome probability density 
when the input state is $\vph .$
Hence
$Q_\vph \in L^1(\cmplx) \cap L^\infty (\cmplx) $
for each $\vph \in \Ss (\LH) ,$
where $L^p (\cmplx)$ is the $L^p$-space for the Lebesgue measure
$d^2 \alpha$ on $\cmplx .$
The QC channel corresponding to the Bargmann measure
in the Heisenberg picture is the normal channel
$\gbarg \in \ncpchset{L^\infty (\cmplx)}{\LH}$
given by
\[
	\gbarg (f)
	:=
	\pi^{-1}
	\int_\cmplx f (\alpha) \ket{\psi_\alpha}
	\bra{\psi_\alpha}
	d^2 \alpha
	\quad
	(f \in L^\infty (\cmplx) ) .
\]
It is known~\cite{PhysRev.138.B274,klauder1968fundamentals} that 
the map
\begin{equation}
	\Ss (\LH) \ni \vph
	\longmapsto 
	Q_\vph \in L^1 (\cmplx)
	\label{eq:Qinj}
\end{equation}
is injective. This fact is equivalent to the informational completeness 
of the Bargmann measure.

Now we show the semigroup property of $(\Lamp_t)_{t>0} .$
Since
\[
	M_m^\ast (t) \psi_\alpha 
	=
	\exp \left[
	- \frac{|\alpha|^2}{2} (1-e^{-t}) 
	-\frac{t}{2}
	\right]
	\frac{\alpha^m (1-e^{-t})^{m/2}}{\sqrt{m!}}
	\psi_{e^{-t/2}\alpha} ,
\]
we have
\begin{align}
	Q_{\vph \circ \Lamp_t } (\alpha)
	&=
	\pi^{-1}
	\sum_{m \in \natz}
	\vph
	(
	\ket{M_m (t)^\ast \psi_\alpha}
	\bra{M_m (t)^\ast \psi_\alpha} 
	) 
	\notag
	\\
	&=
	\pi^{-1}
	\exp \left[
	- |\alpha|^2 (1-e^{-t}) 
	-t
	\right]
	\sum_{m \in \natz}
	\frac{|\alpha|^{2m} (1-e^{-t})^m}{m!}
	\vph 
	(\ket{\psi_{e^{-t/2}\alpha}} 
	\bra{\psi_{e^{-t/2}\alpha}} )
	\notag
	\\
	&=
	e^{-t}
	Q_{\vph} (e^{-t/2}\alpha) .
	\label{eq:Qamp}
\end{align}
Hence for each $\vph \in \Ss (\LH),$ each $s,t \in (0,\infty) ,$
and each $\alpha \in \cmplx ,$ we have
\[
	Q_{\vph \circ \Lamp_{s+t}} (\alpha)
	=
	e^{-s-t} Q_\vph (e^{- (s+t)/2}\alpha)
	=
	e^{-t}
	Q_{\vph \circ \Lamp_s} 
	(e^{-t/2} \alpha)
	=
	Q_{\vph \circ \Lamp_s \circ \Lamp_t} (\alpha).
\]
From the injectivity of \eqref{eq:Qinj},
this implies the semigroup property $\Lamp_{s+t} = \Lamp_s \circ \Lamp_t .$

Equation~\eqref{eq:Qamp} shows that the Q-function 
of the outcome state $\vph \circ \Lamp_t$
is that of the input state $\vph$
amplified by the factor $e^{t/2}.$
Such a channel is used to amplify small input signals in quantum optics.
For further information on the physical significance of this channel, 
see Ref.~\onlinecite{PhysRevA.86.063802} and references therein.

By the semigroup property, 
$([\Lamp_t])_{t>0}$ is decreasing in $\CH (\LH)$ and has an infimum,
which is given by
\begin{theorem}
\label{theo:amp}
$
\inf_{t>0} [\Lamp_t]
=
[(\gbarg)^c]
=
[\gbarg]
=
\sup_{t>0} [( \Lamp_t )^c] .
$
\end{theorem}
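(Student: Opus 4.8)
Following the scheme of the proof of Theorem~\ref{theo:shift}, the plan is to turn the infimum into a supremum by channel conjugation, compute that supremum, and conjugate back. Since $([\Lamp_t])_{t>0}$ is decreasing, $([(\Lamp_t)^c])_{t>0}$ is increasing by Proposition~\ref{prop:conj}, and that proposition shows $\inf_{t>0}[\Lamp_t]$ to be the conjugate of $\sup_{t>0}[(\Lamp_t)^c]$; thus the whole statement reduces to proving
\[
	\sup_{t>0}[(\Lamp_t)^c] = [\gbarg]
	\qquad\text{and}\qquad
	[(\gbarg)^c] = [\gbarg] .
\]
The second identity I would obtain from a minimal Stinespring (Naimark) dilation of $\gbarg$ whose representation is maximal abelian. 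Take $\cK := L^2(\cmplx)$, $\pi$ the multiplication representation of $L^\infty(\cmplx)$ on $\cK$, and the isometry $V \colon \ltwo(\natz) \to \cK$ with $(Vx)(\alpha) := \pi^{-1/2}\braket{\psi_\alpha | x}$; using the overcompleteness relation one checks that $(\cK, \pi, V)$ is a Stinespring representation of $\gbarg$, and that it is minimal (applying the multiplication operators to $V$ of the vacuum vector already yields a dense set). Since $\pi(L^\infty(\cmplx))$ is a maximal abelian subalgebra of $\mathcal{L}(L^2(\cmplx))$, we have $\pi(L^\infty(\cmplx))^\prime = \pi(L^\infty(\cmplx))$, whence $(\gbarg)^c(f) = V^\ast \pi(f) V = \gbarg(f)$, i.e.\ $(\gbarg)^c = \gbarg$ up to normal isomorphism.

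One inequality in $\sup_{t>0}[(\Lamp_t)^c] = [\gbarg]$ is immediate from \eqref{eq:Qamp}. Let $S_t$ be the normal automorphism of $L^\infty(\cmplx)$ induced by the dilation $\alpha \mapsto e^{-t/2}\alpha$ of $\cmplx$. Then \eqref{eq:Qamp} says precisely that $\Lamp_t \circ \gbarg = \gbarg \circ S_t^{-1}$, whence $\gbarg = \Lamp_t \circ (\gbarg \circ S_t)$ and therefore $\gbarg \cocp \Lamp_t$ for every $t>0$. By Proposition~\ref{prop:conj} this gives $(\Lamp_t)^c \cocp (\gbarg)^c \eqcp \gbarg$, so $[\gbarg]$ is an upper bound of $([(\Lamp_t)^c])_{t>0}$ and $\sup_{t>0}[(\Lamp_t)^c] \cocp [\gbarg]$ (equivalently $[\gbarg] \cocp \inf_{t>0}[\Lamp_t]$).

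It remains to prove the reverse inequality $[\gbarg] \cocp \sup_{t>0}[(\Lamp_t)^c]$, i.e.\ that $[\gbarg]$ is the \emph{least} upper bound of $([(\Lamp_t)^c])_{t>0}$, and this is where the substance lies. Following the pattern of Theorem~\ref{theo:shift}, I would fix a concrete Stinespring dilation of $\Lamp_t$ — the parametric-amplifier dilation, in which the environment carries a single bosonic mode coupled to the input by a two-mode squeezing unitary with $\cosh^2 r = e^t$ — so that the conjugate channels $(\Lamp_t)^c$ are realized with an increasing net of outcome $C^\ast$-algebras whose union is ultraweakly dense in a von Neumann algebra carrying a representative of the supremum, and then show that the joint information of these outcome algebras exhausts the coherent-state POVM as $t \to \infty$; an argument in the spirit of Lemma~\ref{lemm:increasing_alg}, together with the fact that the double conjugate channel is the normal extension, then identifies the supremum with $[\gbarg]$. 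Conjugating the resulting identity $\sup_{t>0}[(\Lamp_t)^c] = [\gbarg]$ via Proposition~\ref{prop:conj} and using $[(\gbarg)^c] = [\gbarg]$ yields the full chain $\inf_{t>0}[\Lamp_t] = [(\gbarg)^c] = [\gbarg] = \sup_{t>0}[(\Lamp_t)^c]$. I expect the main obstacle to be exactly this density/recovery step: one must show that asymptotically in $t$ the environment retains no information about the input beyond the heterodyne (Q-function) statistics, which should follow from the explicit $t \to \infty$ behaviour of the Kraus operators $M_m(t)$ of \eqref{eq:Mmt} on coherent states — as in the computation leading to \eqref{eq:Qamp} — combined with the informational completeness of the Bargmann measure, i.e.\ the injectivity of \eqref{eq:Qinj}. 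This is in contrast to Theorem~\ref{theo:shift}, where the analogous density claim reduced to the elementary fact that the mesh of the induced partition of the unit interval tends to $0$.
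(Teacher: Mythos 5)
Your reductions and the two ``soft'' ingredients are fine: reducing the whole chain to $\sup_{t>0}[(\Lamp_t)^c]=[\gbarg]$ and $[(\gbarg)^c]=[\gbarg]$ via Proposition~\ref{prop:conj} is exactly how the theorem is organized; your proof of $[(\gbarg)^c]=[\gbarg]$ through the multiplication dilation on $L^2(\cmplx)$ (maximal abelian commutant) is a correct, self-contained substitute for the citation the paper uses; and your derivation of the upper bound $\sup_{t>0}[(\Lamp_t)^c]\cocp[\gbarg]$ from \eqref{eq:Qamp}, via $\Lamp_t\circ\gbarg=\gbarg\circ S_t^{\pm 1}$ and hence $\gbarg\cocp\Lamp_t$, is a legitimate alternative to the paper's route (which instead computes $(\Lamp_t)^c$ explicitly, Eq.~\eqref{eq:lampc}, and factors it as $\gbarg\circ\Phi_t$).

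The genuine gap is the least-upper-bound direction, which is the analytic heart of the theorem, and your plan for it would not go through as stated. Unlike Theorem~\ref{theo:shift}, the conjugates here are \emph{not} (equivalent to) identity channels on an increasing family of subalgebras of a fixed commutative algebra: $(\Lamp_t)^c\eqcp\gbarg\circ\Phi_t$ is a rescaled, Gaussian-smeared heterodyne channel for every finite $t$, so there is no increasing net of subalgebras of $L^\infty(\cmplx)$ with dense union to feed into Lemma~\ref{lemm:increasing_alg}; applying that lemma to the abstract inductive-limit supremum of Lemma~\ref{lemm:inch} only reproduces the existence statement of Theorem~\ref{theo:mainch} and does not identify the supremum with $[\gbarg]$. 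Likewise, informational completeness (injectivity of \eqref{eq:Qinj}) is a statement about separation of states and cannot by itself produce the post-processing channel required by the definition of $\cocp$. What is actually needed is a quantitative recovery argument: exhibit channels $\Psi_t\in\ncpchset{L^\infty(\cmplx)}{\LH}$ (the paper takes $\Psi_t(f)=\pi^{-1}c_t^2\int f(\alpha)\ket{\psi_{c_t\overline{\alpha}}}\bra{\psi_{c_t\overline{\alpha}}}d^2\alpha$) such that $\|\vph\circ(\Lamp_t)^c\circ\Psi_t-\vph\circ\gbarg\|\to0$ for every $\vph\in\Ss(\LH)$ --- this is a convolution estimate, since $\Phi_t\circ\Psi_t$ acts on Q-functions as a Gaussian kernel of width $c_t^{-1}\to0$, and is controlled by the strong $L^1$-continuity of translations --- and then, for an arbitrary normal upper bound $\Gamma$ with $(\Lamp_t)^c=\Gamma\circ\Lambda_t$, a Tychonoff/ultraweak-compactness argument on the net $(\Lambda_t\circ\Psi_t)$ extracting a channel $\widetilde{\alpha}$ with $\Gamma\circ\widetilde{\alpha}=\gbarg$, i.e.\ $\gbarg\cocp\Gamma$. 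Your proposal names the obstacle but supplies neither the recovery maps nor the limiting argument, so the equality $\sup_{t>0}[(\Lamp_t)^c]=[\gbarg]$ (and with it the whole chain) remains unproved.
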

\begin{proof}
The first equality of the claim is immediate from the third one.
The second equality follows from that
the minimal Stinespring representation of $\gbarg$ is non-degenerate
(Ref.~\onlinecite{ISI:000236487200002}, Section~13).
Thus we have only to prove the third equality.
For this we first derive an explicit expression of the conjugate channel
$(\Lamp_t)^c $
for $t>0 .$
We define a linear isometry 
$V_t \colon \cH \to \cH \otimes \cH$ by
\[
	V_t x
	:=
	\sum_{m \in \natz}
	M_m (t) x \otimes \delta_m
	\quad
	(x \in \cH) .
\]
Then $\Lamp_t$ has the Stinespring representation
$\Lamp_t (A) = V_t^\ast (A \otimes \unit_\cH) V_t$
$(A \in \LH) .$
Hence we may define the conjugate channel
$(\Lamp_t)^c \in \ncpch{\LH}$
by
\[
(\Lamp_t)^c (A) := V_t^\ast (\unit_\cH \otimes A) V_t 
\quad 
(A \in \LH) .
\]
We show that
\begin{equation}
	(\Lamp_t)^c (A)
	=
	\pi^{-1}
	\int_\cmplx
	\braket{\psi_{c_t \overline{\alpha}}| A \psi_{c_t \overline{\alpha}}}
	\ket{\psi_\alpha}
	\bra{\psi_\alpha}
	d^2 \alpha 
	\quad
	(A \in \LH) ,
	\label{eq:lampc}
\end{equation}
where $c_t := (e^t -1)^{1/2} .$
We write the RHS of \eqref{eq:lampc} as
$\Gamma_t (A) .$
Then we can see that $\Gamma_t$ is a well-defined normal channel
in $\ncpch{\LH} .$
Hence, to prove \eqref{eq:lampc},
it is sufficient to show
\begin{equation}
	\braket{\delta_m |(\Lamp_t)^c (\ket{\delta_k}\bra{\delta_l}) \delta_n}
	=
	\braket{\delta_m | \Gamma_t (\ket{\delta_k}\bra{\delta_l}) \delta_n}
	\label{eq:lampc2}
\end{equation}
for each $k,l,m,n \in \natz .$
The LHS of \eqref{eq:lampc2} is evaluated as
\begin{align}
	\braket{\delta_m |(\Lamp_t)^c (\ket{\delta_k}\bra{\delta_l}) \delta_n}
	&=
	\braket{\delta_m | M_k(t)^\ast M_l(t) \delta_n}
	\notag 
	\\
	&=
	\delta_{n+l , k+m}
	(e^t -1)^{\frac{l+k}{2}} e^{- (n+l+1)t}
	\left[
	\binom{n+l}{l} \binom{k+m}{k}
	\right]^{1/2} , 
	\label{eq:kekka}
\end{align}
where $\delta_{a,b}$ denotes the Kronecker delta.
On the other hand, the RHS of \eqref{eq:lampc2} is given by
\begin{align*}
	&\braket{\delta_m | \Gamma_t (\ket{\delta_k}\bra{\delta_l}) \delta_n}
	\\
	&=
	\pi^{-1}
	\int_\cmplx 
	\braket{\psi_{c_t \overline{\alpha}}| \delta_k }
	\braket{\delta_l |  \psi_{c_t \overline{\alpha}} }
	\braket{\delta_m | \psi_\alpha}
	\braket{\psi_\alpha | \delta_n} d^2 \alpha
	\\
	&=
	\frac{\pi^{-1} c_t^{l+k}}{ \sqrt{k! l! m! n!}  }
	\int_\cmplx
	e^{-(1 + c_t^2) |\alpha|^2 }
	\overline{\alpha}^{l+n} 
	\alpha^{k+m}
	d^2 \alpha
	\\
	&=
	\frac{\pi^{-1} c_t^{l+k}}{ \sqrt{k! l! m! n!}  }
	\int_0^\infty 
	\int_0^{2\pi}
	e^{-(1 + c_t^2) r^2 }
	r^{l+n+k+m+1}
	e^{i(-l-n+k+m) \theta}
	d\theta dr
	\quad 
	(r = |\alpha| , \theta = \arg  \alpha )  
	\\
	&=
	\delta_{n+l , k+m}
	c_t^{l+k} (1+c_t^2)^{-n-l-1}
	\left[
	\binom{n+l}{l} \binom{k+m}{k}
	\right]^{1/2} ,
\end{align*}
which coincides with \eqref{eq:kekka}.
Thus we have shown \eqref{eq:lampc}.

From \eqref{eq:lampc}, we have 
$(\Lamp_t)^c = \gbarg \circ \Phi_t  \cocp \gbarg ,$
where $\Phi_t \in \ncpchset{\LH}{L^\infty (\cmplx)}$ is given by
$
(\Phi_t (A))(\alpha) := 
\braket{\psi_{c_t \overline{\alpha}}| A \psi_{c_t \overline{\alpha}}} .
$
Hence $\sup_{t > 0} [(\Lamp_t)^c] \cocp [\gbarg] .$

We define a channel $\Psi_t \in \ncpchset{L^\infty (\cmplx)}{\LH} $
by
\[
	\Psi_t (f)
	:=
	\pi^{-1} c_t^2
	\int_\cmplx
	f(\alpha)
	\ket{\psi_{c_t \overline{\alpha}}}
	\bra{\psi_{c_t \overline{\alpha}}}
	d^2 \alpha
	\quad
	(f \in L^\infty (\cmplx)) .
\]
We prove
\begin{equation}
	\| \vph \circ( \Lamp_t)^c \circ \Psi_t
	- \vph \circ \gbarg  \|
	\xrightarrow{t \to \infty} 0
	\label{eq:tozero}
\end{equation}
for each $\vph \in \Ss (\LH) .$
We identify, as usual, each element $\psi$ of the predual space 
(i.e.\ the set of ultraweakly continuous linear functionals)
$L^\infty (\cmplx)_\ast$
of $L^\infty (\cmplx)$
with an integrable function $g_\psi \in L^1 (\cmplx)$ by
\[
	\psi (f)
	=
	\int_\cmplx
	f(\alpha ) g_\psi (\alpha) d^2 \alpha
	\quad
	(f \in L^\infty (\cmplx) ) .
\]
Then we have $\| \psi \| = \|g_\psi \|_1 := \int_\cmplx | g(\alpha)| d^2 \alpha$
for each $\psi \in L^\infty (\cmplx)_\ast . $ 
We can also check that $g_{\vph \circ \gbarg} = Q_{\vph}$
for each $\vph \in \Ss (\LH) .$
From 
\begin{align*}
	(\Phi_t \circ \Psi_t (f))(\alpha)
	&=
	\pi^{-1} c_t^2
	\int_\cmplx
	f(\beta)
	\left|
	\braket{ \psi_{c_t \overline{\alpha}}  |
				\psi_{c_t \overline{\beta}}}
	\right|^2
	d^2 \beta
	\\
	&=
	\pi^{-1} c_t^2
	\int_\cmplx 
	e^{-c_t^2 |\alpha - \beta|^2} 
	f(\beta)
	d^2 \beta
	\quad
	(f \in L^\infty (\cmplx)) ,
\end{align*}
we have
\[
	g_{\psi \circ \Phi_t \circ \Psi_t}
	(\beta)
	=
	\pi^{-1} c_t^2
	\int_\cmplx
	e^{-c_t^2 |\alpha - \beta|^2} 
	g_\psi (\alpha)
	d^2 \alpha
	\quad
	(\psi \in L^\infty (\cmplx)_\ast).
\]
Hence for each $\vph \in \Ss (\LH) $ and each $\delta>0 ,$
\begin{align*}
	& \| \vph \circ( \Lamp_t)^c \circ \Psi_t
	- \vph \circ \gbarg  \|
	\\
	&=
	\| 
	g_{\vph \circ \gbarg \circ \Phi_t \circ \Psi_t }
	-
	Q_\vph 
	\|_1
	\\
	&=
	\int_\cmplx
	\left|
	\pi^{-1} c_t^2
	\int_\cmplx
	e^{-c_t^2 |\alpha - \beta|^2} 
	Q_\vph (\beta)
	d^2 \beta
	-
	Q_\vph (\alpha)
	\right| 
	d^2 \alpha
	\\
	&\leq
	\pi^{-1} c_t^2
	\int_\cmplx \int_\cmplx
	e^{-c_t^2 |\alpha - \beta|^2} 
	\left| 
	Q_\vph (\alpha) - Q_\vph (\beta)
	\right|
	d^2\beta d^2 \alpha
	\\
	&=
	\pi^{-1} c_t^2
	\int_\cmplx \int_\cmplx
	e^{-c_t^2 |\gamma|^2} 
	\left|
	Q_\vph (\alpha) - Q_\vph (\alpha - \gamma)
	\right|
	d^2\gamma d^2 \alpha
	\\
	&=
	\pi^{-1} c_t^2
	\int_\cmplx \int_\cmplx
	e^{-c_t^2 |\gamma|^2} 
	\left|
	Q_\vph (\alpha) - Q_\vph (\alpha - \gamma)
	\right|
	d^2\alpha d^2 \gamma
	\quad
	(\because \text{Fubini\rq{}s theorem})
	\\
	&\leq
	\pi^{-1} c_t^2
	\int_{\cmplx ,  |\gamma| \leq \delta} 
	e^{-c_t^2 |\gamma|^2}
	\int_\cmplx
	\left|
	Q_\vph (\alpha) - Q_\vph (\alpha - \gamma)
	\right| 
	d^2 \alpha d^2 \gamma
	+ 
	2
	\pi^{-1} c_t^2
	\int_{\cmplx ,  |\gamma| \geq \delta}
	e^{-c_t^2 |\gamma|^2}  d^2 \gamma
	\\
	&\leq
	\sup_{\gamma \in \cmplx , |\gamma | \leq \delta }  
	\|  Q_\vph - T_\gamma Q_\vph  \|_1
	+
		2
	\pi^{-1} c_t^2
	\int_{\cmplx ,  |\gamma| \geq \delta}
	e^{-c_t^2 |\gamma|^2}  d^2 \gamma
	\\
	&\xrightarrow{t\to \infty}
	\sup_{\gamma \in \cmplx , |\gamma | \leq \delta }  
	\|  Q_\vph - T_\gamma Q_\vph  \|_1 ,
\end{align*}
where $T_\gamma$ $(\gamma \in \cmplx)$
is the translation operator on $L^1 (\cmplx)$ defined by
\[
	(T_\gamma g) (\alpha)
	:=
	g(\alpha - \gamma)
	\quad
	(g \in L^1 (\cmplx)) .
\]
From the strong continuity of the translation operator
(e.g.\ Ref.~\onlinecite{folland1999real}, Proposition~8.5),
we have
\[
	\sup_{\gamma \in \cmplx , |\gamma | \leq \delta }  
	\|  Q_\vph - T_\gamma Q_\vph  \|_1 
	\xrightarrow{\delta \to +0} 0,
\]
which implies \eqref{eq:tozero}.

Now we show the third equality of the claim.
We have already shown $\sup_{t>0}[(\Lamp_t)^c]\cocp [\gbarg] .$
To show the converse, take an arbitrary normal channel
$\Gamma \in \ncpchset{\N}{\LH}$
satisfying $ (\Lamp_t)^c \cocp \Gamma$
for all $t > 0 .$
Then for each $t>0$ there exists a channel
$\Lambda_t \in \cpchset{\LH}{\N}$ such that
$(\Lamp_t)^c  = \Gamma \circ \Lambda_t . $
From \eqref{eq:tozero}, we have
\[
	\| \vph \circ \Gamma \circ \Lambda_t \circ \Psi_t
	- \vph \circ \gbarg  \|
	\xrightarrow{t \to \infty} 0
\]
for each $\vph \in \Ss (\LH) ,$
and hence
\[
	\Gamma \circ \Lambda_t \circ \Psi_t (f)
	\xrightarrow{uw}
	 \gbarg (f)
	 \quad
	 (t\to \infty)
\]
for each $f \in L^\infty (\cmplx) .$
By Tychonoff's theorem, there exist a subnet 
$(\Lambda_{t(i)} \circ \Psi_{t(i)})_{i\in I}$
and a channel $\widetilde{\alpha} \in \cpchset{L^\infty (\cmplx)}{\N}$
such that
$\Lambda_{t(i)} \circ \Psi_{t(i)} (f) 
\xrightarrow{uw}
\widetilde{\alpha} (f)
$
for each $f \in L^\infty (\cmplx) .$
Then from the normality of $\Gamma$ we have 
\[
	\Gamma \circ \widetilde{\alpha} (f)
	=
	\uwlim_{i \in I}
	\Gamma \circ \Lambda_{t(i)} \circ \Psi_{t(i)} (f)
	=
	\gbarg (f)
	\quad
	(f \in L^\infty (\cmplx)) ,
\]
which implies $\gbarg = \Gamma \circ \widetilde{\alpha} \cocp \Gamma .$
Therefore 
$\sup_{t > 0 } [(\Lamp_t)^c] = [\gbarg] .$
\end{proof}
Proposition~\ref{prop:markov}
and Theorem~\ref{theo:amp} imply that 
if we infinitely amplify the $1$-mode photon field 
with the channels $(\Lamp_t)_{t>0} ,$
the information remaining in the system is exactly 
that obtained from the quantum measurement
corresponding to the Bargmann measure.

\section{Concluding remarks} \label{sec:final}
In this paper we have investigated randomization-monotone 
nets of quantum statistical experiments or normal channels
and 
established the directed-completeness of 
the sets $\E (\Theta)$ and $\CH (\Min) ,$
together with the Dedekind-closedness of the 
classical statistical experiments $\Ecl (\Theta )$
and QC channels $\CHqc (\Min) .$
In the proof, the concept of channel conjugation 
and the correspondence
\begin{gather*}
	\E (\Theta)
	\ni [\E]
	\longmapsto 
	[\Lambda_\E]
	\in 
	\CH ( \calL (\ltwo (\Theta))   )
\end{gather*}
between statistical experiments and channels
have been used to reduce the discussion 
to the case of increasing normal channels.
In the following we list some questions related to our results. 
\begin{enumerate}
\item
In Ref.~\onlinecite{gutajencova2007}, the weak convergence topology 
for statistical experiments is introduced as a generalization of 
the corresponding concept in the classical statistics.~\cite{torgersen1991comparison}
Then, as a natural question, we can ask whether
a randomization-monotone net of statistical experiments 
converges to its supremum or infimum.
\item
As to the weak topology, 
another natural question is whether there is any relation between 
the channel conjugation map
$
	[\Lambda] \mapsto [\Lambda^c]
	\label{eq:conjugation}
$
and the weak convergence topology.
For example, is the conjugation map
$
[\E_{\Lambda}] \mapsto [\E_{\Lambda^c}]
$
continuous in the weak topology?
\item
We can also ask whether the directed-completeness holds for 
other preorder relations for statistical experiments, 
for example that induced by the statistical 
morphisms.~\cite{kaniowski2013quantum,Buscemi2012}
As mentioned in Ref.~\onlinecite{Buscemi2012} (Section~9),
the equivalence by morphism can be defined for 
general probabilistic theories (GPTs)
and it may be natural to consider this problem in the GPT setting. 
\end{enumerate}

\begin{acknowledgements}
This work was supported by the National Natural Science Foundation of China 
(Grants No.~11374375 and No.~11574405).
The author would like to thank Erkka Haapasalo for helpful comments on 
the manuscript.
\end{acknowledgements}

\begin{thebibliography}{36}%
\makeatletter
\providecommand \@ifxundefined [1]{%
 \@ifx{#1\undefined}
}%
\providecommand \@ifnum [1]{%
 \ifnum #1\expandafter \@firstoftwo
 \else \expandafter \@secondoftwo
 \fi
}%
\providecommand \@ifx [1]{%
 \ifx #1\expandafter \@firstoftwo
 \else \expandafter \@secondoftwo
 \fi
}%
\providecommand \natexlab [1]{#1}%
\providecommand \enquote  [1]{``#1''}%
\providecommand \bibnamefont  [1]{#1}%
\providecommand \bibfnamefont [1]{#1}%
\providecommand \citenamefont [1]{#1}%
\providecommand \href@noop [0]{\@secondoftwo}%
\providecommand \href [0]{\begingroup \@sanitize@url \@href}%
\providecommand \@href[1]{\@@startlink{#1}\@@href}%
\providecommand \@@href[1]{\endgroup#1\@@endlink}%
\providecommand \@sanitize@url [0]{\catcode `\\12\catcode `\$12\catcode
  `\&12\catcode `\#12\catcode `\^12\catcode `\_12\catcode `\%12\relax}%
\providecommand \@@startlink[1]{}%
\providecommand \@@endlink[0]{}%
\providecommand \url  [0]{\begingroup\@sanitize@url \@url }%
\providecommand \@url [1]{\endgroup\@href {#1}{\urlprefix }}%
\providecommand \urlprefix  [0]{URL }%
\providecommand \Eprint [0]{\href }%
\providecommand \doibase [0]{http://dx.doi.org/}%
\providecommand \selectlanguage [0]{\@gobble}%
\providecommand \bibinfo  [0]{\@secondoftwo}%
\providecommand \bibfield  [0]{\@secondoftwo}%
\providecommand \translation [1]{[#1]}%
\providecommand \BibitemOpen [0]{}%
\providecommand \bibitemStop [0]{}%
\providecommand \bibitemNoStop [0]{.\EOS\space}%
\providecommand \EOS [0]{\spacefactor3000\relax}%
\providecommand \BibitemShut  [1]{\csname bibitem#1\endcsname}%
\let\auto@bib@innerbib\@empty
\bibitem [{\citenamefont {Lindblad}(1976)}]{lindblad1976}%
  \BibitemOpen
  \bibfield  {author} {\bibinfo {author} {\bibfnamefont {G.}~\bibnamefont
  {Lindblad}},\ }\bibfield  {title} {\enquote {\bibinfo {title} {On the
  generators of quantum dynamical semigroups},}\ }\href {\doibase
  10.1007/BF01608499} {\bibfield  {journal} {\bibinfo  {journal} {Comm. Math.
  Phys.}\ }\textbf {\bibinfo {volume} {48}},\ \bibinfo {pages} {119--130}
  (\bibinfo {year} {1976})}\BibitemShut {NoStop}%
\bibitem [{\citenamefont {Gorini}\ \emph {et~al.}(1978)\citenamefont {Gorini},
  \citenamefont {Frigerio}, \citenamefont {Verri}, \citenamefont
  {Kossakowski},\ and\ \citenamefont {Sudarshan}}]{Gorini1978149}%
  \BibitemOpen
  \bibfield  {author} {\bibinfo {author} {\bibfnamefont {V.}~\bibnamefont
  {Gorini}}, \bibinfo {author} {\bibfnamefont {A.}~\bibnamefont {Frigerio}},
  \bibinfo {author} {\bibfnamefont {M.}~\bibnamefont {Verri}}, \bibinfo
  {author} {\bibfnamefont {A.}~\bibnamefont {Kossakowski}}, \ and\ \bibinfo
  {author} {\bibfnamefont {E.}~\bibnamefont {Sudarshan}},\ }\bibfield  {title}
  {\enquote {\bibinfo {title} {Properties of quantum markovian master
  equations},}\ }\href {http://dx.doi.org/10.1016/0034-4877(78)90050-2}
  {\bibfield  {journal} {\bibinfo  {journal} {Rep. Math. Phys.}\ }\textbf
  {\bibinfo {volume} {13}},\ \bibinfo {pages} {149 -- 173} (\bibinfo {year}
  {1978})}\BibitemShut {NoStop}%
\bibitem [{\citenamefont {Jen{\v{c}}ov{\'a}}\ and\ \citenamefont
  {Petz}(2006)}]{jencovapetz2006}%
  \BibitemOpen
  \bibfield  {author} {\bibinfo {author} {\bibfnamefont {A.}~\bibnamefont
  {Jen{\v{c}}ov{\'a}}}\ and\ \bibinfo {author} {\bibfnamefont {D.}~\bibnamefont
  {Petz}},\ }\bibfield  {title} {\enquote {\bibinfo {title} {Sufficiency in
  quantum statistical inference},}\ }\href {\doibase 10.1007/s00220-005-1510-7}
  {\bibfield  {journal} {\bibinfo  {journal} {Commun. Math. Phys.}\ }\textbf
  {\bibinfo {volume} {263}},\ \bibinfo {pages} {259--276} (\bibinfo {year}
  {2006})}\BibitemShut {NoStop}%
\bibitem [{\citenamefont {Gu{\c{t}}{\u{a}}}\ and\ \citenamefont
  {Jen{\v{c}}ov{\'a}}(2007)}]{gutajencova2007}%
  \BibitemOpen
  \bibfield  {author} {\bibinfo {author} {\bibfnamefont {M.}~\bibnamefont
  {Gu{\c{t}}{\u{a}}}}\ and\ \bibinfo {author} {\bibfnamefont {A.}~\bibnamefont
  {Jen{\v{c}}ov{\'a}}},\ }\bibfield  {title} {\enquote {\bibinfo {title}
  {{Local asymptotic normality in quantum statistics}},}\ }\href {\doibase
  10.1007/s00220-007-0340-1} {\bibfield  {journal} {\bibinfo  {journal}
  {Commun. Math. Phys.}\ }\textbf {\bibinfo {volume} {276}},\ \bibinfo {pages}
  {341--379} (\bibinfo {year} {2007})}\BibitemShut {NoStop}%
\bibitem [{\citenamefont {Kuramochi}(2017)}]{kuramochi2017minimal}%
  \BibitemOpen
  \bibfield  {author} {\bibinfo {author} {\bibfnamefont {Y.}~\bibnamefont
  {Kuramochi}},\ }\bibfield  {title} {\enquote {\bibinfo {title} {{Minimal
  sufficient statistical experiments on von Neumann algebras}},}\ }\href
  {\doibase 10.1063/1.4986247} {\bibfield  {journal} {\bibinfo  {journal} {J.
  Math. Phys.}\ }\textbf {\bibinfo {volume} {58}},\ \bibinfo {pages} {062203}
  (\bibinfo {year} {2017})}\BibitemShut {NoStop}%
\bibitem [{\citenamefont {Matsumoto}(2012)}]{matsumoto2012loss}%
  \BibitemOpen
  \bibfield  {author} {\bibinfo {author} {\bibfnamefont {K.}~\bibnamefont
  {Matsumoto}},\ }\bibfield  {title} {\enquote {\bibinfo {title} {{Loss of
  Memory and Convergence of Quantum Markov Processes}},}\ }\href@noop {}
  {\bibfield  {journal} {\bibinfo  {journal} {arXiv preprint arXiv:1212.3481}\
  } (\bibinfo {year} {2012})}\BibitemShut {NoStop}%
\bibitem [{\citenamefont
  {Kuramochi}(2018{\natexlab{a}})}]{kuramochi2018incomp}%
  \BibitemOpen
  \bibfield  {author} {\bibinfo {author} {\bibfnamefont {Y.}~\bibnamefont
  {Kuramochi}},\ }\bibfield  {title} {\enquote {\bibinfo {title} {Quantum
  incompatibility of channels with general outcome operator algebras},}\ }\href
  {\doibase 10.1063/1.5008300} {\bibfield  {journal} {\bibinfo  {journal} {J.
  Math. Phys.}\ }\textbf {\bibinfo {volume} {59}},\ \bibinfo {pages} {042203}
  (\bibinfo {year} {2018}{\natexlab{a}})}\BibitemShut {NoStop}%
\bibitem [{\citenamefont {Iwamura}(1944)}]{iwamura1944}%
  \BibitemOpen
  \bibfield  {author} {\bibinfo {author} {\bibfnamefont {T.}~\bibnamefont
  {Iwamura}},\ }\bibfield  {title} {\enquote {\bibinfo {title} {{A lemma on
  directed sets}},}\ }\href
  {http://www.math.sci.osaka-u.ac.jp/shijodanwakai/pdf/1173.pdf} {\bibfield
  {journal} {\bibinfo  {journal} {{Zenkoku Shijo Sugaku Danwakai}}\ }\textbf
  {\bibinfo {volume} {262}},\ \bibinfo {pages} {107--111} (\bibinfo {year}
  {1944})},\ \bibinfo {note} {(in Japanese)},\ \Eprint
  {http://arxiv.org/abs/http://www.math.sci.osaka-u.ac.jp/shijodanwakai/pdf/1173.pdf}
  {http://www.math.sci.osaka-u.ac.jp/shijodanwakai/pdf/1173.pdf} \BibitemShut
  {NoStop}%
\bibitem [{\citenamefont {Markowsky}(1976)}]{markowsky1976chain}%
  \BibitemOpen
  \bibfield  {author} {\bibinfo {author} {\bibfnamefont {G.}~\bibnamefont
  {Markowsky}},\ }\bibfield  {title} {\enquote {\bibinfo {title}
  {Chain-complete posets and directed sets with applications},}\ }\href@noop {}
  {\bibfield  {journal} {\bibinfo  {journal} {Algebra Univers.}\ }\textbf
  {\bibinfo {volume} {6}},\ \bibinfo {pages} {53--68} (\bibinfo {year}
  {1976})}\BibitemShut {NoStop}%
\bibitem [{\citenamefont {Takesaki}(1979)}]{takesakivol1}%
  \BibitemOpen
  \bibfield  {author} {\bibinfo {author} {\bibfnamefont {M.}~\bibnamefont
  {Takesaki}},\ }\href@noop {} {\emph {\bibinfo {title} {{Theory of Operator
  Algebras I}}}}\ (\bibinfo  {publisher} {Springer},\ \bibinfo {year}
  {1979})\BibitemShut {NoStop}%
\bibitem [{\citenamefont {Takesaki}(2003)}]{takesakivol2}%
  \BibitemOpen
  \bibfield  {author} {\bibinfo {author} {\bibfnamefont {M.}~\bibnamefont
  {Takesaki}},\ }\href@noop {} {\emph {\bibinfo {title} {{Theory of Operator
  Algebras II}}}}\ (\bibinfo  {publisher} {Springer},\ \bibinfo {year}
  {2003})\BibitemShut {NoStop}%
\bibitem [{\citenamefont {Halmos}(1960)}]{halmos1960naive}%
  \BibitemOpen
  \bibfield  {author} {\bibinfo {author} {\bibfnamefont {P.~R.}\ \bibnamefont
  {Halmos}},\ }\href@noop {} {\emph {\bibinfo {title} {Naive set theory}}}\
  (\bibinfo  {publisher} {Van Nostrand Company},\ \bibinfo {year}
  {1960})\BibitemShut {NoStop}%
\bibitem [{\citenamefont {Gierz}\ \emph {et~al.}(2003)\citenamefont {Gierz},
  \citenamefont {Hofmann}, \citenamefont {Keimel}, \citenamefont {Lawson},
  \citenamefont {Mislove},\ and\ \citenamefont {Scott}}]{gierz2003continuous}%
  \BibitemOpen
  \bibfield  {author} {\bibinfo {author} {\bibfnamefont {G.}~\bibnamefont
  {Gierz}}, \bibinfo {author} {\bibfnamefont {K.~H.}\ \bibnamefont {Hofmann}},
  \bibinfo {author} {\bibfnamefont {K.}~\bibnamefont {Keimel}}, \bibinfo
  {author} {\bibfnamefont {J.~D.}\ \bibnamefont {Lawson}}, \bibinfo {author}
  {\bibfnamefont {M.}~\bibnamefont {Mislove}}, \ and\ \bibinfo {author}
  {\bibfnamefont {D.~S.}\ \bibnamefont {Scott}},\ }\href@noop {} {\emph
  {\bibinfo {title} {Continuous lattices and domains}}},\ Vol.~\bibinfo
  {volume} {93}\ (\bibinfo  {publisher} {Cambridge university press},\ \bibinfo
  {year} {2003})\BibitemShut {NoStop}%
\bibitem [{\citenamefont {Stinespring}(1955)}]{1955stinespring}%
  \BibitemOpen
  \bibfield  {author} {\bibinfo {author} {\bibfnamefont {W.~F.}\ \bibnamefont
  {Stinespring}},\ }\bibfield  {title} {\enquote {\bibinfo {title} {{Positive
  functions on $C^\ast$-algebras}},}\ }\href
  {http://www.jstor.org/stable/2032342} {\bibfield  {journal} {\bibinfo
  {journal} {Proc. Amer. Math. Soc.}\ }\textbf {\bibinfo {volume} {6}},\
  \bibinfo {pages} {211--216} (\bibinfo {year} {1955})}\BibitemShut {NoStop}%
\bibitem [{\citenamefont {Takeda}(1955)}]{takeda1955}%
  \BibitemOpen
  \bibfield  {author} {\bibinfo {author} {\bibfnamefont {Z.}~\bibnamefont
  {Takeda}},\ }\bibfield  {title} {\enquote {\bibinfo {title} {Inductive limit
  and infinite direct product of operator algebras},}\ }\href {\doibase
  10.2748/tmj/1178245105} {\bibfield  {journal} {\bibinfo  {journal} {Tohoku
  Math. J. (2)}\ }\textbf {\bibinfo {volume} {7}},\ \bibinfo {pages} {67--86}
  (\bibinfo {year} {1955})}\BibitemShut {NoStop}%
\bibitem [{\citenamefont {Holevo}(2007)}]{holevo2007complementary}%
  \BibitemOpen
  \bibfield  {author} {\bibinfo {author} {\bibfnamefont {A.~S.}\ \bibnamefont
  {Holevo}},\ }\bibfield  {title} {\enquote {\bibinfo {title} {Complementary
  channels and the additivity problem},}\ }\href@noop {} {\bibfield  {journal}
  {\bibinfo  {journal} {Theory Probab. Appl.}\ }\textbf {\bibinfo {volume}
  {51}},\ \bibinfo {pages} {92--100} (\bibinfo {year} {2007})}\BibitemShut
  {NoStop}%
\bibitem [{\citenamefont {King}\ \emph {et~al.}(2007)\citenamefont {King},
  \citenamefont {Matsumoto}, \citenamefont {Nathanson},\ and\ \citenamefont
  {Ruskai}}]{king2007properties}%
  \BibitemOpen
  \bibfield  {author} {\bibinfo {author} {\bibfnamefont {C.}~\bibnamefont
  {King}}, \bibinfo {author} {\bibfnamefont {K.}~\bibnamefont {Matsumoto}},
  \bibinfo {author} {\bibfnamefont {M.}~\bibnamefont {Nathanson}}, \ and\
  \bibinfo {author} {\bibfnamefont {M.~B.}\ \bibnamefont {Ruskai}},\ }\bibfield
   {title} {\enquote {\bibinfo {title} {Properties of conjugate channels with
  applications to additivity and multiplicativity},}\ }\href@noop {} {\bibfield
   {journal} {\bibinfo  {journal} {Markov Process and Related Fields}\ }\textbf
  {\bibinfo {volume} {13}},\ \bibinfo {pages} {391 -- 423} (\bibinfo {year}
  {2007})}\BibitemShut {NoStop}%
\bibitem [{\citenamefont {Heinosaari}\ and\ \citenamefont
  {Miyadera}(2017)}]{1751-8121-50-13-135302}%
  \BibitemOpen
  \bibfield  {author} {\bibinfo {author} {\bibfnamefont {T.}~\bibnamefont
  {Heinosaari}}\ and\ \bibinfo {author} {\bibfnamefont {T.}~\bibnamefont
  {Miyadera}},\ }\bibfield  {title} {\enquote {\bibinfo {title}
  {Incompatibility of quantum channels},}\ }\href
  {http://stacks.iop.org/1751-8121/50/i=13/a=135302} {\bibfield  {journal}
  {\bibinfo  {journal} {J. Phys. A: Math. Theor.}\ }\textbf {\bibinfo {volume}
  {50}},\ \bibinfo {pages} {135302} (\bibinfo {year} {2017})}\BibitemShut
  {NoStop}%
\bibitem [{\citenamefont {Wolk}(1958)}]{10.2307/2033201}%
  \BibitemOpen
  \bibfield  {author} {\bibinfo {author} {\bibfnamefont {E.~S.}\ \bibnamefont
  {Wolk}},\ }\bibfield  {title} {\enquote {\bibinfo {title} {Order-compatible
  topologies on a partially ordered set},}\ }\href
  {http://www.jstor.org/stable/2033201} {\bibfield  {journal} {\bibinfo
  {journal} {Proc. Amer. Math. Soc.}\ }\textbf {\bibinfo {volume} {9}},\
  \bibinfo {pages} {524--529} (\bibinfo {year} {1958})}\BibitemShut {NoStop}%
\bibitem [{\citenamefont {McShane}(1953)}]{mcshane1953order}%
  \BibitemOpen
  \bibfield  {author} {\bibinfo {author} {\bibfnamefont {E.~J.}\ \bibnamefont
  {McShane}},\ }\href@noop {} {\emph {\bibinfo {title} {{Order-Preserving Maps
  and Integration Processes, Annals of Mathematics Studies}}}},\ Vol.~\bibinfo
  {volume} {31}\ (\bibinfo  {publisher} {Princeton University Press},\ \bibinfo
  {year} {1953})\BibitemShut {NoStop}%
\bibitem [{\citenamefont {{\L}uczak}(2014)}]{Luczak2014}%
  \BibitemOpen
  \bibfield  {author} {\bibinfo {author} {\bibfnamefont {A.}~\bibnamefont
  {{\L}uczak}},\ }\bibfield  {title} {\enquote {\bibinfo {title} {Quantum
  sufficiency in the operator algebra framework},}\ }\href {\doibase
  10.1007/s10773-013-1747-4} {\bibfield  {journal} {\bibinfo  {journal} {Int.
  J. Theor. Phys.}\ }\textbf {\bibinfo {volume} {53}},\ \bibinfo {pages}
  {3423--3433} (\bibinfo {year} {2014})}\BibitemShut {NoStop}%
\bibitem [{\citenamefont {Lindahl}\ and\ \citenamefont
  {Maserick}(1971)}]{lindahl1971}%
  \BibitemOpen
  \bibfield  {author} {\bibinfo {author} {\bibfnamefont {R.~J.}\ \bibnamefont
  {Lindahl}}\ and\ \bibinfo {author} {\bibfnamefont {P.~H.}\ \bibnamefont
  {Maserick}},\ }\bibfield  {title} {\enquote {\bibinfo {title}
  {Positive-definite functions on involution semigroups},}\ }\href {\doibase
  10.1215/S0012-7094-71-03893-2} {\bibfield  {journal} {\bibinfo  {journal}
  {Duke Math. J.}\ }\textbf {\bibinfo {volume} {38}},\ \bibinfo {pages}
  {771--782} (\bibinfo {year} {1971})}\BibitemShut {NoStop}%
\bibitem [{\citenamefont {Kaniowski}, \citenamefont {Lubnauer},\ and\
  \citenamefont {{\L}uczak}(2013)}]{kaniowski2013quantum}%
  \BibitemOpen
  \bibfield  {author} {\bibinfo {author} {\bibfnamefont {K.}~\bibnamefont
  {Kaniowski}}, \bibinfo {author} {\bibfnamefont {K.}~\bibnamefont {Lubnauer}},
  \ and\ \bibinfo {author} {\bibfnamefont {A.}~\bibnamefont {{\L}uczak}},\
  }\bibfield  {title} {\enquote {\bibinfo {title} {{Quantum
  Blackwell--Sherman--Stein Theorem and Related Results}},}\ }\href@noop {}
  {\bibfield  {journal} {\bibinfo  {journal} {Open Systems \& Information
  Dynamics}\ }\textbf {\bibinfo {volume} {20}},\ \bibinfo {pages} {1350017}
  (\bibinfo {year} {2013})}\BibitemShut {NoStop}%
\bibitem [{\citenamefont {Buscemi}(2012)}]{Buscemi2012}%
  \BibitemOpen
  \bibfield  {author} {\bibinfo {author} {\bibfnamefont {F.}~\bibnamefont
  {Buscemi}},\ }\bibfield  {title} {\enquote {\bibinfo {title} {Comparison of
  quantum statistical models: Equivalent conditions for sufficiency},}\ }\href
  {\doibase 10.1007/s00220-012-1421-3} {\bibfield  {journal} {\bibinfo
  {journal} {Comm. Math. Phys.}\ }\textbf {\bibinfo {volume} {310}},\ \bibinfo
  {pages} {625--647} (\bibinfo {year} {2012})}\BibitemShut {NoStop}%
\bibitem [{\citenamefont
  {Kuramochi}(2018{\natexlab{b}})}]{kuramochi2018access}%
  \BibitemOpen
  \bibfield  {author} {\bibinfo {author} {\bibfnamefont {Y.}~\bibnamefont
  {Kuramochi}},\ }\bibfield  {title} {\enquote {\bibinfo {title} {{Accessible
  Information Without Disturbing Partially Known Quantum States on a von
  Neumann Algebra}},}\ }\href {\doibase 10.1007/s10773-018-3749-8} {\bibfield
  {journal} {\bibinfo  {journal} {Int. J. Theor. Phys.}\ } (\bibinfo {year}
  {2018}{\natexlab{b}}),\ 10.1007/s10773-018-3749-8}\BibitemShut {NoStop}%
\bibitem [{\citenamefont {Brown}\ and\ \citenamefont
  {Ozawa}(2008)}]{brown2008c}%
  \BibitemOpen
  \bibfield  {author} {\bibinfo {author} {\bibfnamefont {N.~P.}\ \bibnamefont
  {Brown}}\ and\ \bibinfo {author} {\bibfnamefont {N.}~\bibnamefont {Ozawa}},\
  }\href@noop {} {\emph {\bibinfo {title} {$C^\ast$-algebras and
  finite-dimensional approximations}}},\ Vol.~\bibinfo {volume} {88}\ (\bibinfo
   {publisher} {American Mathematical Soc.},\ \bibinfo {year}
  {2008})\BibitemShut {NoStop}%
\bibitem [{\citenamefont {Caves}\ \emph {et~al.}(2012)\citenamefont {Caves},
  \citenamefont {Combes}, \citenamefont {Jiang},\ and\ \citenamefont
  {Pandey}}]{PhysRevA.86.063802}%
  \BibitemOpen
  \bibfield  {author} {\bibinfo {author} {\bibfnamefont {C.~M.}\ \bibnamefont
  {Caves}}, \bibinfo {author} {\bibfnamefont {J.}~\bibnamefont {Combes}},
  \bibinfo {author} {\bibfnamefont {Z.}~\bibnamefont {Jiang}}, \ and\ \bibinfo
  {author} {\bibfnamefont {S.}~\bibnamefont {Pandey}},\ }\bibfield  {title}
  {\enquote {\bibinfo {title} {Quantum limits on phase-preserving linear
  amplifiers},}\ }\href {\doibase 10.1103/PhysRevA.86.063802} {\bibfield
  {journal} {\bibinfo  {journal} {Phys. Rev. A}\ }\textbf {\bibinfo {volume}
  {86}},\ \bibinfo {pages} {063802} (\bibinfo {year} {2012})}\BibitemShut
  {NoStop}%
\bibitem [{\citenamefont {Ueda}\ and\ \citenamefont
  {Kitagawa}(1992)}]{PhysRevLett.68.3424}%
  \BibitemOpen
  \bibfield  {author} {\bibinfo {author} {\bibfnamefont {M.}~\bibnamefont
  {Ueda}}\ and\ \bibinfo {author} {\bibfnamefont {M.}~\bibnamefont
  {Kitagawa}},\ }\bibfield  {title} {\enquote {\bibinfo {title} {Reversibility
  in quantum measurement processes},}\ }\href {\doibase
  10.1103/PhysRevLett.68.3424} {\bibfield  {journal} {\bibinfo  {journal}
  {Phys. Rev. Lett.}\ }\textbf {\bibinfo {volume} {68}},\ \bibinfo {pages}
  {3424--3427} (\bibinfo {year} {1992})}\BibitemShut {NoStop}%
\bibitem [{\citenamefont {Husimi}(1940)}]{1940264}%
  \BibitemOpen
  \bibfield  {author} {\bibinfo {author} {\bibfnamefont {K.}~\bibnamefont
  {Husimi}},\ }\bibfield  {title} {\enquote {\bibinfo {title} {Some formal
  properties of the density matrix},}\ }\href {\doibase
  10.11429/ppmsj1919.22.4_264} {\bibfield  {journal} {\bibinfo  {journal}
  {Proc. Phys.-Math. Soc. Japan. 3rd Series}\ }\textbf {\bibinfo {volume}
  {22}},\ \bibinfo {pages} {264--314} (\bibinfo {year} {1940})}\BibitemShut
  {NoStop}%
\bibitem [{\citenamefont {Mehta}\ and\ \citenamefont
  {Sudarshan}(1965)}]{PhysRev.138.B274}%
  \BibitemOpen
  \bibfield  {author} {\bibinfo {author} {\bibfnamefont {C.~L.}\ \bibnamefont
  {Mehta}}\ and\ \bibinfo {author} {\bibfnamefont {E.~C.~G.}\ \bibnamefont
  {Sudarshan}},\ }\bibfield  {title} {\enquote {\bibinfo {title} {Relation
  between quantum and semiclassical description of optical coherence},}\ }\href
  {\doibase 10.1103/PhysRev.138.B274} {\bibfield  {journal} {\bibinfo
  {journal} {Phys. Rev.}\ }\textbf {\bibinfo {volume} {138}},\ \bibinfo {pages}
  {B274--B280} (\bibinfo {year} {1965})}\BibitemShut {NoStop}%
\bibitem [{\citenamefont {Glauber}(1963)}]{PhysRev.131.2766}%
  \BibitemOpen
  \bibfield  {author} {\bibinfo {author} {\bibfnamefont {R.~J.}\ \bibnamefont
  {Glauber}},\ }\bibfield  {title} {\enquote {\bibinfo {title} {Coherent and
  incoherent states of the radiation field},}\ }\href {\doibase
  10.1103/PhysRev.131.2766} {\bibfield  {journal} {\bibinfo  {journal} {Phys.
  Rev.}\ }\textbf {\bibinfo {volume} {131}},\ \bibinfo {pages} {2766--2788}
  (\bibinfo {year} {1963})}\BibitemShut {NoStop}%
\bibitem [{\citenamefont {Bargmann}(1961)}]{bargmann1961}%
  \BibitemOpen
  \bibfield  {author} {\bibinfo {author} {\bibfnamefont {V.}~\bibnamefont
  {Bargmann}},\ }\bibfield  {title} {\enquote {\bibinfo {title} {{On a Hilbert
  space of analytic functions and an associated integral transform part I}},}\
  }\href {\doibase 10.1002/cpa.3160140303} {\bibfield  {journal} {\bibinfo
  {journal} {Comm. Pure Appl. Math.}\ }\textbf {\bibinfo {volume} {14}},\
  \bibinfo {pages} {187--214} (\bibinfo {year} {1961})}\BibitemShut {NoStop}%
\bibitem [{\citenamefont {Klauder}\ and\ \citenamefont
  {Sudarshan}(1968)}]{klauder1968fundamentals}%
  \BibitemOpen
  \bibfield  {author} {\bibinfo {author} {\bibfnamefont {J.}~\bibnamefont
  {Klauder}}\ and\ \bibinfo {author} {\bibfnamefont {G.}~\bibnamefont
  {Sudarshan}},\ }\href@noop {} {\emph {\bibinfo {title} {Fundamentals of
  quantum optics}}}\ (\bibinfo  {publisher} {Benjamin, New York},\ \bibinfo
  {year} {1968})\BibitemShut {NoStop}%
\bibitem [{\citenamefont {Beukema}(2006)}]{ISI:000236487200002}%
  \BibitemOpen
  \bibfield  {author} {\bibinfo {author} {\bibfnamefont {R.}~\bibnamefont
  {Beukema}},\ }\bibfield  {title} {\enquote {\bibinfo {title} {{Maximality of
  positive operator-valued measures}},}\ }\href {\doibase
  {10.1007/s11117-005-0017-y}} {\bibfield  {journal} {\bibinfo  {journal}
  {{POSITIVITY}}\ }\textbf {\bibinfo {volume} {{10}}},\ \bibinfo {pages}
  {{17--37}} (\bibinfo {year} {{2006}})}\BibitemShut {NoStop}%
\bibitem [{\citenamefont {Folland}(1999)}]{folland1999real}%
  \BibitemOpen
  \bibfield  {author} {\bibinfo {author} {\bibfnamefont {G.~B.}\ \bibnamefont
  {Folland}},\ }\href@noop {} {\emph {\bibinfo {title} {Real analysis: modern
  techniques and their applications, 2nd ed.}}}\ (\bibinfo  {publisher}
  {Wiley},\ \bibinfo {year} {1999})\BibitemShut {NoStop}%
\bibitem [{\citenamefont {Torgersen}(1991)}]{torgersen1991comparison}%
  \BibitemOpen
  \bibfield  {author} {\bibinfo {author} {\bibfnamefont {E.}~\bibnamefont
  {Torgersen}},\ }\href@noop {} {\emph {\bibinfo {title} {{Comparison of
  Statistical Experiments}}}}\ (\bibinfo  {publisher} {Cambridge University
  Press},\ \bibinfo {year} {1991})\BibitemShut {NoStop}%
\end{thebibliography}
%

\end{document}